\documentclass[american,aps,pra,reprint,superscriptaddress,longbibliography,onecolumn]{revtex4-2}

%%%%%%%%%%%%%%%%%%%%%%%%%%
% Load packages
\usepackage{amsmath,amssymb}

\usepackage{color}
\usepackage{graphicx}
\usepackage[american]{babel}
\usepackage[utf8]{inputenc}
\usepackage{times}
\usepackage{braket} 				%Dirac-Notation
\usepackage{amsthm}
\usepackage{amsfonts}
\usepackage{mathrsfs}  
\usepackage{booktabs}
\usepackage{pgfplots}
\pgfplotsset{compat=1.17}
\pdfminorversion=7
\usepackage{mathtools}
\usepackage{bbold}
\usepackage{titlesec}
\usepackage{bbm}

\newcommand{\ketbra}[2]{|#1\rangle\!\langle#2|}

%%%%%%%%%%%%%%%%%%%%%%%%%%
% Define color for links
\definecolor{mygrey}{gray}{0.35}
\definecolor{myblue}{rgb}{0.2,0.2,0.8}
\definecolor{myzard}{cmyk}{0,0,0.05,0}
\definecolor{mywhite}{rgb}{1,1,1}
\definecolor{myred}{rgb}{0.9,0.1,0.}
\usepackage[colorlinks=true,citecolor=myblue,linkcolor=myblue,urlcolor=myblue]{hyperref}

%%%%%%%%%%%%%%%%%%%%%%%%%%
\newtheoremstyle{customStyle1}  % name of the style to be used
{0pt}       % measure of space to leave above the theorem. E.g.: 3pt
{0pt}       % measure of space to leave below the theorem. E.g.: 3pt
{\normalfont}   % name of the font to use in the body of the theorem
{\parindent}        % measure of space to indent
{\em}  % name of head font
{. --}   	 % punctuation between head and body
{.5em}       % space after theorem head
{\thmname{#1}\thmnumber{ #2}\thmnote{ (#3)}}  % Manually specify head

%%%%%%%%%%%%%%%%%%%%%%%%%%%%%%%%%
%%%%%%%%%%%%%%%%%%%%%%%%%%%%%%%%%
%%%%%%%%%%%%%%%%%%%%%%%%%%%%%%%%%
%%%%%%%%%%%%%%%%%%%%%%%%%%%%%%%%%
%Turn on/off for PRL/arXiv, together with prl/pra in the document class
%\theoremstyle{customStyle1}
%\titleformat{\section}[runin]{\itshape}{\thesection}{1em}{}[.--]
%\titlespacing*{\section }{\parindent}{1ex}{1ex}[0pt]
%\setcounter{secnumdepth}{0}
%%%%%%%%%%%%%%%%%%%%%%%%%%%%%%%%%
%%%%%%%%%%%%%%%%%%%%%%%%%%%%%%%%%
%%%%%%%%%%%%%%%%%%%%%%%%%%%%%%%%%
%%%%%%%%%%%%%%%%%%%%%%%%%%%%%%%%%

% Define new environments
\newcounter{theorems}
\newtheorem{thm}[theorems]{Theorem}
\newtheorem{prop}[theorems]{Proposition}

\newtheorem{definition}[theorems]{Definition} %def is already defined
\newtheorem{lem}[theorems]{Lemma}

%%%%%%%%%%%%%%%%%%%%%%%%%%
\renewcommand\vec{\mathbf}

\newcommand{\p}{\vec{p}}
\newcommand{\q}{\vec{q}}

\DeclareMathOperator{\tr}{Tr}
\newcommand{\norm}[1]{\left\lVert#1\right\rVert}

\DeclareMathOperator{\TO}{TO} %thermal operations
\DeclareMathOperator{\CTO}{CTO} %closed thermal operations
\DeclareMathOperator{\GPO}{GPO} %Gibbs preserving operations
\DeclareMathOperator{\Prob}{Prob} 

%\newcommand{\comment}[1]{{\color{magenta} #1}} %{#1}

%%%%%%%%%%%%%%%%%%%%%%%%%%

\begin{document}
\title{Thermodynamic state convertibility is determined by qubit cooling and heating}
	\author{Thomas Theurer}
	\email{thomas.theurer@ucalgary.ca}
	\author{Elia Zanoni}
	\author{Carlo Maria Scandolo}
	\author{Gilad Gour}
	\affiliation{Department of Mathematics and Statistics, University of Calgary, Calgary, AB T2N 1N4, Canada}
	\affiliation{Institute for Quantum Science and Technology, University of Calgary, Calgary, AB T2N 1N4, Canada}
\begin{abstract}
	Thermodynamics plays an important role both in the foundations of physics and in technological applications. An operational perspective adopted in recent years is to formulate it as a quantum resource theory. 
	At the core of this theory is the interconversion between \textit{athermality} states, i.e., states out of thermal equilibrium. Here, we solve the question of how athermality can be used to heat and cool other quantum systems that are initially at thermal equilibrium. We then show that the convertibility between quasi-classical resources (resources that do not exhibit coherence between different energy eigenstates) is fully characterized by their ability to cool and heat qubits, i.e., by two of the most fundamental thermodynamical tasks on the simplest quantum systems.
\end{abstract}
\date{\today}
\maketitle

\section{Introduction}
The constant increase of control that experiments exert on small-scale quantum devices has led to a growing interest in heat engines operating in the quantum regime, potentially granting quantum advantages over their classical counterparts~\cite{Scovil1959, Faucheux1995, Scully2002, Baugh2005, Uzdin2015, Silva2016, Klatzow2019, Bera2021}. In contrast, quantum effects hinder the miniaturization of classical devices by introducing new sources of errors.
To apprehend this dual role that quantum mechanics plays in nanotechnology, understanding thermodynamics in the quantum limit is of high importance. 
It is also important from a foundational perspective since quantum thermodynamics has applications ranging from the small scales of biochemistry to the large scales of black hole physics~\cite{Gemmer2009, Binder2018, YungerHalpern2020, Spaventa2022}. 

An operational perspective on quantum thermodynamics that was adopted in recent years~\cite{Janzing2000, Aberg2013, Brandao2013, Skrzypczyk2014, Egloff2015, Horodecki2013, Brandao2015, Lostaglio2015, Lostaglio2015b, Goold2016, Korzekwa2016, Masanes2017, Scharlau2018, Lostaglio2019, Faist2019, Faist2021} is to study it in the framework of quantum resource theories~\cite{Coecke2016, Chitambar2019}. Such theories emerge from physically-motivated restrictions that are imposed on top of the laws of quantum mechanics.
This singles out the resource under consideration and, in the case of resource theories concerned with the value of quantum states such as the ones that we consider in the following, divides both states and channels into free and resourceful. In the case of thermodynamics, every (non-interacting) system has exactly one free state, namely its Gibbs state corresponding to a fixed background temperature. Empirically, this is the state to which the system evolves naturally and from which no work can be extracted~\cite{Uffink2001, Brown2001}, motivating the notion ``free state''. The free operations must map free states to free states, allowing for resources to be manipulated but not freely created. 
Here, we consider (the closure of) the thermal operations~\cite{Janzing2000, Brandao2013} as free, which can be implemented with unitaries that are energy-preserving and potentially act on parts of a bath at the fixed background temperature too. Any source of athermality, such as, e.g., a single qubit out of thermal equilibrium, is then considered a resource and described explicitly. This allows us to investigate thermodynamics in the quantum limit. 
Once the set of free states and operations is fixed, a resource theory studies how resources, in our case states out of thermal equilibrium, can be utilized to perform tasks of practical relevance. Whenever one resource can be converted to another using only free operations, this means that in \textit{any} application, the latter cannot be more valuable than the former: Assume for example that we can use thermal operations and the latter to perform work. Then we can perform an equal amount of work using thermal operations and the former by first converting the former to the latter (which is free) and then using the same (free) protocol. However, in principle, we might be able to perform more work using a different free protocol. The convertibility of resources is thus at the core of all resource theories since it describes their relative value and thus usefulness for applications. This also justifies the recent efforts to geometrically characterize the states that can both be reached from and converted to a given state using thermal operations~\cite{Lostaglio2018, Mazurek2018, Mazurek2019, Oliveira2022, Ende2022b, Ende2022c}.

Two of the most notable thermodynamical tasks are cooling and heating, important for (quantum) computation~\cite{Landauer1961} when initializing (qu)bits and countless other examples. In this work, we always assume that the system $A$ that we intend to cool or heat is initially in its free, i.e., thermal equilibrium, state. Since this is the state to which systems naturally evolve, this is a relevant scenario. Moreover, as expected, thermalizing systems is free in our framework, allowing us to bring all systems to their equilibrium state before we start cooling or heating them. If the system we intend to cool/heat is thus initially not in its free state, any results derived under this assumption can be seen as an upper bound on the cost of cooling/heating. Cooling $A$ can then have several meanings~\cite{Clivaz2019}. One interpretation is to bring $A$ to a Gibbs state corresponding to a lower temperature. 
Another common interpretation that is independent of a notion of temperature is to increase the overlap with $A$'s ground state, which is considered for example in algorithmic heat bath cooling~\cite{Schulman1999, Boykin2002, Fernandez2004, Baugh2005, Taranto2020}. In this work, we solve the question of how much we can cool a system initially at equilibrium when having access to a given athermality state and the closure of the thermal operation for \textit{either} interpretation, see Fig.~\ref{fig:TO}. Importantly, when cooling qubits, the two interpretations are equivalent and reduce to increasing the ground state population. Similarly, heating a qubit reduces to increasing the population of its higher energy level. 
\begin{figure}[tb!]
	\centering
	\includegraphics[width=0.4\linewidth]{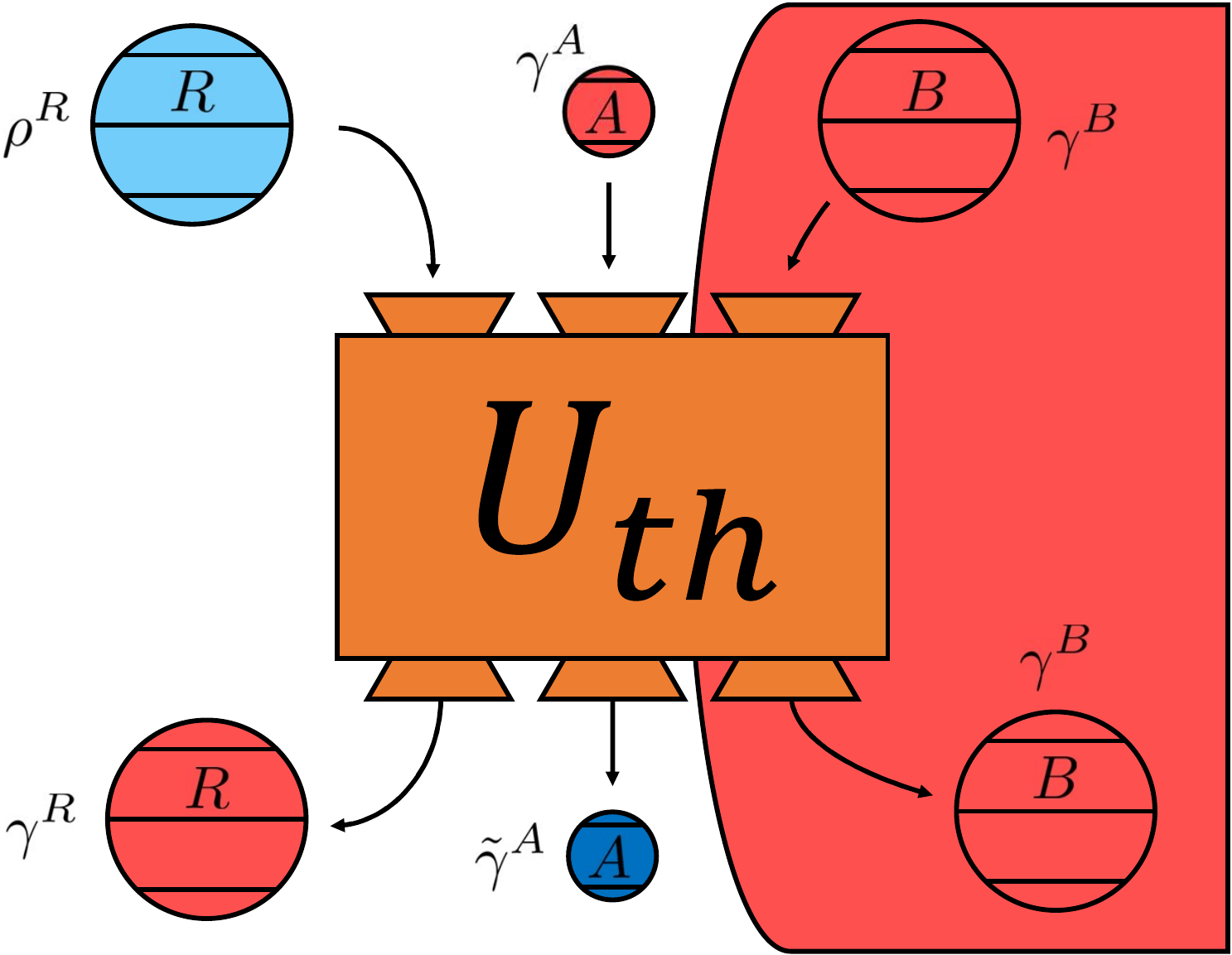}
	\caption{Consider a bath with a fixed background temperature (red). Given a system $R$ that is initially in a non-equilibrium state $\rho^R$, how much can we cool a system $A$ initially in the equilibrium state $\gamma^A$  using thermal operations implemented by a unitary $U_{th}$ that can also act on parts of the bath $B$ (at equilibrium), but needs to preserve global energy?} \label{fig:TO}
\end{figure}

As our main result, we show that the ability to cool and heat qubits fully characterizes the convertibility between quasi-classical states, i.e., states that do not exhibit coherence between different energy eigenstates, and therefore connect two of the central questions in the resource theory of quantum thermodynamics. Quasi-classical states are an important subsets of states out of thermal equilibrium, since typically superpositions between different energy eigenstates decay much quicker than the populations thermalize. For proofs and further information, see the Appendices.

\section{Notation and preliminaries}\label{sec:notation}
In this work, we restrict our considerations to finite-dimensional quantum systems corresponding to finite-dimensional Hilbert spaces. We denote quantum systems by capital Latin letters such as $A$ and their dimension by $|A|$. 
Quantum states are represented by small Greek letters, e.g., $\rho^A$, where the superscript indicates that $\rho$ is a state of system $A$. The symbol $\gamma$ is reserved for Gibbs states. 
Quantum channels, i.e., completely positive and trace-preserving maps (CPTP), are denoted by calligraphic capital Latin letters such as  $\mathcal{E}^{{B}\leftarrow A}$, where the superscript indicates that $\mathcal{E}$ maps states of system $A$ to states of system $B$. 

Real valued vectors are represented by small bold Latin letters such as $\vec{r}$, which has $r_x$ as its $x$-th entry, and we write $\Prob(n)$ for the set of probability vectors of dimension $n$. 
We frequently utilize the Ky-Fan norms defined on $\mathbb{R}^n$ as 
\begin{align}\label{eq:KyFanNorm}
	\norm{\p}_{(k)}:=\sum_{x=1}^k p_x^\downarrow\quad \forall k\in[n],
\end{align}
where $[n]$ is a shorthand notation for $\{1,2,3,\cdots,n\}$, and  $\vec{p}^\downarrow$ is the vector with the absolute values of the entries of $\p$ arranged in non-increasing order.

A technical tool with many applications and generalizations in the quantum regime~\cite{Nielsen1999, Winter2016, Zhu2017, Buscemi2017, Gour2018, Chubb2018, Horodecki2018, Mueller2018, Korzekwa2019, Buscemi2019, Ende2020, DallArno2020, Singh2021, Koukoulekidis2022} on which we rely is the concept of (relative) majorization: 
Let $\p,\vec{r}\in\Prob(n)$ and $\q,\vec{s}\in\Prob(m)$. We say that $(\p,\vec{r})$ relatively majorizes $(\q,\vec{s})$, written as $(\p,\vec{r})\succ(\q,\vec{s})$, iff there exists a $m\times n$ column stochastic matrix $E$ such that 
\begin{align}\label{eq:relMajo}
	E\vec{p}=\vec{q},\quad E\vec{r}=\vec{s}.
\end{align}
Relative majorization can also be characterized in an appealing geometrical manner: If $\pi(j)$ is a permutation such that the sequence $\{p_{\pi(j)}/r_{\pi(j)}\}_{j=1}^n$ is non-increasing, one defines for $k\in[n]$
\begin{align}\label{eq:elbows}
	(x_k^\star(\vec{p},\vec{r}),y_k^\star(\vec{p},\vec{r}))=\left(\sum_{j=1}^kp_{\pi(j)},\sum_{j=1}^kr_{\pi(j)}\right).
\end{align}
These ``elbows" and $(x_0^\star,y_0^\star)=(0,0)$ connected by straight lines form the lower boundary of the so-called testing region of $(\p,\vec{r})$. 
Now let this lower boundary be described by the function $y \mapsto\alpha_y(\vec{p},\vec{r})$, i.e., we describe the $x$-values of the lower boundary as a function of the $y$-values.
Importantly, it can then be shown that $(\p,\vec{r})\succ(\q,\vec{s})$ iff $\alpha_y(\vec{p},\vec{r})\ge \alpha_y(\vec{q},\vec{s})\ \forall y\in[0,1]$~\cite{Blackwell1953,Uhlmann1978,Ruch1978}, which in turn is equivalent to $m$ inequalities being satisfied simultaneously~\cite{Renes2016}.

\section{Thermal and closed thermal operations}\label{sec:thermalOp}
In this section, we introduce the basic building blocks of resource theories of thermodynamics as needed for this work. 
The Gibbs state of a system $A$ with Hamiltonian $H^A$ and at inverse temperature $\beta$ is defined as 
\begin{align}
	\gamma^A(\beta)=\frac{e^{-\beta H^A}}{\tr\left(e^{-\beta H^A}\right)}.
\end{align}	
In the following, we will suppress the dependency on $\beta$ if clear from the context. As mentioned in the introduction, only Gibbs states corresponding to the fixed background temperature are considered free. Discussing the resourcefulness of a given state $\rho^A$ is therefore only meaningful with respect to this reference. This is comparable to, e.g., the various resource theories of coherence~\cite{Aberg2006, Baumgratz2014, Streltsov2017}, which require us to fix an incoherent basis, or entanglement theories~\cite{Plenio2007, Horodecki2009}, that rely on a notion of spatial separation. Throughout this work, whenever we have a state $\rho^A$, we thus write the \textit{athermality state} $(\rho^A,\gamma^A)$ to make it clear that we consider it with respect to the free Gibbs state $\gamma^A$. 
From a physics perspective, it is meaningful to consider only non-zero temperatures and finite energies, which we will do in the following. This implies that all Gibbs states are full rank.

To build a resource theory, in addition to the free states, we need free operations. One possible set of free operations are the thermal operations (TO), see Refs.~\cite{Janzing2000, Brandao2013}.
\begin{definition}\label{def:TO}
	Let $A, A'$ be fixed physical systems with free Gibbs states $\gamma^A, \gamma^{A'}$ and $B$, $B'$ arbitrary (finite-dimensional) physical systems with free Gibbs states $\gamma^{B}, \gamma^{B'}$  such that $AB=A'B'$ and $\gamma^{AB}=\gamma^A\otimes \gamma^B= \gamma^{A'}\otimes \gamma^{B'}$. The set of CPTP maps of the form 
	\begin{align}
		\mathcal{E}^{{A'}\leftarrow A}(\rho^A)=\tr_{B'}\left[U^{AB}\left(\rho^A\otimes\gamma^B\right)U^{\dagger AB}\right],
	\end{align}
	where $U^{AB}$ is a unitary that commutes with $\gamma^{AB}$, forms the set of thermal operations (from system $A$ with Gibbs state $\gamma^A$ to system $A'$ with Gibbs state $\gamma^{A'}$). 
\end{definition}
Since the definition of the thermal operations is also relative to two Gibbs states, we want to make this apparent too. Note however that the Gibbs state of the target system is determined by the Gibbs state of the initial system and the operation since 
\begin{align}\label{eq:TOpresGibbs}
	\mathcal{E}^{{A'}\leftarrow A}(\gamma^A)=&\tr_{B'}\left[U^{AB}\left(\gamma^A\otimes\gamma^B\right)U^{\dagger AB}\right] =\gamma^{A'}.
\end{align}
We therefore write $(\mathcal{E}^{{A'}\leftarrow A},\gamma^A)$ and denote the set of thermal operations from system $A$ with Gibbs state $\gamma^A$ to system $A'$ with Gibbs state $\gamma^{A'}$ as TO$(\gamma^{A'}\leftarrow\gamma^{A})$. 	
Note that in the above definition, we required that the Gibbs states of the considered composed systems are of product form, i.e., $\gamma^{AB}=\gamma^A\otimes \gamma^B= \gamma^{A'}\otimes \gamma^{B'}$. This implies that the subsystems are non-interacting, which ensures that they possess a well-defined thermal equilibrium state. If we had interaction terms, a thermal equilibrium state would only be defined for the combined system, rendering the concept of a free state of a subsystem meaningless~\cite{Chiribella2017}. A thermal operations thus consists of three steps: We bring our initial system in contact with another non-interacting system at equilibrium, perform a global energy preserving unitary, and discard a non-interacting subsystem.

Note that in Def.~\ref{def:TO}, the dimension of system $B$ is in general unbounded (although finite due to our restriction that we only consider finite-dimensional systems). In principle, it is thus possible that a given state $(\rho^A,\gamma^A)$ cannot be converted to $(\sigma^{A'},\gamma^{A'})$ via thermal operations, whilst one can approximate $(\sigma^{A'},\gamma^{A'})$ arbitrarily well using thermal operations and $(\rho^A,\gamma^A)$ (see Ref.~\cite[p.4 bottom]{Ende2022} for an explicit example). However, from a practical perspective, it is impossible to distinguish $(\sigma^{A'},\gamma^{A'})$ from an arbitrarily close approximation, and the same holds for channels. 
It is therefore meaningful to consider the topological closure of TO$(\gamma^{A'}\leftarrow\gamma^{A})$, which we denote by CTO$(\gamma^{A'}\leftarrow\gamma^{A})$, see also Ref.~\cite[Def.~7]{Janzing2000}, Refs.~\cite{Ende2022,Gour2022}, and the App.~\ref{app:NotRem} and App.~\ref{app:PropCTO}. We will call this the set of closed thermal operations ($\CTO$). Whilst it is not always stated explicitly, in the literature, CTO is in fact the common choice for the free operations. This includes, e.g., Refs.~\cite{Horodecki2013,Brandao2015}. To avoid potential confusion, we decided to state explicitly that we use CTO.

So far, we have discussed two slightly different sets of free operations, namely $\TO$ and $\CTO$, both with respect to the same free states. 
A channel that is contained in both sets of free operations~\cite[Lem.~II.2]{Gour2022}  is the twirling or pinching channel: expressing $\gamma$ as
\begin{align}
	\gamma=\sum_{x=1}^{m}a_x \Pi_x
\end{align}
where $\{a_x\}_{x=1}^m$ are distinct eigenvalues, and $\{\Pi_x\}_{x=1}^m$ its spectral projectors, the pinching channel (with respect to $\gamma$) is given by 
\begin{align}
	\mathcal{P}_{\gamma}(\rho)=\sum_{x=1}^m\Pi_x\rho\Pi_x.
\end{align}
It plays an important role when we talk about \textit{quasi-classical} states, i.e., athermality states $(\rho,\gamma)$ with $[\rho,\gamma]=0$ since such states are invariant under the application of the pinching channel. Due to the commutation condition, quasi-classical states are fully characterized by a corresponding pair of vectors $\vec{r},\vec{g}$ that contain the (consistently ordered) eigenvalues of $\rho,\gamma$. To conclude this section, we note that $(\rho^R,\gamma^R) \xrightarrow{\CTO} (\sigma^S,\gamma^S)$ is used to express that $(\rho^R,\gamma^R)$ can be converted to $(\sigma^S,\gamma^S)$ using closed thermal operations, i.e., that there exists an $\mathcal{E}\in\CTO\left(\gamma^S\leftarrow\gamma^R\right)$ with $\mathcal{E}(\rho^R)=\sigma^S$.

\section{Cooling and heating}
In the following, we investigate the cooling and heating of a system $A$ initially in its free state and with Hamiltonian $H^A$ which is not completely degenerate, since that would render the problem trivial. We begin with the first interpretation of cooling, i.e., we consider transformations of the form $(\rho^R\otimes\gamma^A,\gamma^R\otimes\gamma^A)\xrightarrow{\CTO}(\tilde{\gamma}^A,\gamma^A)$ where $(\rho^R,\gamma^R)$ is a fixed initial athermality state and $\tilde{\gamma}^A$ a Gibbs state corresponding to a temperature $\tilde{\beta}$ potentially different from the background temperature $\beta$ that defines $\gamma^A$. Since preparing a system in its equilibrium state is free, we can write this more compactly as
\begin{align}
	(\rho^R,\gamma^R)\xrightarrow{\CTO}(\tilde{\gamma}^A,\gamma^A).
\end{align} 

\begin{thm}\label{thm:CoolingHeating}
	Consider an initial resource $(\rho^R,\gamma^R)$ and let $\vec{r}^R, \vec{g}^R, \tilde{\vec{g}}^A(\tilde{\beta}),\vec{g}^A$ denote the probability vectors corresponding to $\mathcal{P}_{\gamma^R}(\rho^R), \gamma^R,\tilde{\gamma}^A(\tilde{\beta}),\gamma^A$, respectively, where the components of $\tilde{\vec{g}}^A(\tilde{\beta})$ are assumed to be ordered such that $\tilde{g}_{j}(\tilde{\beta}_k)$ corresponds to the $j$-th smallest eigenenergy of system $A$. Let further be $\alpha$ as introduced below Eq.~\eqref{eq:elbows}. 	The maximal $\tilde{\beta}$ to which we can cool system $A$ using $\CTO$s is then given by 
	\begin{align}\label{eq:minT}
		\tilde{\beta}_{\max}=\min_{k\in[|A|-1]}\left\{\tilde{\beta}_k: \norm{\tilde{\vec{g}}^A(\tilde{\beta}_k)}_{(k)}=\alpha_{ \norm{\vec{g}^A}_{(k)}}(\vec{r}^R,\vec{g}^R)\right\},
	\end{align}
	and the minimal (potentially negative) $\tilde{\beta}$ to which we can heat system $A$ by
	\begin{align}
		\tilde{\beta}_{\min}=\max_{k\in[|A|-1]}\!&\Biggl\{\!  \tilde{\beta}_k:\!  \sum_{j=|A|-k+1}^{|A|}\! \tilde{g}^A_{j}(\tilde{\beta}_k)\! =\! \alpha_{1- \norm{\vec{g}^A}_{(|A|-k)}}(\vec{r}^R,\vec{g}^R) \Biggr\}.
	\end{align}
	For qubits with energy gap $E$, this simplifies to
	\begin{align}\label{eq:TminQubit}
		\tilde{\beta}_{\max}=&\frac{1}{E}\ln\left(\frac{\alpha_{ g_1^A}(\vec{r}^R,\vec{g}^R)}{1-\alpha_{ g_1^A}(\vec{r}^R,\vec{g}^R)}\right),  \\
		\tilde{\beta}_{\min} =&\frac{1}{E} \ln\left(\frac{1-\alpha_{g_2^A}(\vec{r}^R,\vec{g}^R)}{\alpha_{g_2^A}(\vec{r}^R,\vec{g}^R)}\right).
	\end{align}
\end{thm}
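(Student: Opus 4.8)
The plan is to turn the cooling/heating question into a relative-majorization condition and then exploit that the two ``denominator'' states $\tilde\gamma^A(\tilde\beta)$ and $\gamma^A$ are Gibbs states of one and the same Hamiltonian $H^A$. First I would invoke the characterization of $\CTO$-convertibility between quasi-classical athermality states by relative majorization (since the target $(\tilde\gamma^A,\gamma^A)$ is quasi-classical and pinching is a $\CTO$, it costs nothing to pinch $\rho^R$ first, which is exactly why $\vec{r}^R$ is taken to be the probability vector of $\mathcal{P}_{\gamma^R}(\rho^R)$): the transformation $(\rho^R,\gamma^R)\xrightarrow{\CTO}(\tilde\gamma^A(\tilde\beta),\gamma^A)$ is possible iff $(\vec{r}^R,\vec{g}^R)\succ(\tilde{\vec{g}}^A(\tilde\beta),\vec{g}^A)$. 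By the geometric criterion recalled below Eq.~\eqref{eq:elbows}, this is equivalent to $\alpha_y(\vec{r}^R,\vec{g}^R)\ge\alpha_y(\tilde{\vec{g}}^A(\tilde\beta),\vec{g}^A)$ for every $y\in[0,1]$.

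Next I would compute the testing region of $(\tilde{\vec{g}}^A(\tilde\beta),\vec{g}^A)$ explicitly. Since both are Gibbs states of $H^A$, the ratio of components is $\tilde{g}^A_j(\tilde\beta)/g^A_j\propto e^{-(\tilde\beta-\beta)\epsilon_j}$ with $\epsilon_j$ the $j$-th smallest eigenenergy, hence monotone in $\epsilon_j$. For cooling ($\tilde\beta\ge\beta$) it is non-increasing in $\epsilon_j$, so the permutation $\pi$ in Eq.~\eqref{eq:elbows} orders the levels by increasing energy and the elbows are exactly $\bigl(\sum_{j=1}^k\tilde{g}^A_j(\tilde\beta),\sum_{j=1}^k g^A_j\bigr)=\bigl(\norm{\tilde{\vec{g}}^A(\tilde\beta)}_{(k)},\norm{\vec{g}^A}_{(k)}\bigr)$, using that a positive-temperature Gibbs vector has its largest entries at the lowest energies; for heating ($\tilde\beta\le\beta$) the ratio is non-decreasing in $\epsilon_j$, so the elbows are $\bigl(\sum_{j=|A|-k+1}^{|A|}\tilde{g}^A_j(\tilde\beta),\,1-\norm{\vec{g}^A}_{(|A|-k)}\bigr)$.

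I would then reduce the continuum of inequalities to finitely many. The function $y\mapsto\alpha_y(\tilde{\vec{g}}^A(\tilde\beta),\vec{g}^A)$ is concave and piecewise linear (its slopes $\tilde{g}^A_{\pi(j)}/g^A_{\pi(j)}$ are non-increasing) with breakpoints precisely at the $y$-coordinates of the elbows above and endpoint values $\alpha_0=0$, $\alpha_1=1$, while $y\mapsto\alpha_y(\vec{r}^R,\vec{g}^R)$ is concave with the same endpoint values; on each interval between consecutive breakpoints the former is an affine chord and the latter lies above its own chord, so the inequality holds for all $y$ iff it holds at the breakpoints. This gives the cooling conditions $\alpha_{\norm{\vec{g}^A}_{(k)}}(\vec{r}^R,\vec{g}^R)\ge\sum_{j=1}^k\tilde{g}^A_j(\tilde\beta)$ and the heating conditions $\alpha_{1-\norm{\vec{g}^A}_{(|A|-k)}}(\vec{r}^R,\vec{g}^R)\ge\sum_{j=|A|-k+1}^{|A|}\tilde{g}^A_j(\tilde\beta)$, each for $k\in[|A|-1]$. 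Writing $Z(\tilde\beta)=\tr\,e^{-\tilde\beta H^A}$, a direct computation shows $\tfrac{d}{d\tilde\beta}\sum_{j=1}^k\tilde{g}^A_j(\tilde\beta)=Z(\tilde\beta)^{-2}\sum_{j\le k}\sum_{i>k}e^{-\tilde\beta(\epsilon_j+\epsilon_i)}(\epsilon_i-\epsilon_j)$, which is strictly positive because $H^A$ is not completely degenerate; symmetrically $\sum_{j=|A|-k+1}^{|A|}\tilde{g}^A_j(\tilde\beta)$ is strictly decreasing in $\tilde\beta$. Hence the $k$-th cooling condition is equivalent to $\tilde\beta\le\tilde\beta_k$ with $\tilde\beta_k$ the unique root of the corresponding equality (and $\tilde\beta_k:=+\infty$ when the right-hand side exceeds $\sup_{\tilde\beta}\sum_{j=1}^k\tilde{g}^A_j$); the feasible temperatures form an interval whose upper endpoint is $\min_k\tilde\beta_k$ (which is $\ge\beta$ since at $\tilde\beta=\beta$ every constraint reads $\alpha_{\norm{\vec{g}^A}_{(k)}}(\vec{r}^R,\vec{g}^R)\ge\norm{\vec{g}^A}_{(k)}$, true because thermalization is free) and is attained as all constraints are non-strict; this is Eq.~\eqref{eq:minT}, and the heating case follows with $\max_k$. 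For a qubit only $k=1$ survives: with ground energy $0$ and excited energy $E$, solving $\tilde{g}^A_1(\tilde\beta)=(1+e^{-\tilde\beta E})^{-1}=\alpha_{g_1^A}(\vec{r}^R,\vec{g}^R)$ and $\tilde{g}^A_2(\tilde\beta)=e^{-\tilde\beta E}(1+e^{-\tilde\beta E})^{-1}=\alpha_{g_2^A}(\vec{r}^R,\vec{g}^R)$ for $\tilde\beta$ yields Eqs.~\eqref{eq:TminQubit}.

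The main obstacle I anticipate is the reduction in the third step: rigorously passing from the continuum of relative-majorization inequalities to the finitely many conditions at the elbows via concavity and piecewise linearity. Closely tied to it is making the ordering bookkeeping of the second step watertight when $H^A$ has degeneracies and, in the heating regime, when $\tilde\beta$ becomes negative (population inversion), where the identification of Ky-Fan-norm sums with ``bottom-$k$'' or ``top-$k$'' energy sums has to be checked with care.
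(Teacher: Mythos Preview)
Your proposal is correct and follows essentially the same route as the paper's proof: reduce to relative majorization via pinching and the $\CTO$/$\GPO$ equivalence on quasi-classical states, compute the elbows of $(\tilde{\vec{g}}^A(\tilde\beta),\vec{g}^A)$ from the monotone ratio $e^{-(\tilde\beta-\beta)\epsilon_j}$, reduce to finitely many inequalities at those elbows by convexity/concavity of the testing-region boundary, and use monotonicity of the partial Gibbs sums in $\tilde\beta$. Your explicit derivative computation and your chord-versus-concave justification for the reduction to elbows are slightly more detailed than the paper's, but the logic is the same.
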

The above Theorem provides closed-form expressions for the lowest and highest $\tilde{\beta}$ to which we can heat or cool a qubit, respectively, since $\alpha_y(\vec{r}^R,\vec{g}^R)$ is obtained from interpolating between the elbows corresponding to $(\vec{r}^R,\vec{g}^R)$ (see the proof in App.~\ref{app:Proofs} for more information). For general target systems, the equations defining the $\tilde{\beta}_k$ both for cooling and heating need to be solved numerically. This is however straightforward, since, e.g., $||\tilde{\vec{g}}^A(\tilde{\beta}_k)||_{(k)}$ is a monotonic function of $\tilde{\beta}_k$ (it is just the probability to find the system in one of its $k$ lowest energy eigenstates), and the right-hand sides of the equations are independent of $\tilde{\beta}_k$. We also allowed for negative $\tilde{\beta}$, which corresponds to population inversions~\cite{Svelto1998,Kardar2007}. This will be of use later. As apparent form the proof of the Theorem, every $\tilde{\beta}$ between $\tilde{\beta}_{\min}$ and $\tilde{\beta}_{\max}$ is reachable too. Interestingly, in Thm.~\ref{thm:CoolingHeating}, only $\mathcal{P}_{\gamma^R}(\rho^R)$ appears, but not $\rho^R$ itself. This implies that for cooling and heating, $(\mathcal{P}_{\gamma^R}(\rho^R),\gamma^R)$ is exactly as useful as $(\rho^R,\gamma^R$). In other words, coherences between different energy eigenstates are irrelevant. The reason for this is that thermal operations are covariant~\cite{Lostaglio2015}, see Lem.~10 in the Appendix for more information. A consequence is that a state $(\rho^R,\gamma^R)\ne (\gamma^R,\gamma^R)$ but $(\mathcal{P}_{\gamma^R}(\rho^R),\gamma^R)= (\gamma^R,\gamma^R)$ is non-free, but useless for cooling or heating. Note that Ref.~\cite{Janzing2000} also treats the cooling and heating of qubits in Thms.~3 and 7, but does not derive a closed form expression for the single-shot case.

Briefly touching on the second interpretation of cooling, we maximize the overlap with the ground state. The maximal overlap $O_{\max}$ with a potentially degenerate ground state that we can achieve given access to $\CTO$, an initial resource $(\rho^R,\gamma^R)$, and Gibbs state $\gamma^A$, i.e., 
\begin{align}\label{eq:OmaxDef}
	O_{\max}(\rho^R,\gamma^R,\gamma^A)
	=&\max\{\tr\left[\Pi^A\tau^A\right]: (\rho^R,\gamma^R)\xrightarrow{\CTO} (\tau^A,\gamma^A) \},
\end{align}
where $\Pi^A$ is the projector onto the (degenerate) ground state, is given in the following Proposition.
\begin{prop}\label{prop:altCooling}
	With $\vec{r}^R, \vec{g}^R,\vec{g}^A$ denoting the probability vectors corresponding to $\mathcal{P}_{\gamma^R}(\rho^R), \gamma^R,\gamma^A$, respectively,
	\begin{align}
		O_{\max}(\rho^R,\gamma^R,\gamma^A)=\alpha_{\tr[\Pi^A\gamma^A]}(\vec{r}^R,\vec{g}^R).
	\end{align}
\end{prop}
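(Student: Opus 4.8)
\emph{Proof sketch.} The plan is to reduce the maximization in \eqref{eq:OmaxDef} to a relative-majorization problem over classical probability vectors and then settle that problem with the geometry of the testing region. First I would note that, since $\Pi^A$ is a spectral projector of $H^A$ it is also a spectral projector of $\gamma^A$, so $\mathcal{P}_{\gamma^A}$ fixes both $\Pi^A$ and $\gamma^A$ and hence $\tr[\Pi^A\tau^A]=\tr[\Pi^A\mathcal{P}_{\gamma^A}(\tau^A)]$ for every state $\tau^A$; as $\mathcal{P}_{\gamma^A}\in\CTO$, the maximization may be restricted to quasi-classical $\tau^A$ without changing the objective. Next, by covariance of closed thermal operations, i.e.\ $\mathcal{E}\circ\mathcal{P}_{\gamma^R}=\mathcal{P}_{\gamma^A}\circ\mathcal{E}$ for every $\mathcal{E}\in\CTO(\gamma^A\leftarrow\gamma^R)$, a quasi-classical $\tau^A$ is reachable from $(\rho^R,\gamma^R)$ precisely when it is reachable from $(\mathcal{P}_{\gamma^R}(\rho^R),\gamma^R)$ (one direction precomposes with $\mathcal{P}_{\gamma^R}\in\CTO$, the other uses the covariance identity). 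Writing $G\subseteq[|A|]$ for the indices of the ground-energy eigenstates, so that $\sum_{j\in G}g^A_j=\tr[\Pi^A\gamma^A]$ and $\sum_{j\in G}t_j=\tr[\Pi^A\tau^A]$ for a quasi-classical $\tau^A$ with population vector $\vec{t}$ in an eigenbasis of $H^A$, and using that $\CTO$-convertibility between quasi-classical athermality states is equivalent to relative majorization of the corresponding vector pairs, the quantity $O_{\max}(\rho^R,\gamma^R,\gamma^A)$ becomes the maximum of $\sum_{j\in G}t_j$ over all $\vec{t}\in\Prob(|A|)$ with $(\vec{r}^R,\vec{g}^R)\succ(\vec{t},\vec{g}^A)$.

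For the upper bound I would argue as follows. Let $\vec{t}$ be feasible, witnessed by a column-stochastic matrix $E$ with $E\vec{r}^R=\vec{t}$, $E\vec{g}^R=\vec{g}^A$ (cf.\ \eqref{eq:relMajo}). Define $\vec{f}\in[0,1]^{|R|}$ by $f_i:=\sum_{j\in G}E_{ji}$ (in $[0,1]$ because each column of $E$ sums to $1$); then $\vec{f}\cdot\vec{r}^R=\sum_{j\in G}t_j$ and $\vec{f}\cdot\vec{g}^R=\sum_{j\in G}g^A_j=\tr[\Pi^A\gamma^A]$. Since $\alpha_y(\vec{r}^R,\vec{g}^R)$, as defined through \eqref{eq:elbows}, equals $\max\{\vec{f}\cdot\vec{r}^R:\vec{f}\in[0,1]^{|R|},\ \vec{f}\cdot\vec{g}^R=y\}$ (the greedy/LP reading of the elbows), this yields $\sum_{j\in G}t_j\le\alpha_{\tr[\Pi^A\gamma^A]}(\vec{r}^R,\vec{g}^R)$, hence $O_{\max}\le\alpha_{\tr[\Pi^A\gamma^A]}(\vec{r}^R,\vec{g}^R)$.

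For achievability I would exhibit a matching $\vec{t}$. Put $y_0=\tr[\Pi^A\gamma^A]$ and $x_0=\alpha_{y_0}(\vec{r}^R,\vec{g}^R)$, and note $0<y_0<1$ (since $\gamma^A$ is full rank and $H^A$ is not completely degenerate) and $x_0\ge y_0$ (the lower boundary lies on or above the diagonal). Let $\vec{t}$ have $t_j=(x_0/y_0)\,g^A_j$ for $j\in G$ and $t_j=\bigl((1-x_0)/(1-y_0)\bigr)g^A_j$ for $j\notin G$; then $\vec{t}\in\Prob(|A|)$, the associated energy-diagonal state $\tau^A$ is quasi-classical, and $\sum_{j\in G}t_j=x_0$. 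The ratios $t_j/g^A_j$ take only the two values $x_0/y_0\ge1\ge(1-x_0)/(1-y_0)$, so $y\mapsto\alpha_y(\vec{t},\vec{g}^A)$ is exactly the polygonal line through $(0,0)$, $(x_0,y_0)$, $(1,1)$. Now $y\mapsto\alpha_y(\vec{r}^R,\vec{g}^R)$ is concave on $[0,1]$ (consecutive segments of its lower boundary have non-increasing slope), vanishes at $y=0$, equals $1$ at $y=1$, and passes through $(x_0,y_0)$; being concave it therefore lies on or above that polygonal line for all $y\in[0,1]$, which by the characterization recalled below \eqref{eq:elbows} is exactly $(\vec{r}^R,\vec{g}^R)\succ(\vec{t},\vec{g}^A)$. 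Thus $(\mathcal{P}_{\gamma^R}(\rho^R),\gamma^R)\xrightarrow{\CTO}(\tau^A,\gamma^A)$, and precomposing with $\mathcal{P}_{\gamma^R}$ gives $(\rho^R,\gamma^R)\xrightarrow{\CTO}(\tau^A,\gamma^A)$ with $\tr[\Pi^A\tau^A]=x_0$, so $O_{\max}\ge\alpha_{\tr[\Pi^A\gamma^A]}(\vec{r}^R,\vec{g}^R)$, completing the proof.

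The step I expect to be the main obstacle is the reduction in the first paragraph: one must justify carefully that pinching together with covariance lets us pass to the quasi-classical sector without loss of optimality, and that $\CTO$-convertibility of quasi-classical states is exactly relative majorization of the corresponding vector pairs---here it is essential that the free operations form the closure $\CTO$ rather than $\TO$. After that reduction the remaining steps are elementary convex geometry of the testing region.
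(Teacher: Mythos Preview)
Your argument is correct and follows the same overall architecture as the paper: reduce to quasi-classical inputs and outputs via pinching and covariance (the paper's Lem.~\ref{lem:targetQuasiClassical}), translate $\CTO$-reachability into relative majorization, and then analyze the testing region. The achievability construction is also essentially identical: your two-valued target $\vec{t}$ with ratios $x_0/y_0$ on $G$ and $(1-x_0)/(1-y_0)$ off $G$ is exactly the paper's canonical $\tilde{\vec{t}}^A(t_1^A)$, and both proofs conclude by observing that its lower boundary is the two-segment line $(0,0)\to(x_0,y_0)\to(1,1)$, which lies inside the testing region of $(\vec{r}^R,\vec{g}^R)$ by concavity. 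Where you differ is in the upper bound: the paper first symmetrizes the ground-space populations (random unitaries on $\mathrm{supp}\,\Pi^A$) and then pushes the remaining populations to be proportional to $\vec{g}^A$ via an explicit stochastic matrix, so that the target is forced into the canonical two-segment form before reading off the $k=d$ majorization inequality. You instead bypass this preprocessing entirely with a one-line LP/test argument, using that $\alpha_y(\vec r,\vec g)=\max\{\vec f\cdot\vec r:\vec f\in[0,1]^{|R|},\ \vec f\cdot\vec g=y\}$ and taking $f_i=\sum_{j\in G}E_{ji}$ from any witnessing stochastic $E$. This is cleaner and more direct; the only cost is that the LP characterization of $\alpha_y$ is invoked rather than derived, whereas the paper works purely with the elbow description introduced around Eq.~\eqref{eq:elbows}. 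Both routes are sound.
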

Note that this Proposition provides a closed-form expression for arbitrary dimensions of $A$ and not only for qubits. Just as in Thm.~\ref{thm:CoolingHeating}, coherences between different energy eigenstates are irrelevant. Of course, it is also possible to interpret heating as maximizing the overlap with the largest energy eigenstate, but this seems to be of less technological relevance.  It is straightforward to see (and discussed in the Appendix after the proof of Prop.~3) that for qubits, the two interpretations of cooling and heating coincide. For ease of presentation, we thus use the notation corresponding to the first interpretation from here on.

\section{State transformations from cooling and heating}
Let $\tilde{\beta}_{\max}(\rho,\gamma;\beta,E)$ and $\tilde{\beta}_{\min}(\rho,\gamma;\beta,E)$ be the extremal $\tilde{\beta}$ to which we can cool respectively heat a qubit with energy gap $E$ at background inverse temperature $\beta$ given the resource $(\rho,\gamma)$ and access to $\CTO$. According to Thm.~\ref{thm:CoolingHeating}, these quantities have a closed-form expression which the families of functionals
\begin{align}\label{eq:Monotones}
	C_\beta^E(\rho,\gamma):=&\tilde{\beta}_{\max}(\rho,\gamma;\beta,E)-\beta, \nonumber \\
	H_\beta^E(\rho,\gamma):=&\beta-\tilde{\beta}_{\min}(\rho,\gamma;\beta,E),
\end{align}
inherit together with the natural interpretation that they capture the ability of $(\rho,\gamma)$ to cool and heat qubits. In addition, they are resource monotones, i.e., monotonic under $\CTO$, in the sense that $(\rho^R,\gamma^R) \xrightarrow{\CTO} (\sigma^S,\gamma^S)$ implies $C_\beta^E(\rho^R,\gamma^R)\ge C_\beta^E(\sigma^S,\gamma^S)$, see above the proof of Thm.~4 in the Appendix for a discussion and further properties. 
These monotones completely characterize conversions between quasi-classical states. 

\begin{thm}\label{thm:main}
	Let $(\sigma^S,\gamma^S)$ be quasi-classical. The following  statements are equivalent
	\begin{enumerate}
		\item $(\rho^R,\gamma^R) \xrightarrow{\CTO} (\sigma^S,\gamma^S)$.
		\item For any fixed $\beta>0$ and for all $E \in(0,\infty)$, it holds that
		\begin{align} \label{eq:thmConv1}
			&C_\beta^E(\rho^R,\gamma^R)\ge C_\beta^E(\sigma^S,\gamma^S),
			\nonumber \\
			&H_\beta^E(\rho^R,\gamma^R)\ge H_\beta^E(\sigma^S,\gamma^S).
		\end{align}
	\end{enumerate}
\end{thm}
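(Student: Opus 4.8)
The plan is to prove $(1)\Rightarrow(2)$ directly from monotonicity and to reduce $(2)\Rightarrow(1)$ to the geometric criterion for relative majorization recalled in Sec.~\ref{sec:notation}. For $(1)\Rightarrow(2)$: if $\mathcal F\in\CTO(\gamma^S\leftarrow\gamma^R)$ realizes $(\rho^R,\gamma^R)\xrightarrow{\CTO}(\sigma^S,\gamma^S)$, then any $\CTO$ protocol cooling (resp.\ heating) a qubit from $(\sigma^S,\gamma^S)$ can be precomposed with $\mathcal F$; since $\CTO$ is closed under composition, the composite cools (resp.\ heats) the same qubit at least as well starting from $(\rho^R,\gamma^R)$, so $\tilde\beta_{\max}(\rho^R,\gamma^R;\beta,E)\ge\tilde\beta_{\max}(\sigma^S,\gamma^S;\beta,E)$ and $\tilde\beta_{\min}(\rho^R,\gamma^R;\beta,E)\le\tilde\beta_{\min}(\sigma^S,\gamma^S;\beta,E)$ for all $E$, which is exactly \eqref{eq:thmConv1} (the monotonicity of $C_\beta^E,H_\beta^E$ announced below \eqref{eq:Monotones}).

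For $(2)\Rightarrow(1)$ I would first strip the coherences of $\rho^R$. By covariance of thermal operations (Lem.~10), $\mathcal E\circ\mathcal P_{\gamma^R}=\mathcal P_{\gamma^S}\circ\mathcal E$ for every $\mathcal E\in\CTO(\gamma^S\leftarrow\gamma^R)$; since $\sigma^S$ is quasi-classical, $\mathcal P_{\gamma^S}(\sigma^S)=\sigma^S$, and since $\mathcal P_{\gamma^R}\in\CTO$, statement $(1)$ is equivalent to $(\mathcal P_{\gamma^R}(\rho^R),\gamma^R)\xrightarrow{\CTO}(\sigma^S,\gamma^S)$ — a conversion between quasi-classical resources. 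By the characterization of $\CTO$ on quasi-classical states in terms of relative majorization (Refs.~\cite{Janzing2000,Ende2022,Gour2022}, App.~\ref{app:PropCTO}) together with the geometric criterion below \eqref{eq:elbows}, this is in turn equivalent to
\begin{align}\label{eq:planAlpha}
\alpha_y(\vec r^R,\vec g^R)\ge\alpha_y(\vec s^S,\vec g^S)\qquad\forall\, y\in[0,1],
\end{align}
where $\vec s^S$ is the probability vector of $\sigma^S$. It thus remains to deduce \eqref{eq:planAlpha} from $(2)$.

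Here the closed forms of Thm.~\ref{thm:CoolingHeating} do the work. For a target qubit of gap $E$ at the fixed reference $\beta>0$ one has $g_1^A=(1+e^{-\beta E})^{-1}$ and $g_2^A=(1+e^{\beta E})^{-1}$, so by \eqref{eq:TminQubit} the two inequalities in \eqref{eq:thmConv1} are, after rearranging, comparisons of $\ln\!\frac{\alpha_{g_1^A}}{1-\alpha_{g_1^A}}$ (cooling) and $\ln\!\frac{\alpha_{g_2^A}}{1-\alpha_{g_2^A}}$ (heating) between the two states. Since $a\mapsto\ln\!\frac{a}{1-a}$ is order-preserving on $[0,1]$ (valued in $[-\infty,\infty]$), the cooling inequality for a given $E$ is equivalent to $\alpha_{g_1^A}(\vec r^R,\vec g^R)\ge\alpha_{g_1^A}(\vec s^S,\vec g^S)$ and the heating one to $\alpha_{g_2^A}(\vec r^R,\vec g^R)\ge\alpha_{g_2^A}(\vec s^S,\vec g^S)$. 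As $E$ runs over $(0,\infty)$, $g_1^A$ runs bijectively over $(\tfrac12,1)$ and $g_2^A$ over $(0,\tfrac12)$, so $(2)$ is equivalent to \eqref{eq:planAlpha} restricted to $y\in(0,1)\setminus\{\tfrac12\}$. Finally $y\in\{0,\tfrac12,1\}$ follow by continuity: $y\mapsto\alpha_y(\cdot,\cdot)$ is the piecewise-linear interpolation of the elbows \eqref{eq:elbows} with $\alpha_0=0$ and $\alpha_1=1$, so the inequality extends from the dense set $(0,1)\setminus\{\tfrac12\}$ to all of $[0,1]$, yielding \eqref{eq:planAlpha} and hence $(1)$.

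The hard part will be the reduction step: justifying carefully, through covariance, that $\rho^R$ may be replaced by $\mathcal P_{\gamma^R}(\rho^R)$, and invoking the relative-majorization characterization of $\CTO$ for quasi-classical resources — in particular the standard-but-technical fact that it is the closure $\CTO$, not merely $\TO$, that is captured exactly by relative majorization. The ``spanning'' observation that the Gibbs populations of qubits sweep out $(0,\tfrac12)\cup(\tfrac12,1)$, which is precisely what makes the two scalar families $C_\beta^E,H_\beta^E$ sufficient, is the conceptual heart of the argument but is short once this machinery is in place.
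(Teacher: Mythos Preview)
Your proposal is correct and follows essentially the same route as the paper: reduce to quasi-classical resources via covariance (Lem.~\ref{lem:targetQuasiClassical}), translate the cooling and heating inequalities into $\alpha_{g_1^A}(\vec r^R,\vec g^R)\ge\alpha_{g_1^A}(\vec s^S,\vec g^S)$ and $\alpha_{g_2^A}(\vec r^R,\vec g^R)\ge\alpha_{g_2^A}(\vec s^S,\vec g^S)$ using the monotonicity built into the qubit formulas of Thm.~\ref{thm:CoolingHeating}, and then let $E$ sweep so that $g_1^A$ and $g_2^A$ cover $(\tfrac12,1)$ and $(0,\tfrac12)$ respectively to obtain the full relative-majorization criterion. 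If anything, your explicit handling of the boundary points $y\in\{0,\tfrac12,1\}$ by continuity is slightly more careful than the paper's presentation.
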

The above Theorem shows that the performance in two of the most elementary thermodynamic tasks, namely cooling and heating of the simplest systems, i.e., qubits, fully determines the convertibility between and thus the relative value of quasi-classical athermality states. As we show in the Appendix after the proof, the conditions in 2 can be relaxed further: There exists a set of $|S|-1$ energies depending on the target state $(\sigma^S,\gamma^S)$ that correspond to \textit{either} a cooling or heating condition as in Eqs.~\eqref{eq:thmConv1} that in combination imply $(\rho^R,\gamma^R) \xrightarrow{\CTO} (\sigma^S,\gamma^S)$. 
In technical terms, this means that $C_\beta^E$ and $H_\beta^E$ form complete sets of monotones~\cite{Nielsen1999,Brandao2015,Gour2017,Gour2018,Rosset2018,Skrzypczyk2019,Takagi2019,Saxena2020,Gour2020a,Gour2021,Gour2020b,Datta2022} for quasi-classical thermodynamics under $\CTO$.

\section{Alternative characterization of state transformations}
So far, we answered the question of how much we can cool or heat specific qubits. However, in many applications, we want to reach a specific target inverse temperature $\tilde{\beta}$. A dual question of interest is therefore which qubits (characterized by their energy gap $E$) we can cool or heat to this $\tilde{\beta}$ given a resource $(\rho^R,\gamma^R)$ and background inverse temperature $\beta>0$~\cite{Janzing2000, Wilming2017}. In other words, we want to
find the set $\mathfrak{E}_{\beta}{(\rho^R, \gamma^R;\tilde{\beta})}$ of 
all energy gaps $E$ for which we can achieve the desired transformation. We start with an interesting observation.
\begin{prop}\label{prop:gap}
	For every $\beta,\tilde{\beta}>0, \beta\ne\tilde{\beta}$ there exist initial resources $(\rho^R,\gamma^R)$ such that $\mathfrak{E}_{\beta}{(\rho^R, \gamma^R;\tilde{\beta})}$ is \emph{not an interval}, i.e., there exist $E_1<E_2<E_3$ such that $E_1,E_3\in\mathfrak{E}_{\beta}{(\rho^R, \gamma^R;\tilde{\beta})}$ and $E_2\notin \mathfrak{E}_{\beta}{(\rho^R, \gamma^R;\tilde{\beta})}$.
\end{prop}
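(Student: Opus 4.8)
The plan is to translate the condition ``$E\in\mathfrak{E}_{\beta}{(\rho^R,\gamma^R;\tilde\beta)}$'' into a one-dimensional inequality via Theorem~\ref{thm:CoolingHeating}, and then to engineer a resource for which the solution set breaks into two components. Since thermalizing a system is free, $E\in\mathfrak{E}_{\beta}{(\rho^R,\gamma^R;\tilde\beta)}$ iff the qubit with gap $E$ can be driven from $\gamma^A(\beta)$ to $\gamma^A(\tilde\beta)$ using closed thermal operations and $(\rho^R,\gamma^R)$, which by Theorem~\ref{thm:CoolingHeating} together with the remark that every $\tilde\beta$ between $\tilde\beta_{\min}$ and $\tilde\beta_{\max}$ is reachable, holds exactly when $\tilde\beta_{\min}(E)\le\tilde\beta\le\tilde\beta_{\max}(E)$. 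I treat the cooling case $\tilde\beta>\beta$ first; then only the upper bound binds (since $\beta$ itself is always reachable), and exponentiating Eq.~\eqref{eq:TminQubit} gives the equivalent condition
\begin{align}
  E\in\mathfrak{E}_{\beta}{(\rho^R,\gamma^R;\tilde\beta)}\iff f_c\bigl(u(E)\bigr)\le\alpha_{u(E)}(\vec r^R,\vec g^R),
\end{align}
where $c:=\tilde\beta/\beta>1$, $u(E):=(1+e^{-\beta E})^{-1}$ is the ground-state population of $\gamma^A(\beta)$, and $f_c(u):=u^c/\bigl(u^c+(1-u)^c\bigr)$ equals the ground-state population of $\gamma^A(\tilde\beta)$ written through $u$ (equivalently, by Prop.~\ref{prop:altCooling}, the maximal achievable ground-state overlap must reach the ground-state weight of $\gamma^A(\tilde\beta)$). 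Because $E\mapsto u(E)$ is an increasing homeomorphism of $(0,\infty)$ onto $(1/2,1)$, the set $\mathfrak{E}_{\beta}$ is an interval iff $S:=\{u\in(1/2,1):f_c(u)\le\alpha_u(\vec r^R,\vec g^R)\}$ is, so it suffices to make $S$ a disjoint union of two intervals. The heating case $\tilde\beta<\beta$ will be handled by a mirror argument, with $u(E)$ replaced by the excited-state population $w(E):=e^{-\beta E}/(1+e^{-\beta E})$, a \emph{decreasing} homeomorphism onto $(0,1/2)$, and $c$ replaced by $\tilde\beta/\beta<1$.

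Next I analyse the two curves. The map $y\mapsto\alpha_y(\vec r^R,\vec g^R)$ is piecewise linear, concave, increasing, runs from $(0,0)$ to $(1,1)$, stays weakly above the diagonal, and its breakpoints are the elbows~\eqref{eq:elbows}, whose positions I can tune within a generous range by choosing $(\vec r^R,\vec g^R)$. The curve $f_c$ is smooth and strictly increasing with $f_c(1/2)=1/2$; on the relevant half-interval it lies strictly above the diagonal, and for $c>1$ it is concave on $(1/2,1)$ with $1-f_c(u)\sim(1-u)^c$ as $u\to1$, while for $c<1$ it is concave on $(0,1/2)$ with $f_c'(0^+)=\infty$. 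On each linear piece of $\alpha$ the difference $\alpha-f_c$ is convex, and at each elbow its one-sided slope drops, so $\alpha-f_c$ can change sign several times; I want the sign pattern $+,-,+$ on the relevant half-interval, which disconnects $S$. The main obstacle is the \emph{concavity} of $\alpha$: once its slope is small it can never increase again, so a na\"ive ``dip below $f_c$ and then climb back above it'' is impossible unless the climb exploits a region where $f_c$ is essentially flat. This forces the use of the tail behaviour of $f_c$: for $c>1$ the flatness of $1-f_c$ near $u=1$ lets an $\alpha$ that saturates at the value $1$ beat $f_c$ there, and for $c<1$ the infinite slope of $f_c$ at $w=0$ lets an $\alpha$ with a very steep initial segment beat $f_c$ at small nonzero $w$.

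Accordingly, for the cooling case I will take a rank-deficient three-level resource: $\gamma^R$ with eigenvalues $(g_1,g_2,g_3)$ ordered by increasing energy, all positive, with $g_1\ge 1/2$ and $g_3$ small, and $\rho^R=\mathrm{diag}(r_1,1-r_1,0)$ in the energy eigenbasis, with $r_1$ in the window $\bigl(\tfrac{g_1}{g_1+g_2},\,f_c(g_1)\bigr)$; this window is nonempty since it amounts to $\bigl(\tfrac{g_2+g_3}{g_1}\bigr)^{c}<\tfrac{g_2}{g_1}$, which holds for $g_3$ small because $(g_2/g_1)^c<g_2/g_1$. For such $r_1$ one has $r_1/g_1\ge(1-r_1)/g_2$, so $\alpha$ has elbows $(r_1,g_1)$ and $(1,g_1+g_2)$, hence on $(1/2,1)$ it equals $\tfrac{r_1}{g_1}u$ up to $u=g_1$, then climbs with slope $\tfrac{1-r_1}{g_2}$ to $\alpha_{g_1+g_2}=1$, then stays at $1$ on $[1-g_3,1]$. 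Then $r_1>\tfrac{g_1}{g_1+g_2}>g_1$ gives $\alpha_{1/2}>1/2=f_c(1/2)$, so every small $E$ lies in $\mathfrak{E}_{\beta}$; $r_1<f_c(g_1)$ gives $\alpha_{g_1}=r_1<f_c(g_1)$, a genuine dip at $u=g_1\in(1/2,1-g_3)$, so a band of intermediate $E$ lies outside $\mathfrak{E}_{\beta}$; and the plateau $\alpha_u=1>f_c(u)$ for $u\in[1-g_3,1)$ puts every large $E$ back in $\mathfrak{E}_{\beta}$. Representative values such as $g=(0.6,0.3,0.1)$, $r_1=0.68$, $\tilde\beta=2\beta$ then yield explicit $E_1<E_2<E_3$ (at $u$-values slightly above $1/2$, equal to $0.6$, and equal to $0.95$) with $E_1,E_3\in\mathfrak{E}_{\beta}$ and $E_2\notin\mathfrak{E}_{\beta}$. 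For the heating case I will instead use a single lopsided qubit $R$: $\gamma^R=\mathrm{diag}(g_1^R,g_2^R)$ with $g_2^R$ tiny (a large gap) and $\rho^R=\mathrm{diag}(r_1^R,r_2^R)$ with $g_2^R\ll r_2^R$ and $r_1^R<1$, so $\alpha$ has its single elbow at $(r_2^R,g_2^R)$ with enormous slope $r_2^R/g_2^R$; then $\alpha-f_c$ is convex on the whole second piece $(g_2^R,1/2)$, is positive at both ends (near $w\approx g_2^R$ thanks to the steep first segment versus $f_c'(0^+)=\infty$, and at $w=1/2$ since $\alpha_{1/2}>1/2$), yet dips negative in between, so $S$ is again a union of two intervals; tracing the decreasing map $w\mapsto E$ produces the required $E_1<E_2<E_3$, as one verifies with, e.g., $g^R=(0.999,0.001)$, $\rho^R=\mathrm{diag}(0.9,0.1)$, $\tilde\beta=\beta/2$.
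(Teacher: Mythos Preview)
Your framework matches the paper's: both reduce the problem to showing that the lower boundary of the resource's testing region crosses the qubit-elbow curve (your $f_c$, the paper's $F_a$) at least twice on the relevant half-interval. Your \emph{cooling} construction is correct and differs genuinely from the paper's: you use a rank-deficient three-level resource whose plateau $\alpha_u=1$ on $[1-g_3,1)$ forces $\alpha>f_c$ near $u=1$, and your explicit parameter window $r_1\in\bigl(g_1/(g_1+g_2),\,f_c(g_1)\bigr)$ (nonempty for small $g_3$) gives the $+,-,+$ sign pattern at $u=1/2$, $u=g_1$, $u=1-g_3$ for every $c>1$. The paper instead uses a qubit resource in both regimes, via a tangent-line perturbation through the origin. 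Your plateau trick is more elementary and avoids the tangent-line geometry; the paper's approach has the advantage of staying in dimension two throughout.

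Your \emph{heating} argument, however, has a genuine gap. You assert that $\alpha-f_c$ ``dips negative in between'' on the second piece $(g_2^R,1/2)$, but this is not automatic for every $c\in(0,1)$. As $c\to1^-$ one has $f_c\to\mathrm{id}$, while the limiting second-piece line $L(w)=r_2^R+(1-r_2^R)w$ satisfies $L(w)-w=r_2^R(1-w)>0$; so for any fixed $r_2^R$ (for instance your $r_2^R=0.1$) and $c$ sufficiently close to $1$, $L$ stays above $f_c$ on all of $(0,1/2)$ and no dip occurs---one checks directly that your numerical example fails already at $c=0.9$. The fix is to let $r_2^R$ depend on $c$: since $f_c(w_0)>w_0$ for any fixed $w_0\in(0,1/2)$ and any $c\in(0,1)$, choose $r_2^R<\bigl(f_c(w_0)-w_0\bigr)/(1-w_0)$ so that $L(w_0)<f_c(w_0)$, and then take $g_2^R$ small enough that $g_2^R<w_0$ and $r_2^R>f_c(g_2^R)$ (possible because $f_c(g_2^R)\to0$ as $g_2^R\to0$). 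With this $c$-dependent choice your three-point sign check at $w\approx g_2^R$, $w=w_0$, $w=1/2$ goes through and $S$ is disconnected.
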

In the proof in the Appendix, we provide an explicit construction of such initial resource states for all finite $|R|>1$. 
The above Proposition may seem surprising, since naively, one might assume that if one can, e.g., heat a qubit with energy gap $E_3$ to $\tilde{\beta}$, one should be able to heat a qubit with energy gap $E_2<E_3$ to $\tilde{\beta}$ too. This is however not the case, see the Appendix for the proof and a discussion.
Nevertheless, the ability to cool and heat qubits to fixed temperatures again fully characterizes the conversions between quasi-classical states.
\begin{thm}\label{thm:alternative}
	Let $(\sigma^S, \gamma^S)$ be quasi-classical. 
	Then $(\rho^R, \gamma^R) \xrightarrow{\CTO} (\sigma^S, \gamma^S)$ if and only if 
	\begin{align}
		\mathfrak{E}_{\beta}{(\rho^R, \gamma^R;\tilde{\beta})} \supseteq \mathfrak{E}_{\beta}{(\sigma^S, \gamma^S;\tilde{\beta})}
	\end{align}
	 for any fixed $\beta>0$ and all $\tilde{\beta}\in(-\infty,\infty)$. 
\end{thm}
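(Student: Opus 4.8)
The plan is to reduce Theorem~\ref{thm:alternative} to Theorem~\ref{thm:main} by rewriting the statement about achievable energy gaps as the statement about the monotones $C_\beta^E$ and $H_\beta^E$. The bridge is the following observation, which I would isolate first: for a fixed background inverse temperature $\beta>0$ and a fixed qubit energy gap $E$, the set of target inverse temperatures $\tilde\beta$ reachable from $(\rho^R,\gamma^R)$ is \emph{exactly} the closed interval $[\tilde\beta_{\min}(\rho^R,\gamma^R;\beta,E),\,\tilde\beta_{\max}(\rho^R,\gamma^R;\beta,E)]$. The inclusion ``reachable $\subseteq$ interval'' is immediate from the definitions of $\tilde\beta_{\min}$ and $\tilde\beta_{\max}$ in Thm.~\ref{thm:CoolingHeating}, and ``interval $\subseteq$ reachable'' is precisely the remark below Thm.~\ref{thm:CoolingHeating} that every $\tilde\beta$ between $\tilde\beta_{\min}$ and $\tilde\beta_{\max}$ is reachable. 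Equivalently, $E\in\mathfrak{E}_\beta(\rho^R,\gamma^R;\tilde\beta)$ if and only if $\tilde\beta_{\min}(\rho^R,\gamma^R;\beta,E)\le\tilde\beta\le\tilde\beta_{\max}(\rho^R,\gamma^R;\beta,E)$, and likewise for $(\sigma^S,\gamma^S)$. I would also record that both of these intervals always contain $\beta$ (doing nothing leaves a qubit in $\gamma^A$), so in particular they are nonempty; this is just the statement that $C_\beta^E,H_\beta^E\ge 0$.

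The core of the argument is then a chain of equivalences. Fix $E\in(0,\infty)$ and write $[\tilde\beta_{\min}^{\sigma},\tilde\beta_{\max}^{\sigma}]$ and $[\tilde\beta_{\min}^{\rho},\tilde\beta_{\max}^{\rho}]$ for the two associated intervals at that gap. By the characterization above, the hypothesis ``$\mathfrak{E}_\beta(\rho^R,\gamma^R;\tilde\beta)\supseteq\mathfrak{E}_\beta(\sigma^S,\gamma^S;\tilde\beta)$ for all $\tilde\beta$'', restricted to the $\tilde\beta$ lying in the $\sigma$-interval, says exactly $[\tilde\beta_{\min}^{\sigma},\tilde\beta_{\max}^{\sigma}]\subseteq[\tilde\beta_{\min}^{\rho},\tilde\beta_{\max}^{\rho}]$; conversely this inclusion of intervals clearly reproduces the set inclusion for every $\tilde\beta$. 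Since both intervals are nonempty, inclusion of intervals is equivalent to the two endpoint inequalities $\tilde\beta_{\max}^{\rho}\ge\tilde\beta_{\max}^{\sigma}$ and $\tilde\beta_{\min}^{\rho}\le\tilde\beta_{\min}^{\sigma}$, which by the definition~\eqref{eq:Monotones} of the monotones are precisely $C_\beta^E(\rho^R,\gamma^R)\ge C_\beta^E(\sigma^S,\gamma^S)$ and $H_\beta^E(\rho^R,\gamma^R)\ge H_\beta^E(\sigma^S,\gamma^S)$. Letting $E$ range over $(0,\infty)$, the hypothesis of Thm.~\ref{thm:alternative} is therefore equivalent to condition~2 of Thm.~\ref{thm:main}, and Thm.~\ref{thm:main} then yields $(\rho^R,\gamma^R)\xrightarrow{\CTO}(\sigma^S,\gamma^S)$. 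Note that this equivalence chain covers both implications of Thm.~\ref{thm:alternative} at once; the ``only if'' direction also admits a direct proof, namely concatenating a conversion $(\rho^R,\gamma^R)\xrightarrow{\CTO}(\sigma^S,\gamma^S)$ with a $\CTO$ protocol that cools or heats a qubit using $(\sigma^S,\gamma^S)$, using that $\CTO$ is closed under composition.

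The step I expect to require the most care is the boundary behaviour of the interval characterization. When the resource is steep enough that $\tilde\beta_{\max}(\rho^R,\gamma^R;\beta,E)=+\infty$ (perfect ground-state cooling of the qubit) or $\tilde\beta_{\min}=-\infty$ (perfect population inversion), the reachable set of \emph{finite} $\tilde\beta$ is a half-line, and one has to confirm that the reachable finite targets still determine the (now possibly extended-real) endpoints, so that ``inclusion of intervals'' and ``endpoint inequalities'' remain equivalent when interpreted in $[-\infty,\infty]$; this needs the interpolation statement below Thm.~\ref{thm:CoolingHeating} to hold uniformly up to these limits, but introduces no new ideas. A secondary, genuinely routine point is that $\mathfrak{E}_\beta(\sigma^S,\gamma^S;\tilde\beta)$ may be empty for some $\tilde\beta$ (those outside the $\sigma$-interval); for such $\tilde\beta$ the required inclusion is vacuous and contributes no constraint, consistent with the reduction above. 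Everything else is bookkeeping: carefully matching the ranges of the free parameters (fixed $\beta$; $E\in(0,\infty)$ on the Thm.~\ref{thm:main} side versus $\tilde\beta\in(-\infty,\infty)$ on the Thm.~\ref{thm:alternative} side) and invoking Thm.~\ref{thm:main}.
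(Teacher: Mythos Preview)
Your proof is correct but takes a genuinely different route from the paper. The paper proves the non-trivial direction directly via the testing-region geometry: assuming the set inclusion holds, it argues by contradiction that if the testing region of $(\vec s^S,\vec g^S)$ were not contained in that of $(\vec r^R,\vec g^R)$, a witnessing point $(x_0,y_0)$ could be written as $F_{a_0}(w_0)$ for the elbow curves $F_a$ developed before Prop.~\ref{prop:gap}, yielding a concrete qubit gap $E_0$ and target $\tilde\beta_0$ reachable from $(\sigma^S,\gamma^S)$ but not from $(\rho^R,\gamma^R)$. You instead observe that, via the interval characterization of reachable $\tilde\beta$, the hypothesis ``$\mathfrak{E}_\beta(\rho^R,\gamma^R;\tilde\beta)\supseteq\mathfrak{E}_\beta(\sigma^S,\gamma^S;\tilde\beta)$ for all $\tilde\beta$'' is \emph{equivalent} to the interval inclusions $[\tilde\beta_{\min}^\sigma,\tilde\beta_{\max}^\sigma]\subseteq[\tilde\beta_{\min}^\rho,\tilde\beta_{\max}^\rho]$ for all $E$, hence to the endpoint inequalities, hence to condition~2 of Thm.~\ref{thm:main}. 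Your reduction is more economical and makes transparent that Thms.~\ref{thm:main} and~\ref{thm:alternative} are reformulations of one another; the paper's argument is more self-contained and reuses the $F_a$ machinery already in place for Prop.~\ref{prop:gap}. Your treatment of the extended-real endpoints is adequate: since both intervals contain $\beta$, interval inclusion is equivalent to the two endpoint inequalities in $[-\infty,\infty]$, and the range $\tilde\beta\in(-\infty,\infty)$ in the theorem statement only ever tests finite targets, so nothing further is needed.
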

Whilst the $\mathfrak{E}_{\beta}$ represent sets and are therefore no resource monotones in the strict sense, the equation $\mathfrak{E}_{\beta}{(\rho^R, \gamma^R;\tilde{\beta})} \supseteq \mathfrak{E}_{\beta}{(\sigma^S, \gamma^S;\tilde{\beta})}$ can be seen as a generalized monotone~\cite{Gonda2019, Scandolo2021, Cockett2022} that implies an operational order, namely that $(\rho^R, \gamma^R)$ can be used to cool/heat more qubits to the desired target inverse temperature $\tilde{\beta}$ than $(\sigma^S, \gamma^S)$.

\section{Discussion}
In this work, we solved the problem to which temperatures we can cool and heat a system that is initially at equilibrium given access to closed thermal operations ($\CTO$) and an athermality state $(\rho,\gamma)$. Since $\CTO$ allows for the ultimate level of control, this provides bounds on the usefulness of thermal non-equilibrium when one is, e.g., interested in initializing a (qu)bit for (quantum) computation or other nanotechnological applications. Our results are also bounds in the sense mentioned in the introduction: If a system is initially not in its equilibrium state, we can first thermalize it and then cool or heat it using the given resource. Since thermalization is free, this cannot outperform optimal protocols that potentially do not include thermalization. Moreover, if one intends to cool a system initially not in its equilibrium state, one could in principle first use it to cool another system. During this process, it evolves towards its thermal state, and one can then cool it from there. Our results can thus, e.g., be used to determine how thermal non-equilibrium that is present at the end of a computation can be utilized to reduce the cost of re-initialization. At this point, we want to emphasize again that all of these results should be seen as fundamental bounds (similar to Landauer's principle~\cite{Landauer1961}), whilst in many contemporary experiments, one is orders of magnitudes away from these bounds. Nevertheless, it is interesting to determine what is possible in principle and thermal machines in the quantum limit aiming to approach these bounds are under active investigation~\cite{Scovil1959, Faucheux1995, Scully2002, Baugh2005, Uzdin2015, Silva2016, Klatzow2019, Bera2021}. For the case of qubits, we then used these temperatures to define resource monotones with a clear operational interpretation. These monotones are a complete set~\cite{Nielsen1999, Brandao2015, Gour2017, Gour2018, Rosset2018, Skrzypczyk2019, Takagi2019, Saxena2020, Gour2020a, Gour2021, Gour2020b, Datta2022} in the sense that they fully determine if quasi-classical states can be interconverted via $\CTO$. We thus characterized resource conversions, which are at the core of any resource theory, by the cooling and heating of qubits, i.e., by two of the most notable thermodynamical tasks on the simplest systems. 

Our work is related to the ``incoherent control paradigm'' of thermal machines discussed in Refs.~\cite{Clivaz2019, Clivaz2019b}, where also optimal cooling is investigated. The difference is that Refs.~\cite{Clivaz2019, Clivaz2019b} do not allow energy-preserving unitaries to act directly on parts of an arbitrary bath at fixed temperature, the target system, and a fixed system out of thermal equilibrium. Instead, they consider energy-preserving unitaries acting on the target system and a fixed thermal machine at thermal equilibrium (taking the role of our bath) as well as what they call an extension of their machine which is in a Gibbs state with respect to a different temperature (taking the role of our resource state). In this sense, their setting is different in that they do not allow for arbitrary baths and only consider resource states in Gibbs states at a different temperature. The ability to cool also plays a role in a resource theory of passivity~\cite{Singh2021}: a passive state that is virtually cooler than another one (see Refs.~\cite{Brunner2012, Skrzypczyk2015, Sparaciari2017} for the notion of virtual temperatures) implies that the former can cool a qubit better than the latter. As explored in Ref.~\cite{Lipka-Bartosik2023}, notions of cooling and heating can even be used to assign (two) temperatures to non-equilibrium states.

Technically, in our Theorems characterizing the conversion of athermality, we only required the target state to be quasi-classical. Since on quasi-classical states Gibbs-preserving operations ($\GPO$)~\cite{Faist2015,Egloff2015} (also studied under the name of asymmetric distinguishability~\cite{Matsumoto2010,Wang2019a,Wang2019b,Rethinasamy2020}) have the same conversion power as $\CTO$~\cite{Janzing2000,Shiraishi2020,Gour2022}, our results hold for $\GPO$ too if also the initial athermality state is quasi-classical. 
An interesting open question is if it is possible to characterize the convertibility between arbitrary athermality states (i.e., not quasi-classical ones) with the help of resource measures with an operational meaning (see Refs.~\cite{Buscemi2017, Narasimhachar2015} for some results concerning specific states). It follows from the discussion below Thm.~\ref{thm:CoolingHeating} that the monotones that we consider here are insufficient for this task, in essence because they only capture the non-uniformity aspect of athermality~\cite{Gour2015} but no restrictions emerging from coherence~\cite{Lostaglio2015} or, more precisely, covariance~\cite{Gour2008}. If it were possible to find (operational) monotones capable of describing the covariance aspects completely, in combination with our results, this might also shed light on the question in what sense $\CTO$ encapsulates more restrictions than Gibbs-preservation and covariance~\cite{Cwiklinski2015, Gour2018, Lostaglio2019, Gour2022, Ding2021}. 
An improved understanding of the operational capabilities of and differences between resource theories of quantum thermodynamics will in turn improve the understanding of, e.g., what level of control is needed for tasks such as cooling, and therefore lead to an improved design of (quantum) devices.

\begin{acknowledgments}
	We thank Frederik vom Ende for comments on the manuscript. 
	T. T., E. Z., and G. G. acknowledge support from the Natural Sciences and Engineering Research Council of Canada (NSERC). 
	T. T. acknowledges support from the Pacific Institute for the Mathematical Sciences (PIMS). The research and findings may not reflect those of the Institute.
	C. M. S. acknowledges the support of the Natural Sciences and Engineering Research Council of Canada (NSERC) through the Discovery Grant “The power of quantum resources” RGPIN-2022-03025 and the Discovery Launch Supplement DGECR-2022-00119.
\end{acknowledgments}

\appendix
%\setcounter{theorems}{6}
%In the following, we provide the proofs omitted in the main text as well as some additional results.
\section{Additional notation and remarks}\label{app:NotRem}
In the main text, we introduced the notation $(\rho^A,\gamma^A)$ for athermality states to make it clear that we consider the resourcefulness of $\rho^A$ with respect to the reference Gibbs state $\gamma^A$. Since for a fixed inverse temperature $\beta$ the Gibbs state of a system is determined by its Hamiltonian, also the notation $(\rho^A,H^A)$ is found in the literature. 
We choose however to utilize Gibbs states instead of Hamiltonians, because the former determine the reference unambiguously, whilst the latter do not:  global energy shifts in a Hamiltonian, e.g., leave the Gibbs state unchanged, i.e., a state can be free with respect to many Hamiltonians, but only with respect to one Gibbs state. We also mentioned in the main text that the closed thermal operations ($\CTO$) are the topological closure of the thermal operations ($\TO$). Formally, we define them as follows.
\begin{definition}\label{def:CTO}
	Let $A$ and $A'$ be two physical systems with Gibbs states $\gamma^A, \gamma^{A'}$. We call $(\mathcal{N}^{A'\leftarrow A}, \gamma^A) $ a closed thermal operation 
	if and only if there exists a sequence of thermal operations $\{(\mathcal{N}^{A'\leftarrow A}_n,\gamma^A)\}_{n\in \mathbb{N}}$ with  $(\mathcal{N}^{A'\leftarrow A}_n,\gamma^A) \in \TO(\gamma^{A'}\leftarrow\gamma^A)\  \forall n \in \mathbb{N}$
	such that
	\begin{align}
		\lim_{n\rightarrow\infty}\norm{\mathcal{N}^{A'\leftarrow A}_n-\mathcal{N}^{A'\leftarrow A}}_\diamond=0,
	\end{align}
	where $\norm{\cdot}_\diamond=0$ denotes the diamond norm~\cite{Kitaev1997}.
	In analogy to $\TO(\gamma^{A'}\leftarrow\gamma^A)$, we write $\CTO(\gamma^{A'}\leftarrow\gamma^A)$.	
\end{definition}

Another potential choice of free operations is the set of Gibbs-preserving operations $(\GPO)$, which is the largest set of channels that preserves the free states, i.e., the Gibbs state. Analogously to, e.g., the  thermal operations, one denotes by $\mathrm{GPO}(\gamma^{A'}\leftarrow\gamma^{A})$ the set of Gibbs-preserving operations from system $A$ with Gibbs state $\gamma^A$ to system ${A'}$ with Gibbs state $\gamma^{A'}$, which is defined as
\begin{align}
	\left\{\mathcal{N}\in \mathrm{CPTP}\left(A'\leftarrow A\right): \mathcal{N}^{A'\leftarrow A}\left(\gamma^A\right)=\gamma^{A'} \right\}.
\end{align}
Gibbs-preserving operations will play an important role in our proofs.
We saw in the main text that thermal operations preserve the Gibbs state too, and therefore
\begin{align}\label{eq:TOinGPO}
	\mathrm{TO}\left(\gamma^{A'}\leftarrow\gamma^{A}\right)\subseteq \mathrm{GPO}\left(\gamma^{A'}\leftarrow\gamma^{A}\right).
\end{align}
We also discussed that we write $(\mathcal{E}^{A'\leftarrow A},\gamma^A)$ for a thermal operation since it is thermal with respect to the initial and target Gibbs state, but the target Gibbs state is determined by the operation and the initial Gibbs state via
\begin{align}\label{eq:TOpresGibbsSM}
	\mathcal{E}^{{A'}\leftarrow A}(\gamma^A)=&\tr_{B'}\left[U^{AB}\left(\gamma^A\otimes\gamma^B\right)U^{\dagger AB}\right] =\gamma^{A'}.
\end{align}
We use the same notation for closed thermal and Gibbs-preserving operations because we can also omit the target Gibbs states in these cases. For Gibbs-preserving operations this is clear by definition, and for closed thermal operations it follows from their definition in terms of thermal operations, i.e.,
\begin{align}
	\gamma^{A'}=\mathcal{N}^{A'\leftarrow A}\left( \gamma^A \right) =\lim_{n\rightarrow \infty} \mathcal{N}_n^{A'\leftarrow A}\left( \gamma^A_n \right).
\end{align} 
This also implies that 
\begin{align}\label{eq:CTOinGPO}
	\mathrm{CTO}\left(\gamma^{A'}\leftarrow\gamma^{A}\right)\subseteq \mathrm{GPO}\left(\gamma^{A'}\leftarrow\gamma^{A}\right).
\end{align} 

In Ref.~\cite[Lem.~II.1]{Gour2022}, it was shown that the requirement $AB=A'B'$ in the definition of the thermal operations can be relaxed to $AB\cong A'B'$. From a physical perspective, this is interesting since it implies that $AB$ and $A'B'$ can correspond to completely different physical systems. We will use this in the following to simplify our proofs.

At this point, we also want to emphasize that resource theories of thermodynamics are not designed to, e.g., explain why or how thermalization takes place, but rather study what is possible under the considered constraints. This is comparable to, e.g., the resource theory of entanglement~\cite{Plenio2007, Horodecki2009}, where the restriction to local operations and classical communication is well motivated, but also an ingredient to the theory and not a result of it. For technological applications, this practical approach is well suited, as illustrated by the immense success of classical thermodynamics, which is a \textit{classical} resource theory: by restricting ourselves to the manipulation of macroscopic degrees of freedom, we obtain the laws of thermodynamics from statistical physics. 

\section{Properties of closed thermal operations}\label{app:PropCTO}
In this section, we show that the set of closed thermal operations as defined in Def.~\ref{def:CTO} is meaningful in the sense that it leads to a well-defined resource theory.

First, we note that the closed thermal operations inherit closedness under concatenation from the thermal operations, which is essential for a consistent resource theory (see also Ref.~\cite[Prop.~4]{Ende2022} for the case $A=B=C$). 
\begin{lem}\label{lem:CTOclosedConc}
	If	
	\begin{align}
		\left(\mathcal{N}^{B\leftarrow A}, \gamma^A\right)\in \mathrm{CTO}\left(\gamma^{B}\leftarrow\gamma^{A}\right), \nonumber \\
		\left(\mathcal{E}^{C\leftarrow B}, \gamma^B\right)\in \mathrm{CTO}\left(\gamma^{C}\leftarrow\gamma^{B}\right),
	\end{align}
	then 
	\begin{align}
		\left(\mathcal{E}^{C\leftarrow B}\circ\mathcal{N}^{B\leftarrow A}, \gamma^A\right)\in \mathrm{CTO}\left(\gamma^{C}\leftarrow\gamma^{A}\right).
	\end{align}
\end{lem}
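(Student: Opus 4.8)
The plan is to prove closedness under concatenation directly from Definition~\ref{def:CTO} by composing the approximating sequences. Concretely, I would fix sequences $(\mathcal{N}_n^{B\leftarrow A},\gamma^A)\in\TO(\gamma^{B}\leftarrow\gamma^{A})$ and $(\mathcal{E}_n^{C\leftarrow B},\gamma^B)\in\TO(\gamma^{C}\leftarrow\gamma^{B})$ with $\norm{\mathcal{N}_n^{B\leftarrow A}-\mathcal{N}^{B\leftarrow A}}_\diamond\to 0$ and $\norm{\mathcal{E}_n^{C\leftarrow B}-\mathcal{E}^{C\leftarrow B}}_\diamond\to 0$, and then show that $\{(\mathcal{E}_n^{C\leftarrow B}\circ\mathcal{N}_n^{B\leftarrow A},\gamma^A)\}_{n\in\mathbb{N}}$ is an approximating sequence of thermal operations for $\mathcal{E}^{C\leftarrow B}\circ\mathcal{N}^{B\leftarrow A}$, which gives the claim.

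The first step is to verify that each $\mathcal{E}_n^{C\leftarrow B}\circ\mathcal{N}_n^{B\leftarrow A}$ lies in $\TO(\gamma^{C}\leftarrow\gamma^{A})$, i.e.\ that $\TO$ itself is closed under concatenation. Here I would unfold the two dilations: realize $\mathcal{N}_n$ with a non-interacting bath $B_1$ in its Gibbs state $\gamma^{B_1}$ and a unitary $U_1$ on $AB_1$ commuting with $\gamma^A\otimes\gamma^{B_1}$ (followed by a partial trace), and $\mathcal{E}_n$ with a bath $B_2$ in state $\gamma^{B_2}$ and a unitary $U_2$ on $BB_2$ commuting with $\gamma^B\otimes\gamma^{B_2}$. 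Adjoining the single non-interacting bath $B_1B_2$ in state $\gamma^{B_1}\otimes\gamma^{B_2}$ to $A$, applying the product of $U_1$ and $U_2$ (each trivially extended by the identity on the remaining factors), which again commutes with the global Gibbs state $\gamma^A\otimes\gamma^{B_1}\otimes\gamma^{B_2}$, and then tracing out everything but $C$ exhibits the composition as a single thermal operation; the dimension bookkeeping is absorbed by the relaxation $AB\cong A'B'$ recalled above from Ref.~\cite[Lem.~II.1]{Gour2022}. The Gibbs states chain consistently: by Eq.~\eqref{eq:TOpresGibbsSM} the output Gibbs state of $\mathcal{N}_n$ is $\gamma^B$, which matches the input Gibbs state of $\mathcal{E}_n$, and the composition has input Gibbs state $\gamma^A$ and output Gibbs state $\gamma^C$.

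The second step is diamond-norm convergence, which is routine: writing $\mathcal{E}_n\circ\mathcal{N}_n-\mathcal{E}\circ\mathcal{N}=\mathcal{E}_n\circ(\mathcal{N}_n-\mathcal{N})+(\mathcal{E}_n-\mathcal{E})\circ\mathcal{N}$ and using the triangle inequality, submultiplicativity of $\norm{\cdot}_\diamond$, and the fact that CPTP maps have unit diamond norm,
\begin{align}
	&\norm{\mathcal{E}_n^{C\leftarrow B}\circ\mathcal{N}_n^{B\leftarrow A}-\mathcal{E}^{C\leftarrow B}\circ\mathcal{N}^{B\leftarrow A}}_\diamond \nonumber\\
	&\qquad\le \norm{\mathcal{E}_n^{C\leftarrow B}}_\diamond\,\norm{\mathcal{N}_n^{B\leftarrow A}-\mathcal{N}^{B\leftarrow A}}_\diamond + \norm{\mathcal{E}_n^{C\leftarrow B}-\mathcal{E}^{C\leftarrow B}}_\diamond\,\norm{\mathcal{N}^{B\leftarrow A}}_\diamond \nonumber\\
	&\qquad = \norm{\mathcal{N}_n^{B\leftarrow A}-\mathcal{N}^{B\leftarrow A}}_\diamond + \norm{\mathcal{E}_n^{C\leftarrow B}-\mathcal{E}^{C\leftarrow B}}_\diamond \xrightarrow{n\to\infty} 0 .
\end{align}
Hence $\mathcal{E}^{C\leftarrow B}\circ\mathcal{N}^{B\leftarrow A}$ is a diamond-norm limit of elements of $\TO(\gamma^{C}\leftarrow\gamma^{A})$ with initial Gibbs state $\gamma^A$, i.e.\ $(\mathcal{E}^{C\leftarrow B}\circ\mathcal{N}^{B\leftarrow A},\gamma^A)\in\CTO(\gamma^{C}\leftarrow\gamma^{A})$. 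I expect the only delicate point to be the bath/dimension bookkeeping in the first step — making the composed dilation genuinely of the form demanded by Definition~\ref{def:TO} — whereas the limiting argument carries no real difficulty once the standard properties of the diamond norm are invoked.
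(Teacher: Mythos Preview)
Your proposal is correct and follows essentially the same route as the paper's proof: compose the two approximating TO sequences, note that TO is closed under concatenation, and bound $\norm{\mathcal{E}_n\circ\mathcal{N}_n-\mathcal{E}\circ\mathcal{N}}_\diamond$ by $\norm{\mathcal{E}_n-\mathcal{E}}_\diamond+\norm{\mathcal{N}_n-\mathcal{N}}_\diamond$. The only difference is that you spell out the bath-composition argument and the submultiplicativity step in more detail, whereas the paper simply asserts both as direct consequences of the definitions.
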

\begin{proof}
	Let $\{(\mathcal{N}^{B\leftarrow A}_n,\gamma^A)\}_{n\in \mathbb{N}}$ and $\{(\mathcal{E}^{C\leftarrow B}_n,\gamma^B)\}_{n\in \mathbb{N}}$
	be sequences of thermal operations as in Def.~\ref{def:CTO} that have $(\mathcal{N}^{B\leftarrow A},\gamma^A)$  respectively $(\mathcal{E}^{C\leftarrow B},\gamma^B)$ as their limits. Since thermal operations are closed under concatenation (which is a direct consequence of their definition), it follows that 
	\begin{align}
		(\mathcal{M}^{C\leftarrow A}_n,\gamma^A):=(\mathcal{E}^{C\leftarrow B}_n\circ \mathcal{N}_n^{B\leftarrow A},\gamma^A)\in \TO(\gamma^C\leftarrow\gamma^A).
	\end{align}
	Moreover, 
	\begin{align}
		& \norm{\mathcal{M}^{C\leftarrow A}_n - \mathcal{E}^{C\leftarrow B}\circ\mathcal{N}^{B\leftarrow A}}_\diamond \nonumber \\
		= & \norm{\mathcal{E}^{C\leftarrow B}_n\circ \mathcal{N}_n^{B\leftarrow A} - \mathcal{E}^{C\leftarrow B}\circ\mathcal{N}^{B\leftarrow A}}_\diamond \nonumber \\
		\le & \norm{\mathcal{E}^{C\leftarrow B}_n-\mathcal{E}^{C\leftarrow B}}_\diamond	+ \norm{\mathcal{N}_n^{B\leftarrow A}-\mathcal{N}^{B\leftarrow A}}_\diamond,
	\end{align}
	i.e.,
	\begin{align}
		\lim_{n\rightarrow \infty} \norm{\mathcal{M}^{C\leftarrow A}_n - \mathcal{E}^{C\leftarrow B}\circ\mathcal{N}^{B\leftarrow A}}_\diamond =0,
	\end{align}
	which finishes the proof.
\end{proof}
As intended, closed thermal operations resolve the problem discussed above their Definition in the main text:

\begin{lem}\label{lem:CTOepsilon}
	The following two statements are equivalent.
	\begin{enumerate}
		\item The athermality state $(\rho^A,\gamma^A)$ can be converted to $(\sigma^{A'}, \gamma^{A'})$ by $\CTO$, i.e.,
		\begin{align}
			(\rho^A,\gamma^A) \xrightarrow{\CTO} (\sigma^{A'}, \gamma^{A'}).
		\end{align}
		\item For every $\epsilon>0$, there exist athermality states $(\tilde{\rho}^A,\gamma^A)$ and
		$(\tilde{\sigma}^{A'}, \gamma^{A'})$ $\epsilon$-close to 
		$(\rho^A,\gamma^A)$ and $(\sigma^{A'}, \gamma^{A'})$,  respectively, such that 
		\begin{align}
			(\tilde{\rho}^A,\gamma^A)\xrightarrow{\TO}(\tilde{\sigma}^{A'}, \gamma^{A'}).
		\end{align}
	\end{enumerate}
\end{lem}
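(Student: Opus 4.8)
The plan is to prove the two implications separately, with the direction $1 \Rightarrow 2$ being essentially unpacking the definition of $\CTO$ and the direction $2 \Rightarrow 1$ requiring a diagonalization/compactness argument.

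For $1 \Rightarrow 2$: suppose $(\rho^A,\gamma^A) \xrightarrow{\CTO} (\sigma^{A'},\gamma^{A'})$, so there is $(\mathcal{N}^{A'\leftarrow A},\gamma^A)\in\CTO(\gamma^{A'}\leftarrow\gamma^A)$ with $\mathcal{N}^{A'\leftarrow A}(\rho^A)=\sigma^{A'}$. By Def.~\ref{def:CTO} pick a sequence of thermal operations $(\mathcal{N}_n^{A'\leftarrow A},\gamma^A)$ converging to $\mathcal{N}^{A'\leftarrow A}$ in diamond norm. Then set $\tilde\rho^A := \rho^A$ and $\tilde\sigma^{A'} := \mathcal{N}_n^{A'\leftarrow A}(\rho^A)$ for $n$ large enough. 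We immediately have $(\tilde\rho^A,\gamma^A)\xrightarrow{\TO}(\tilde\sigma^{A'},\gamma^{A'})$, and since $\|\mathcal{N}_n^{A'\leftarrow A}(\rho^A)-\sigma^{A'}\|_1 \le \|\mathcal{N}_n^{A'\leftarrow A}-\mathcal{N}^{A'\leftarrow A}\|_\diamond \to 0$, for any $\epsilon>0$ a large enough $n$ makes $\tilde\sigma^{A'}$ $\epsilon$-close to $\sigma^{A'}$, while $\tilde\rho^A=\rho^A$ is trivially $\epsilon$-close to $\rho^A$. (One could also keep $\tilde\rho^A=\rho^A$ fixed throughout; the statement only asks for the existence of such states.)

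For $2 \Rightarrow 1$: for each $n$, apply the hypothesis with $\epsilon = 1/n$ to obtain states $(\tilde\rho_n^A,\gamma^A)$, $(\tilde\sigma_n^{A'},\gamma^{A'})$ that are $1/n$-close to $(\rho^A,\gamma^A)$, $(\sigma^{A'},\gamma^{A'})$ respectively, together with a thermal operation $(\mathcal{E}_n^{A'\leftarrow A},\gamma^A)\in\TO(\gamma^{A'}\leftarrow\gamma^A)$ satisfying $\mathcal{E}_n^{A'\leftarrow A}(\tilde\rho_n^A)=\tilde\sigma_n^{A'}$. The set of thermal operations in $\TO(\gamma^{A'}\leftarrow\gamma^A)$ need not be closed, but the set of \emph{all} CPTP maps from $A$ to $A'$ is a compact subset of a finite-dimensional space (in the diamond norm, or equivalently via the Choi isomorphism). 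Hence the sequence $\{\mathcal{E}_n^{A'\leftarrow A}\}_n$ has a subsequence converging in diamond norm to some CPTP map $\mathcal{N}^{A'\leftarrow A}$, and by Def.~\ref{def:CTO} this limit satisfies $(\mathcal{N}^{A'\leftarrow A},\gamma^A)\in\CTO(\gamma^{A'}\leftarrow\gamma^A)$. Passing to this subsequence, we then estimate
\begin{align}
	\norm{\mathcal{N}^{A'\leftarrow A}(\rho^A)-\sigma^{A'}}_1
	&\le \norm{\mathcal{N}^{A'\leftarrow A}(\rho^A)-\mathcal{E}_n^{A'\leftarrow A}(\rho^A)}_1
	+ \norm{\mathcal{E}_n^{A'\leftarrow A}(\rho^A)-\mathcal{E}_n^{A'\leftarrow A}(\tilde\rho_n^A)}_1 \nonumber \\
	&\quad + \norm{\tilde\sigma_n^{A'}-\sigma^{A'}}_1 \nonumber \\
	&\le \norm{\mathcal{N}^{A'\leftarrow A}-\mathcal{E}_n^{A'\leftarrow A}}_\diamond + \norm{\rho^A-\tilde\rho_n^A}_1 + \norm{\tilde\sigma_n^{A'}-\sigma^{A'}}_1,
\end{align}
where in the second line I used that a CPTP map is a contraction in trace norm and that $\mathcal{E}_n^{A'\leftarrow A}(\tilde\rho_n^A)=\tilde\sigma_n^{A'}$. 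All three terms on the right go to zero along the subsequence, so $\mathcal{N}^{A'\leftarrow A}(\rho^A)=\sigma^{A'}$, which is exactly $(\rho^A,\gamma^A)\xrightarrow{\CTO}(\sigma^{A'},\gamma^{A'})$.

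The main obstacle is the $2 \Rightarrow 1$ direction: one must be careful that the target Gibbs state $\gamma^{A'}$ is fixed along the whole sequence (so that all $\mathcal{E}_n^{A'\leftarrow A}$ live in the same space and the limit lands in $\CTO(\gamma^{A'}\leftarrow\gamma^A)$ rather than some space with a drifting reference state), and that the dimensions $|A|,|A'|$ are fixed so compactness of the CPTP set applies — both hold here by the hypothesis that $(\tilde\sigma_n^{A'},\gamma^{A'})$ is $\epsilon$-close to the fixed $(\sigma^{A'},\gamma^{A'})$ with the \emph{same} $\gamma^{A'}$. The hidden subtlety worth spelling out is simply that the ancilla/bath dimension in the thermal operations $\mathcal{E}_n^{A'\leftarrow A}$ may grow with $n$, which is precisely why we cannot extract a limiting thermal operation directly and must instead use compactness of the CPTP maps and the definition of $\CTO$ as a closure.
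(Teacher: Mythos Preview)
Your proof is correct and follows essentially the same approach as the paper's: both directions are handled identically (set $\tilde\rho^A=\rho^A$ and use diamond-norm convergence for $1\Rightarrow 2$; use compactness of CPTP maps to extract a convergent subsequence for $2\Rightarrow 1$). The only cosmetic difference is that the paper writes the final step as $\mathcal{N}(\rho^A)=\lim_l \mathcal{N}_{k_l}(\tilde\rho_{k_l}^A)=\lim_l\tilde\sigma_{k_l}^{A'}=\sigma^{A'}$ in one line, whereas you spell out the same limit via an explicit triangle inequality.
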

\begin{proof}
	We begin by showing that 2 follows from 1.
	Let $(\mathcal{N}^{A'\leftarrow A},\gamma^A)$ be a closed thermal operation that converts $(\rho^A,\gamma^A)$ to  $(\sigma^{A'}, \gamma^{A'})$ and let $\{(\mathcal{N}^{A'\leftarrow A}_n,\gamma^A)\}_{n\in \mathbb{N}}$ be a sequence of thermal operations as in Def.~\ref{def:CTO} that has $(\mathcal{N}^{A'\leftarrow A},\gamma^A)$ as its limit. 
	
	Next, introduce
	\begin{align}
		\sigma_n^{A'}:=\mathcal{N}^{A'\leftarrow A}_n \left( \rho^A\right).
	\end{align}	
	By definition, the thermal operation $(\mathcal{N}^{A'\leftarrow A}_n,\gamma^A)$ then converts $(\rho^A,\gamma^A)$ to $(\sigma_n^{A'},\gamma^{A'})$ and 
	\begin{align}
		\sigma^{A'}=\lim_{n\rightarrow\infty} \sigma^{A'}_n.
	\end{align}
	This further implies that for every $\epsilon>0$, there exists an $m$ such that
	\begin{align}
		\frac{1}{2} \norm{\sigma^{A'}_m-\sigma^{A'}}_1\le \epsilon, 
	\end{align}
	which finishes the first direction by choosing
	\begin{align}
		(\tilde{\rho}^{A},\gamma^{A})&=(\rho^{A'},\gamma^{A}), \nonumber \\ 
		(\tilde{\sigma}^{A'},\gamma^{A'})&=(\sigma^{A'}_m,\gamma^{A'}).
	\end{align} 
	
	Now let $\left\{\epsilon_k\right\}_{k \in \mathbb{N}}$ be a sequence of non-negative numbers such that 
	\begin{align}
		\lim_{k\rightarrow \infty}\epsilon_k=0.
	\end{align}
	To show the reverse, assume that 2 holds. This implies that for every $k\in\mathbb{N}$, there exists a $(\tilde{\rho}_k^A, \gamma^A)$ $\epsilon_k$-close to $(\rho^A, \gamma^A)$ and a thermal operation $(\mathcal{N}^{A'\leftarrow A}_k,\gamma^A)$ that converts $(\tilde{\rho}_k^A, \gamma^A)$ to a state $(\tilde{\sigma}_k^{A'},\gamma^{A'})$ $\epsilon_k$-close to $(\sigma^{A'},\gamma^{A'})$. Since the set of quantum channels from $A$ to $A'$ is compact, there exists a converging sub-sequence $\{\mathcal{N}^{A'\leftarrow A}_{k_l}\}_l$. We can thus define
	\begin{align}
		\mathcal{N}^{A'\leftarrow A}:=\lim_{l\to \infty} \mathcal{N}^{A'\leftarrow A}_{k_l}.
	\end{align}
	By definition, $(\mathcal{N}^{A'\leftarrow A},\gamma^A)\in\CTO(\gamma^{A'}\leftarrow\gamma^A)$. Moreover, 
	\begin{align}
		\mathcal{N}^{A'\leftarrow A}(\rho^A)=\lim_{l\to \infty} \mathcal{N}^{A'\leftarrow A}_{k_l}(\tilde{\rho}_{k_l}^A)= \lim_{l\to \infty}\tilde{\sigma}_{k_l}^{A'}= \sigma^{A'}, 
	\end{align}
	which finishes the proof.
\end{proof}
In summary, in the above Lemma, we showed that the introduction of $\CTO$ allows us to alter $\rho^A$ and $\sigma^{A'}$ arbitrarily little, which is desirable from a physics perspective. In the main text, we were mainly concerned about only being able to approximate the target state arbitrarily well. But the moment we intend to convert athermality states with concatenated operations, also allowing for an arbitrarily little variation in the input state is of interest. Similarly, one could also demand that one is allowed to alter the Gibbs states slightly and adapt Def.~\ref{def:CTO} of the closed thermal operations accordingly. This leads however to the problem that it is not clear if an analog of Lem.~\ref{lem:CTOclosedConc} exists in this case, which is needed to ensure that the resulting resource theory is well-defined (in the usual framework of resource theories). 

Another convenient property of a resource theory is convexity since it allows us to use many mathematical tools and methods from convex analysis~\cite{Chitambar2019}. Indeed, the closed thermal operations are convex, see Ref.~\cite[App.~C]{Lostaglio2015}, Ref.~\cite[Prop.~4]{Ende2022}, and Ref.~\cite[Thm.~II.1]{Gour2022}.

For the task of cooling and heating that we consider in this work, we are specifically interested in quasi-classical target states, for which the following holds (see also Ref.~\cite[Thm.~ 6]{Janzing2000} and Ref.~\cite{Lipka-Bartosik2023})
\begin{lem}\label{lem:targetQuasiClassical}
	Assume $(\sigma^A,\gamma^A)$ is quasi-classical. Then
	\begin{align}
		(\rho^B,\gamma^B)&\xrightarrow{\CTO} (\sigma^A,\gamma^A) \nonumber \\
		\Leftrightarrow (\mathcal{P}_{\gamma^B}(\rho^B),\gamma^B) &\xrightarrow{\CTO}(\sigma^A,\gamma^A).
	\end{align}
\end{lem}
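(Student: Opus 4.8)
The plan is to prove Lemma~\ref{lem:targetQuasiClassical} by exploiting covariance of closed thermal operations together with the fact that a quasi-classical target is fixed by the pinching channel. The backward implication is immediate: since the pinching channel $\mathcal{P}_{\gamma^B}$ is itself a closed thermal operation (it is a thermal operation, as noted in the main text and in Ref.~\cite[Lem.~II.2]{Gour2022}), if $(\mathcal{P}_{\gamma^B}(\rho^B),\gamma^B)\xrightarrow{\CTO}(\sigma^A,\gamma^A)$ via some $\mathcal{E}\in\CTO$, then $\mathcal{E}\circ\mathcal{P}_{\gamma^B}\in\CTO(\gamma^A\leftarrow\gamma^B)$ by Lem.~\ref{lem:CTOclosedConc} and it maps $\rho^B$ to $\sigma^A$; hence $(\rho^B,\gamma^B)\xrightarrow{\CTO}(\sigma^A,\gamma^A)$.

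For the forward implication, suppose $(\mathcal{N}^{A\leftarrow B},\gamma^B)\in\CTO$ satisfies $\mathcal{N}(\rho^B)=\sigma^A$. The key structural property is \emph{covariance} with respect to the modular (time-translation) action generated by the Hamiltonians: every closed thermal operation commutes with the unitary conjugations $\rho\mapsto e^{-iH^Bt}\rho\, e^{iH^Bt}$ on the input side and $e^{-iH^At}(\cdot)e^{iH^At}$ on the output side (this is inherited from thermal operations, which are energy-conserving, and survives taking limits in diamond norm). Concretely, the pinching channels are obtained by averaging these modular unitaries, either as a discrete average over the finitely many distinct energy eigenvalue differences or as a time-average $\mathcal{P}_{\gamma}(\cdot)=\lim_{T\to\infty}\frac{1}{2T}\int_{-T}^{T}e^{-iHt}(\cdot)e^{iHt}\,dt$; I would use the discrete-average form since the spectrum is finite, writing $\mathcal{P}_{\gamma^B}=\sum_i \lambda_i\, \mathcal{U}^B_{t_i}$ and similarly for $\mathcal{P}_{\gamma^A}$ with a common set of weights/times. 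Covariance then gives $\mathcal{P}_{\gamma^A}\circ\mathcal{N}=\mathcal{N}\circ\mathcal{P}_{\gamma^B}$. Since $\sigma^A$ is quasi-classical, $\mathcal{P}_{\gamma^A}(\sigma^A)=\sigma^A$, so
\begin{align}
\mathcal{N}(\mathcal{P}_{\gamma^B}(\rho^B))=\mathcal{P}_{\gamma^A}(\mathcal{N}(\rho^B))=\mathcal{P}_{\gamma^A}(\sigma^A)=\sigma^A,
\end{align}
and $(\mathcal{N},\gamma^B)\in\CTO$ already, which proves $(\mathcal{P}_{\gamma^B}(\rho^B),\gamma^B)\xrightarrow{\CTO}(\sigma^A,\gamma^A)$.

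The step I expect to require the most care is establishing covariance of $\CTO$ (as opposed to $\TO$) and the precise averaging identity relating modular unitaries to the pinching channel when $\gamma^B$ and $\gamma^A$ may have different spectra. For $\TO$ covariance is standard (Refs.~\cite{Lostaglio2015,Gour2022}), and since the diamond-norm limit of covariant channels is covariant, $\CTO$ inherits it; I would state this as a cited fact or a short auxiliary lemma (the excerpt promises a ``Lem.~10 in the Appendix'' on exactly this covariance point, which can be invoked). The only subtlety is choosing the common averaging so that the \emph{same} operation $\frac{1}{2T}\int$ (or the same finite convex combination of times $t_i$) simultaneously implements $\mathcal{P}_{\gamma^B}$ on the input and $\mathcal{P}_{\gamma^A}$ on the output; a time-average over all of $\mathbb{R}$ does this automatically, so I would phrase the argument using the integral form and appeal to dominated convergence / linearity of $\mathcal{N}$ to push the average through, which sidesteps having to match discrete spectra. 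Everything else is bookkeeping: Lem.~\ref{lem:CTOclosedConc} for composing with the pinching channel, and the already-cited fact that $\mathcal{P}_\gamma\in\CTO$.
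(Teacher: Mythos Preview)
Your proposal is correct and follows essentially the same route as the paper: the backward direction composes with the pinching channel via Lem.~\ref{lem:CTOclosedConc}, and the forward direction uses covariance to obtain $\mathcal{P}_{\gamma^A}\circ\mathcal{N}=\mathcal{N}\circ\mathcal{P}_{\gamma^B}$ and then the quasi-classicality of $\sigma^A$. Two small remarks: the ``Lem.~10'' you plan to invoke \emph{is} this very lemma, so for covariance you should cite Ref.~\cite{Lostaglio2015b} directly (plus closedness under diamond-norm limits, exactly as the paper does); and your worry about a ``common averaging'' is unnecessary, since once $\mathcal{U}^A_t\circ\mathcal{N}=\mathcal{N}\circ\mathcal{U}^B_t$ holds for all $t$ you may average each side separately over $t$ and linearity of $\mathcal{N}$ does the rest.
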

\begin{proof}
	We first assume that $(\mathcal{P}_{\gamma^B}(\rho^B),\gamma^B) \xrightarrow{\CTO}(\sigma^A,\gamma^A)$ holds, i.e., there exists an $\mathcal{E}\in\CTO(\gamma^B\leftarrow\gamma^A)$ such that 
	\begin{align}
		\mathcal{E}(\mathcal{P}_{\gamma^B}(\rho^B))=\sigma^A.
	\end{align}
	Due to Lem.~\ref{lem:CTOclosedConc} and $\mathcal{P}_{\gamma^B}(\gamma^B)=\gamma^B$, 
	$\mathcal{G}:=\mathcal{E}\circ\mathcal{P}_{\gamma^B}\in\CTO(\gamma^B\leftarrow\gamma^A)$. Moreover, 
	\begin{align}
		\mathcal{G}(\rho^B)=\sigma^A,
	\end{align}
	i.e., $(\rho^B,\gamma^B)\xrightarrow{\CTO} (\sigma^A,\gamma^A)$.

	For the reverse, assume that $(\rho^B,\gamma^B)\xrightarrow{\CTO} (\sigma^A,\gamma^A)$, i.e., that there exists a sequence of thermal operations $\mathcal{E}_n\in\TO(\gamma^A\leftarrow\gamma^B)$ such that 
	\begin{align}
		\mathcal{E}(\rho^B):=\lim_{n\rightarrow\infty} \mathcal{E}_n(\rho^B)=\sigma^A.
	\end{align}	
	Now every thermal operation is covariant in the sense that~\cite{Lostaglio2015b}
	\begin{align}
		\mathcal{P}_{\gamma^A}\circ\mathcal{E}_n=\mathcal{E}_n\circ \mathcal{P}_{\gamma^B},
	\end{align}
	and since the set of covariant operations is closed, also $\mathcal{E}$ is covariant. Therefore
	\begin{align}\label{eq:offdiagCool}
		\sigma^A=&\mathcal{P}_{\gamma^A}(\sigma^A)= \mathcal{P}_{\gamma^A} \circ  \mathcal{E}(\rho^B) = \mathcal{E}\circ \mathcal{P}_{\gamma^B}(\rho^B), 
	\end{align}
	and $(\mathcal{P}_{\gamma^B}(\rho^B),\gamma^B) \xrightarrow{\CTO}(\sigma^A,\gamma^A)$.
\end{proof}

\section{Proofs of the results in the main text}\label{app:Proofs}

In the following, we provide the proofs of the results presented in the main text.
Since all of them are related to the cooling or heating of a system $A$, it is useful to expand the Hamiltonian of $A$ explicitly as
\begin{align}
	H^A=\sum_{i=1}^{|A|} h_i\ketbra{i}{i}:\quad h_1\le h_2\le...\le h_{|A|}.
\end{align}

We start with some general observations concerning cooling and heating according to the first interpretation mentioned in the main text, i.e., the creation of 
\begin{align}
	(\tilde{\gamma}^A,\gamma^A)
\end{align}
given an initial resource $(\rho^R,\gamma^R)$ and access to $\CTO$.
This is possible iff 
\begin{align}
	(\rho^R,\gamma^R) &\xrightarrow{\CTO} (\tilde{\gamma}^A,\gamma^A),
\end{align}
which is according to Lem.~\ref{lem:targetQuasiClassical} equivalent to
\begin{align}
	(\mathcal{P}_{\gamma^R}(\rho^R),\gamma^R) &\xrightarrow{\CTO} (\tilde{\gamma}^A,\gamma^A),
\end{align}
i.e., the problem can be reduced to conversions between quasi-classical states and therefore probability vectors. Importantly, Ref.~\cite{Janzing2000} (see also Refs.~\cite{Horodecki2013, Shiraishi2020}) showed that between quasi-classical states, the conversion power of $\CTO$ is equal to the conversion power of $\GPO$. 
From here on, we denote by $\vec{r}^R, \vec{g}^R, \tilde{\vec{g}}^A,\vec{g}^A$ the probability vectors corresponding to $\mathcal{P}_{\gamma^R}(\rho^R), \gamma^R,\tilde{\gamma}^A,\gamma^A$, respectively, i.e.,
\begin{align}
	\gamma^A=\sum_i g_i^A \ketbra{i}{i}^A:\quad g_i^A=\frac{e^{-\beta h_i}}{Z(\beta)}, \nonumber \\
	\tilde{\gamma}^A=\sum_i \tilde{g}_i^A \ketbra{i}{i}^A:\quad \tilde{g}_i^A=\frac{e^{-\tilde{\beta}h_i}}{Z(\tilde{\beta})}.
\end{align}	
According to Eq.~\eqref{eq:relMajo} in the main text and the following discussion, the question if we can use $(\rho^R,\gamma^R)$ to cool/heat system $A$ to a specific $\tilde{\gamma}^A$ is therefore equivalent to the question if
\begin{align}
	\alpha_y(\vec{r}^R,\vec{g}^R)\ge\alpha_y(\tilde{\vec{g}}^A,\vec{g}^A) \ \forall y\in[0,1].
\end{align}
Due to convexity, and as mentioned in Ref.~\cite{Renes2016}, this is in turn equivalent to
\begin{align}\label{eq:conditionsTempChange}
	\alpha_{y_k^\star(\tilde{\vec{g}}^A,\vec{g}^A)}(\vec{r}^R,\vec{g}^R)\ge x_k^\star(\tilde{\vec{g}}^A,\vec{g}^A)\ \forall k\in[|A|],
\end{align}
where $x_k^\star$ and $y_k^\star$ were defined in Eq.~\eqref{eq:elbows} in the main text.

From here on we treat heating and cooling separately, starting with the latter:  
For $\tilde{\beta}\ge \beta>0$
\begin{align}\label{eq:orderingGibbs}
	\frac{\tilde{g}_i^A}{g_i^A}=\frac{Z(\beta)}{Z(\tilde{\beta})}  e^{-h_i(\tilde{\beta}-\beta)}
\end{align}
is already ordered non-increasingly due to the assumed order of the $h_i$ and
\begin{align}\label{eq:elbowsCooling}
	(x^\star_k(\tilde{\vec{g}}^A,\vec{g}^A),y^\star_k (\tilde{\vec{g}}^A,\vec{g}^A))=&\left(\sum_{j=1}^{k}\tilde{g}^A_{j},\sum_{j=1}^{k}g_{j}^A\right)
	=\left(\norm{\tilde{\vec{g}}^A}_{(k)},\norm{\vec{g}^A}_{(k)}\right),
\end{align}
where we used Ky-Fan norms as introduced in Eq.~\eqref{eq:KyFanNorm}. Importantly, according to the above equation, the $y^\star_k (\tilde{\vec{g}}^A,\vec{g}^A)$ only depend on the Gibbs state of the target system at the fixed background temperature and are thus independent of $\tilde{\beta}$ (the inverse temperature to which we attempt to cool our target system), but the $x^\star_k (\tilde{\vec{g}}^A,\vec{g}^A)$ increase with increasing $\tilde{\beta}$ because $\norm{\tilde{\vec{g}}^A}_{(k)}$ increases in that case.
Plugging Eq.~\eqref{eq:elbowsCooling} into Eq.~\eqref{eq:conditionsTempChange}, we find the conditions 
\begin{align}\label{eq:coolingCondition}
	\norm{\tilde{\vec{g}}^A}_{(k)}\le \alpha_{ \norm{\vec{g}^A}_{(k)}}(\vec{r}^R,\vec{g}^R)=:\alpha_k\ \forall k\in[|A|].
\end{align}
The right-hand sides of these inequalities are again independent of $\tilde{\beta}$ but depend on the initial resource $(\rho^R,\gamma^R)$ as well as the background inverse temperature $\beta$. Now recall that the function $y \mapsto\alpha_y(\vec{r}^R,\vec{g}^R)$ describes the $x$-values of the lower boundary of the testing region associated with the initial resource $(\rho^R,\gamma^R)$ as a function of the $y$-values. Also recall that this lower boundary is obtained by connecting the elbows $(x_k^\star(\vec{r}^R,\vec{g}^R),y_k^\star(\vec{r}^R,\vec{g}^R))$ with straight lines, see Fig.~\ref{fig:VisualizationProof}. For $y_k^\star(\vec{r}^R,\vec{g}^R)\le y < y_{k+1}^\star(\vec{r}^R,\vec{g}^R)$, it thus holds that 
\begin{align} \nonumber
	\alpha_y(\vec{r}^R,\vec{g}^R)=x_{k}^\star(\vec{r}^R,\vec{g}^R)+\frac{x^\star_{k+1}(\vec{r}^R,\vec{g}^R)-x_{k}^\star(\vec{r}^R,\vec{g}^R)}{y^\star_{k+1}(\vec{r}^R,\vec{g}^R)-y_{k}^\star(\vec{r}^R,\vec{g}^R)} \left(y-y_{k}^\star(\vec{r}^R,\vec{g}^R)\right),
\end{align}
since this describes the straight line through the two points  $(x_k^\star(\vec{r}^R,\vec{g}^R),y_k^\star(\vec{r}^R,\vec{g}^R))$ and $(x_{k+1}^\star(\vec{r}^R,\vec{g}^R),y_{k+1}^\star(\vec{r}^R,\vec{g}^R))$. By choosing $l_k$ such that $y_{l_k}^\star(\vec{r}^R,\vec{g}^R)\le\norm{\vec{g}^A}_{(k)}< y^\star_{l_{k}+1}(\vec{r}^R,\vec{g}^R)$, it thus holds that
\begin{align}
	&\alpha_k= \alpha_{ \norm{\vec{g}^A}_{(k)}}(\vec{r}^R,\vec{g}^R)=x_{l_k}^\star(\vec{r}^R,\vec{g}^R)+\frac{x^\star_{l_{k}+1}(\vec{r}^R,\vec{g}^R)-x_{l_k}^\star(\vec{r}^R,\vec{g}^R)}{y^\star_{l_{k}+1}(\vec{r}^R,\vec{g}^R)-y_{l_k}^\star(\vec{r}^R,\vec{g}^R)} \left(\norm{\vec{g}^A}_{(k)}-y_{l_k}^\star(\vec{r}^R,\vec{g}^R)\right),
\end{align}
i.e., we expressed $\alpha_k$ in terms of $\norm{\vec{g}^A}_{(k)}$ and the elbows associated with the initial resource. Importantly, this shows that the $\alpha_k$ can be directly calculated once the initial resource and $\gamma^A$ are specified.

For qubits, one can now determine the maximal $\tilde{\beta}$ explicitly: Excluding the trivial case of a qubit with energy gap $E=0$, the only relevant condition is that $\tilde{g}_1^A\le\alpha_1$. Noting that $\tilde{g}_1^A$ increases with increasing $\tilde{\beta}$, it follows that the optimal $\tilde{g}_1^A$ is given by $\tilde{g}_1^A=\alpha_1$ and since
\begin{align}
	\tilde{g}_1^A=\frac{1}{1+e^{-E\tilde{\beta}}},
\end{align}
we conclude that
\begin{align}\label{eq:SMbetaMax}
	\tilde{\beta}_{\max}=\frac{1}{E}\ln\left(\frac{\alpha_1}{1-\alpha_1}\right).
\end{align}
\begin{figure}[tb!]%
	\centering
	\includegraphics[width=0.6\linewidth]{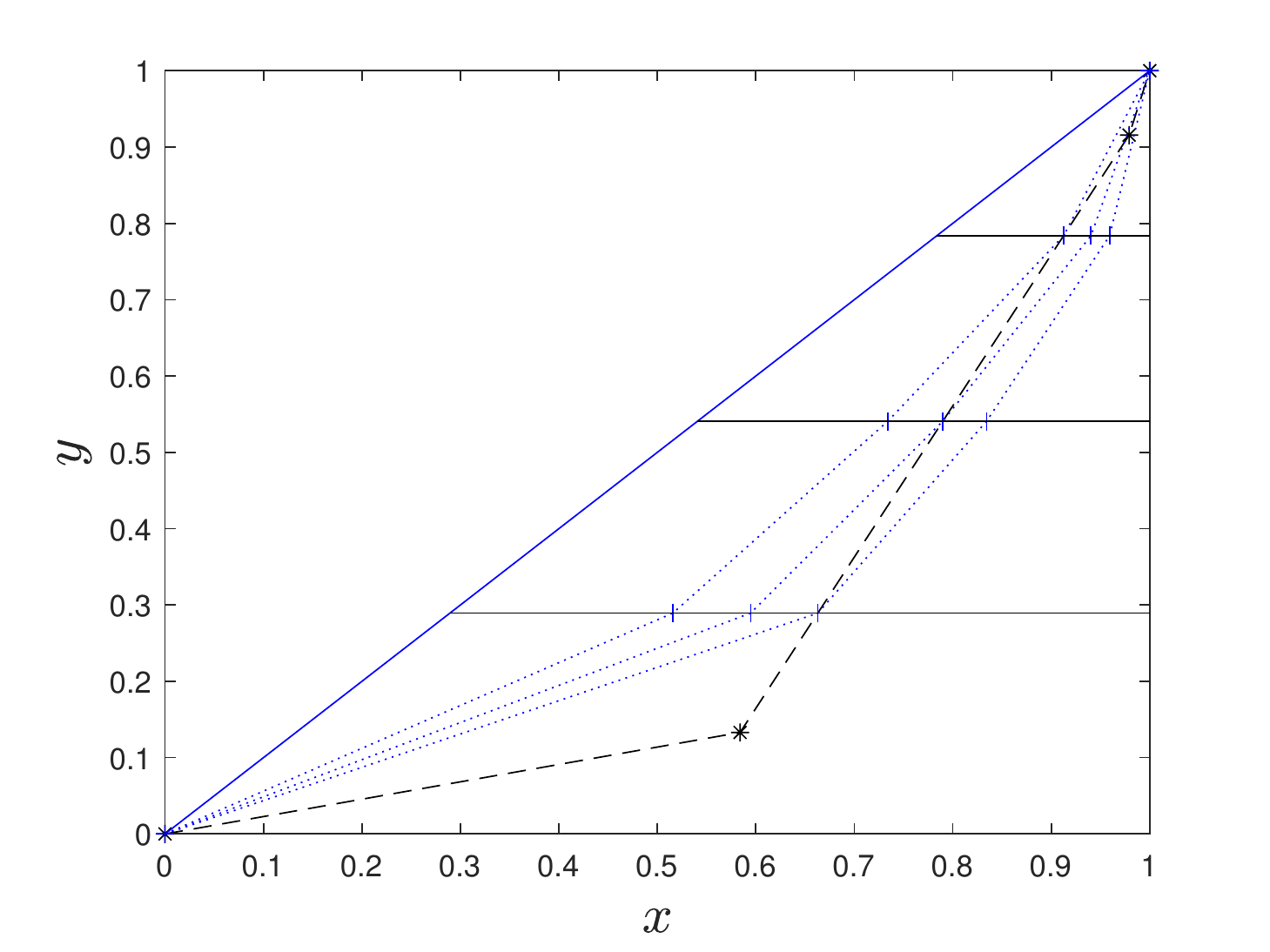}
	\caption{Visualization of the proof concerning the cooling of arbitrary systems. The dashed black line represents the lower boundary of the testing region associated with the initial resource which is fully characterized by its elbows represented by the black * symbols. If we do not cool the target system, its testing region is the straight line connecting the origin and $(1,1)$ (solid, blue). Once we start cooling, the elbows associated with the target system move along the solid black lines, since the $y^\star_k(\tilde{\vec{g}}^A,\vec{g}^A)$ are independent of $\tilde{\beta}$, but the $x^\star_k(\tilde{\vec{g}}^A,\vec{g}^A)$ are not, compare Eq.~\eqref{eq:elbowsCooling}. Importantly, the $x^\star_k(\tilde{\vec{g}}^A,\vec{g}^A)$ increase monotonically in $\tilde{\beta}$. The $\tilde{\beta}_k$ correspond to the $k$-th elbow sitting on the lower boundary associated with the initial resource, leading to different lower boundaries of the cooled target system (blue, dotted). These are the maximal $\tilde{\beta}$ that satisfy the $k$-th condition in Eq.~\eqref{eq:coolingCondition}. If $\beta\le\tilde{\beta}<\tilde{\beta}_k$, the $k$-th condition in Eq.~\eqref{eq:coolingCondition} is satisfied too (due to the monotonicity). Therefore $\tilde{\beta}_{\max}=\min_{k\in[n-1]} \tilde{\beta}_k$ is the largest $\tilde{\beta}$ to which we can cool the target system given the initial resource because it is the largest $\tilde{\beta}$ such that the lower boundary of the cooled target system is fully above the lower boundary of the initial resource.
	} \label{fig:VisualizationProof}
\end{figure}%	

Excluding again the case of a completely degenerate Hamiltonian, in the general case, for $1\le k<n:=|A|$, let $\tilde{\beta}_k\ge \beta$ be such that
\begin{align}
	\norm{\tilde{\vec{g}}^A(\tilde{\beta}_k)}_{(k)}=\alpha_k.
\end{align}
From the geometry of the testing region, it is apparent that such $\tilde{\beta}_k$ always exist and are unique (see Fig.~\ref{fig:VisualizationProof}). However, for general Hamiltonians, these equations need to be solved numerically (which is very simple due to monotonicity).
Then
\begin{align}
	\tilde{\beta}_{\max}=\min_{k\in[n-1]} \tilde{\beta}_k.
\end{align}
This can be seen by the following argument (which is essentially a geometric argument on the testing region, see Fig.~\ref{fig:VisualizationProof} for a visualization): First, we notice that for $k=n$, Eq.~\eqref{eq:coolingCondition} is always satisfied, so we only consider $1\le k<n$. It is straightforward to see that $||\tilde{\vec{g}}^A(\tilde{\beta})||_{(k)}$, the probability to find the system in one of its $k$ lowest energy eigenstates, increases iff $\tilde{\beta}$ increases. The $\tilde{\beta}_k$ are thus the highest inverse temperatures that satisfy the $k$-th condition. Since each individual condition is also satisfied for all inverse temperatures between $\tilde{\beta}_k$ and $\beta$, we arrive at Eq.~\eqref{eq:minT} in the main text.

Moving on to the heating case, we note that for $\tilde{\beta}\le \beta, \beta>0$, $\tilde{g}_i/g_i$ is ordered non-decreasingly (see Eq.~\eqref{eq:orderingGibbs}), and therefore
\begin{align}\label{eq:elbowsHeating}
	(x^\star_k&(\tilde{\vec{g}}^A,\vec{g}^A),y^\star_k (\tilde{\vec{g}}^A,\vec{g}^A))=\left(\sum_{j=n-k+1}^{n}\tilde{g}_{j},\sum_{j=n-k+1}^{n}g_{j}\right)=\left(\sum_{j=n-k+1}^{n}\tilde{g}_{j},1-\norm{\vec{g}^A}_{(n-k)}\right),
\end{align}
where we understand that $\norm{\vec{g}}_{(0)}=0$.
Plugging into Eq.~\eqref{eq:conditionsTempChange}, the analogue of Eq.~\eqref{eq:coolingCondition} is therefore 
\begin{align}
	\sum_{j=n-k+1}^{n}\tilde{g}_{j}\le \alpha_{1- \norm{\vec{g}^A}_{(n-k)}}(\vec{r}^R,\vec{g}^R)=:\tilde{\alpha}_{k}\ \forall k\in [n], 
\end{align}
where, with $\tilde{l}_k$ such that $y_{\tilde{l}_k}^\star(\vec{r}^R,\vec{g}^R)\le1-\norm{\vec{g}^A}_{(n-k)}< y^\star_{\tilde{l}_{k}+1}(\vec{r}^R,\vec{g}^R)$,
\begin{align}
	\tilde{\alpha}_k=&x_{\tilde{l}_k}^\star(\vec{r}^R,\vec{g}^R) +\frac{x^\star_{\tilde{l}_{k}+1}(\vec{r}^R,\vec{g}^R)-x_{\tilde{l}_k}^\star(\vec{r}^R,\vec{g}^R)}{y^\star_{\tilde{l}_{k}+1}(\vec{r}^R,\vec{g}^R)-y_{\tilde{l}_k}^\star(\vec{r}^R,\vec{g}^R)} \left(1-\norm{\vec{g}^A}_{(n-k)}-y_{\tilde{l}_k}^\star(\vec{r}^R,\vec{g}^R)\right).
\end{align}

Considering the qubit-heating case, we notice that the only non-trivial condition is
\begin{align}
	\tilde{g}_2^A\le\tilde{\alpha}_1.
\end{align}
This time, lower $\tilde{\beta}$ means larger $\tilde{g}_2^A$, so we choose $\tilde{g}_2^A=\tilde{\alpha}_1$. Here, in contrast to the cooling case, we need to consider the following: since $\tilde{g}_2^A$ is the population of the qubit's higher energy level, it cannot be larger than $1/2$ unless we allow for negative temperatures associated with population inversions. We will see later that this is indeed useful, and therefore include negative $\tilde{\beta}$. 
The lowest inverse temperature $\tilde{\beta}$ that we can reach is given by 
\begin{align}\label{eq:TmaxQubit}
	\tilde{\beta}_{\min} =\frac{1}{E} \ln\left(\frac{1-\tilde{\alpha}_1}{\tilde{\alpha}_1}\right).
\end{align}	
Allowing for negative temperatures is also the reason why in Eq.~\eqref{eq:elbowsHeating}, we cannot replace  $\sum_{j=n-k+1}^{n}\tilde{g}_{j}$ with $1-||\tilde{\vec{g}}^A(\beta_k)||_{(n-k)}$.

The heating of general systems can be treated in analogy to the cooling case too: Let $\tilde{\beta}_k$ be such that 
\begin{align}
	\sum_{j=n-k+1}^{n}\tilde{g}_{j}(\tilde{\beta}_k)=\tilde{\alpha}_k.
\end{align}
The minimal inverse temperature to which we can heat is then given by
\begin{align}
	\tilde{\beta}_{\min}=\max_{k\in[n-1]} \tilde{\beta}_k.
\end{align}
In Fig.~\ref{fig:Qubit} and Fig.~\ref{fig:HigherDim}, we plot the lower boundaries of the testing regions corresponding to these results.
\begin{figure}[tb!]%
	\centering
	\includegraphics[width=0.5\linewidth]{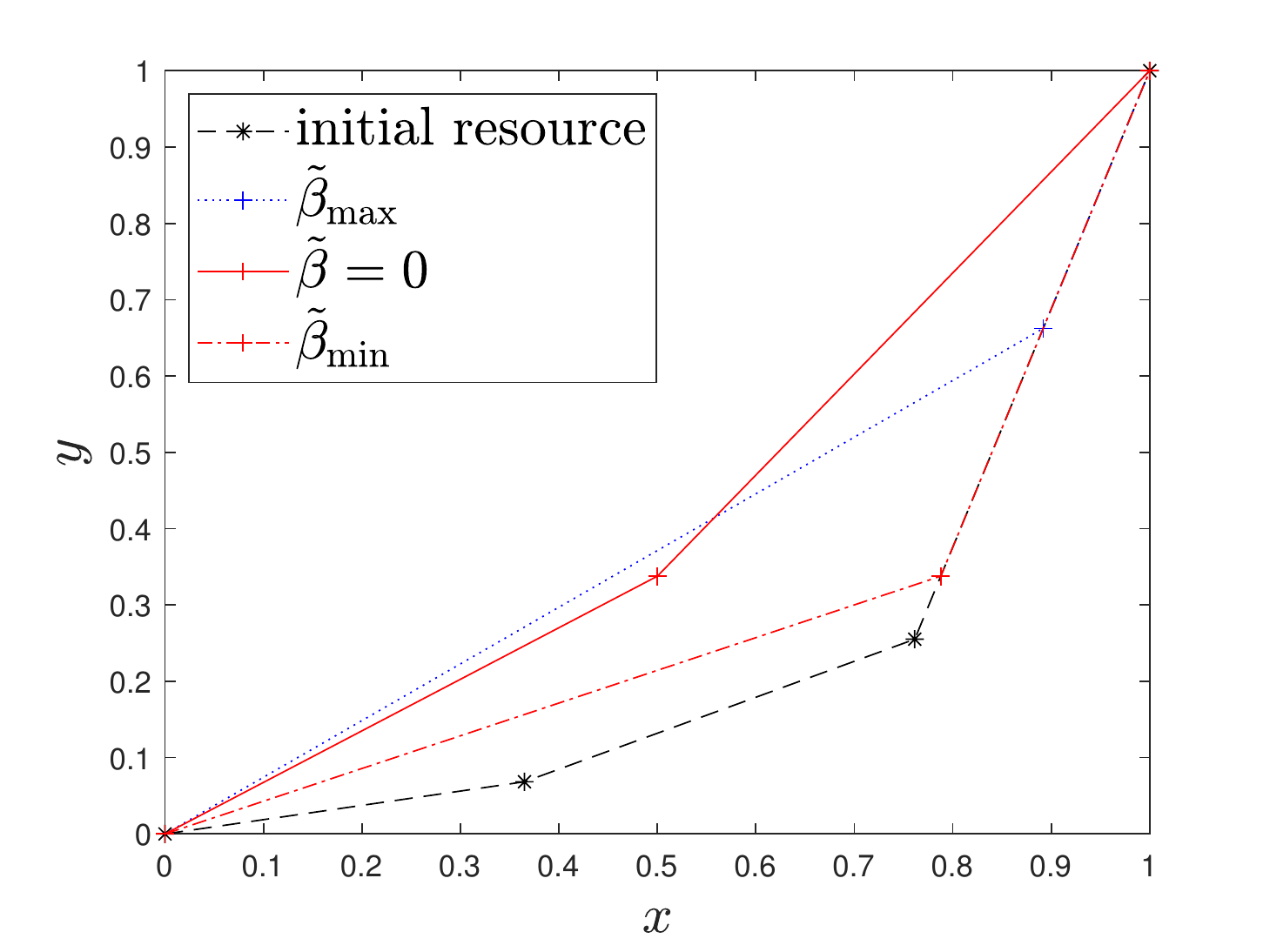}
	\caption{Lower boundaries of testing regions relevant for the cooling and heating of a qubit. For a given initial athermality state (black, dashed) with two non-trivial corresponding elbows, i.e., of dimension three, the minimal temperature ($\tilde{\beta}_{\max}$) of the qubit that can be reached corresponds to the blue dotted line. It is possible to heat the qubit to infinite temperature ($\tilde{\beta}=0$, red, solid), and the lowest reachable $\tilde{\beta}$, i.e., $\tilde{\beta}_{\min}$ corresponds to a population inversion (red, dashed-dotted).} \label{fig:Qubit}
\end{figure}%	
\begin{figure}[tb!]%
	\centering
	\includegraphics[width=0.5\linewidth]{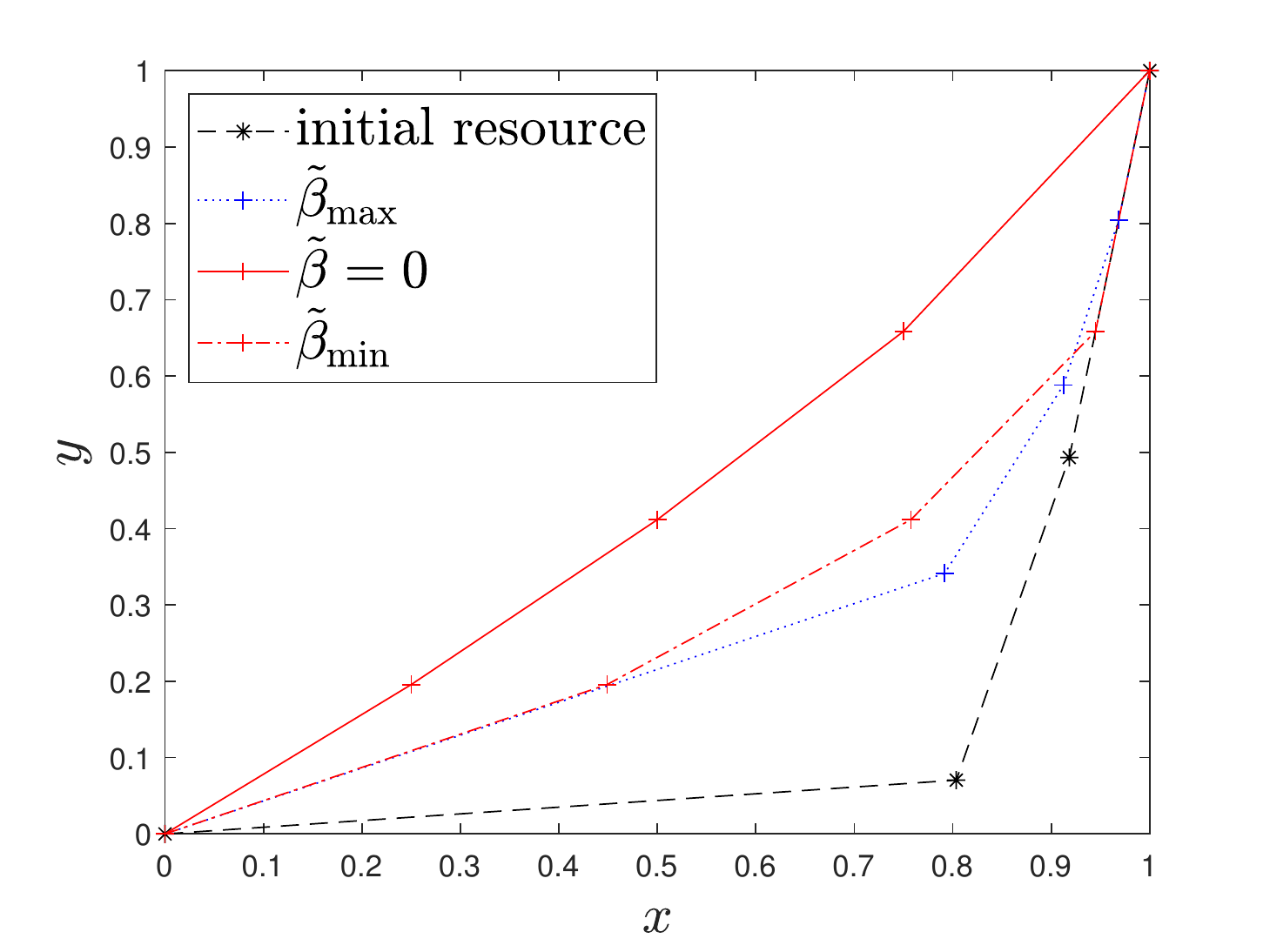}
	\caption{Lower boundaries of testing regions relevant for the cooling and heating of a system of dimension four. For a given initial athermality state (black, dashed) of dimension three, the minimal temperature ($\tilde{\beta}_{\max}$) of the target system that can be reached corresponds to the blue dotted line. It is possible to heat the target system to infinite temperature ($\tilde{\beta}=0$, red, solid), and the lowest reachable $\tilde{\beta}$, i.e., $\tilde{\beta}_{\min}$ corresponds to a population inversion (red, dashed-dotted).}\label{fig:HigherDim}
\end{figure}%
In summary, we have proven Thm.~\ref{thm:CoolingHeating} from the main text and on the way established some basic results which we also use in the following proofs, including the proof of Prop.~\ref{prop:altCooling}, which we restate for readability.

\setcounter{theorems}{2}
\begin{prop}%\label{prop:altCooling}
	With $\vec{r}^R, \vec{g}^R,\vec{g}^A$ denoting the probability vectors corresponding to $\mathcal{P}_{\gamma^R}(\rho^R), \gamma^R,\gamma^A$, respectively,
	\begin{align}\label{eq:Omax}
		O_{\max}(\rho^R,\gamma^R,\gamma^A)=\alpha_{\tr[\Pi^A\gamma^A]}(\vec{r}^R,\vec{g}^R).
	\end{align}
\end{prop}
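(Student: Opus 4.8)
The plan is to reduce the optimization defining $O_{\max}$ to a two-outcome relative-majorization problem and then read the answer off the geometry of the testing region of $(\vec{r}^R,\vec{g}^R)$. Write $g_0:=\tr[\Pi^A\gamma^A]$. Since $\Pi^A$ projects onto the lowest-energy eigenspace of $H^A$, it is precisely the spectral projector of $\gamma^A$ belonging to its largest eigenvalue; in particular $[\Pi^A,\gamma^A]=0$ and, in the notation $\gamma^A=\sum_x a_x\Pi_x$, one has $\Pi^A\Pi_x=\delta_{x,1}\Pi^A$.

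\textbf{Reduction to quasi-classical targets.} First I would argue the maximizer may be taken quasi-classical and diagonal in the energy eigenbasis. If $(\rho^R,\gamma^R)\xrightarrow{\CTO}(\tau^A,\gamma^A)$, then since $\mathcal{P}_{\gamma^A}\in\CTO(\gamma^A\leftarrow\gamma^A)$ and $\CTO$ is closed under composition (Lemma~\ref{lem:CTOclosedConc}), also $(\rho^R,\gamma^R)\xrightarrow{\CTO}(\mathcal{P}_{\gamma^A}(\tau^A),\gamma^A)$; a short computation using $\Pi^A\Pi_x=\delta_{x,1}\Pi^A$ and cyclicity of the trace gives $\tr[\Pi^A\mathcal{P}_{\gamma^A}(\tau^A)]=\tr[\Pi^A\tau^A]$. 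Conjugating further by a unitary block-diagonal with respect to the energy eigenspaces — an energy-preserving unitary, hence a thermal operation, which commutes with $\Pi^A$ — we may assume $\tau^A=\sum_i t_i\ketbra{i}{i}$ with $\tr[\Pi^A\tau^A]=\sum_{i:\,h_i=h_1}t_i=:p$. By Lemma~\ref{lem:targetQuasiClassical}, the fact that on quasi-classical states $\CTO$ and $\GPO$ have the same conversion power~\cite{Janzing2000}, and the characterization of $\GPO$-convertibility of quasi-classical states by relative majorization~\eqref{eq:relMajo}, $(\rho^R,\gamma^R)\xrightarrow{\CTO}(\tau^A,\gamma^A)$ becomes equivalent to $(\vec{r}^R,\vec{g}^R)\succ(\vec{t},\vec{g}^A)$.

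\textbf{Coarse-graining and refinement.} Next I would pass to a binary problem. Let $\Lambda:\mathbb{R}^{|A|}\to\mathbb{R}^2$ be the column-stochastic map adding the ground-space components into the first outcome and the remaining ones into the second; since relative majorization survives composition with column-stochastic matrices, $(\vec{r}^R,\vec{g}^R)\succ(\vec{t},\vec{g}^A)$ forces $(\vec{r}^R,\vec{g}^R)\succ\bigl((p,1-p),(g_0,1-g_0)\bigr)$. Conversely, let $S:\mathbb{R}^2\to\mathbb{R}^{|A|}$ have first column equal to $g_i^A/g_0$ on the ground space and $0$ elsewhere and second column equal to $g_i^A/(1-g_0)$ on the excited space and $0$ elsewhere; then $S$ is column stochastic, $S(g_0,1-g_0)=\vec{g}^A$, and $\vec{t}:=S(p,1-p)$ is a probability vector with $\sum_{i:\,h_i=h_1}t_i=p$. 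Hence $(\vec{r}^R,\vec{g}^R)\succ\bigl((p,1-p),(g_0,1-g_0)\bigr)$ already yields $(\vec{r}^R,\vec{g}^R)\succ(\vec{t},\vec{g}^A)$, so every such $p$ is attainable. Therefore
\[
 O_{\max}(\rho^R,\gamma^R,\gamma^A)=\max\bigl\{p:\ (\vec{r}^R,\vec{g}^R)\succ\bigl((p,1-p),(g_0,1-g_0)\bigr)\bigr\}.
\]

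\textbf{Evaluating the maximum.} Finally I would compute this maximum with the geometric criterion from below Eq.~\eqref{eq:elbows}. Since preparing $\gamma^A$ is free, $p=g_0$ is attainable, which forces the optimal $p$ to satisfy $p\ge g_0$ (and, applying the criterion to $(\vec{r}^R,\vec{g}^R)\succ((g_0,1-g_0),(g_0,1-g_0))$, also $\alpha_{g_0}(\vec{r}^R,\vec{g}^R)\ge g_0$). For $p\ge g_0$ the pair $\bigl((p,1-p),(g_0,1-g_0)\bigr)$ has a single non-trivial elbow at $(p,g_0)$, so its testing-region boundary is the concave piecewise-linear function through $(0,0)$, $(p,g_0)$, $(1,1)$. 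Because $y\mapsto\alpha_y(\vec{r}^R,\vec{g}^R)$ is concave and equals $0$ at $y=0$ and $1$ at $y=1$, the condition $\alpha_y(\vec{r}^R,\vec{g}^R)\ge\alpha_y\bigl((p,1-p),(g_0,1-g_0)\bigr)$ for all $y$ collapses to the single inequality $\alpha_{g_0}(\vec{r}^R,\vec{g}^R)\ge p$ (equivalently, the ``$m=2$ inequalities'' form~\cite{Renes2016}, the $y=1$ one being trivial). Hence the maximum is $p=\alpha_{g_0}(\vec{r}^R,\vec{g}^R)=\alpha_{\tr[\Pi^A\gamma^A]}(\vec{r}^R,\vec{g}^R)$, which is Eq.~\eqref{eq:Omax}. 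The qubit case of Thm.~\ref{thm:CoolingHeating} is then recovered since there $\Pi^A=\ketbra{1}{1}$ and $g_0=g_1^A$.

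\textbf{Main obstacle.} The delicate step is the refinement direction: showing that merely solving the coarse-grained binary relative majorization already suffices to construct a genuine target state $\tau^A$ on $A$ with exactly the right ground-state weight, which relies on $\gamma^A$ restricting proportionally within each of its eigenspaces, and on handling the possibly degenerate ground space correctly; the companion subtlety is the reduction of the continuum of elbow inequalities to the single one at $y=g_0$ via concavity. Everything else is bookkeeping built on Lemmas~\ref{lem:CTOclosedConc} and~\ref{lem:targetQuasiClassical} and the geometric picture already used in the proof of Thm.~\ref{thm:CoolingHeating}.
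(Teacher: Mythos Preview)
Your proof is correct and lands on exactly the same conclusion by essentially the same mechanism as the paper: reduce to quasi-classical targets, identify a canonical optimal target whose excited-space populations are proportional to those of $\gamma^A$, and then observe that the testing-region comparison collapses to the single elbow at height $g_0=\tr[\Pi^A\gamma^A]$.

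The organizational difference is worth noting. The paper stays in $\mathbb{R}^{|A|}$ throughout: it first averages over the ground space with random energy-preserving unitaries to force $t_1=\cdots=t_d$, then builds an explicit block-stochastic matrix $E=\mathrm{diag}(I_d,\tilde E)$ to push any feasible $\vec t$ to the canonical form $\tilde{\vec t}^A(t_1)$, and finally checks that only the $k=d$ elbow matters. You instead factor through $\mathbb{R}^2$ via the coarse-graining $\Lambda$ and the refinement $S$; your $S(p,1-p)$ is precisely the paper's $\tilde{\vec t}^A$ (with $p=dt_1^A$), so the two constructions produce the identical canonical target. Your route is a bit more conceptual---it makes transparent that the problem is really a two-outcome relative-majorization question---and it absorbs the ground-space averaging step automatically, since $\Lambda$ does not see how mass is distributed inside the ground space. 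The paper's route is more explicit and perhaps easier to visualize against Fig.~\ref{fig:altCooling}. Either way, the single nontrivial inequality $\alpha_{g_0}(\vec r^R,\vec g^R)\ge p$ is the same, and both arguments use $\alpha_{g_0}\ge g_0$ (equivalently, that preparing $\gamma^A$ is free) to ensure the elbow ordering $p\ge g_0$ is the relevant one.
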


\begin{proof}
	Remember the definition of $O_{\max}$, i.e.,
	\begin{align}
		O_{\max}&(\rho^R,\gamma^R,\gamma^A)	=\max\{\tr\left[\Pi^A\tau^A\right]: (\rho^R,\gamma^R)\xrightarrow{\CTO} (\tau^A,\gamma^A) \},
	\end{align}
	and let $\tau_{\star}^A$ be an optimizer of this optimization problem.
	Since $\tr\left[\Pi^A\tau^A_{\star}\right]=\tr\left[\Pi^A\mathcal{P}_{\gamma^A}(\tau^A_{\star})\right]$, we can restrict ourselves to quasi-classical target states and therefore assume without loss of generality that $(\rho^R,\gamma^R)$ is quasi-classical too (see Lem.~\ref{lem:targetQuasiClassical}). Assuming that the ground state has degeneracy $d$, this implies that there exists an orthonormal basis $\{\ket{i}\}$ of $A$ such that
	\begin{align}
		\tau_{\star}^A=&\sum_{i=1}^A t_i^A\ketbra{i}{i},  \\
		H^A=&\sum_{i=1}^A h_i^A\ketbra{i}{i}:\ h_1=...=h_d<h_{d+1}\le...\le h_{|A|}\nonumber
	\end{align}
	and thus $O_{\max}(\rho^R,\gamma^R,\gamma^A)=\sum_{i=1}^dt_i^A$. Next, we notice that we can further restrict ourselves to the case $t_1^A=t_2^A=...=t_d^A$: Any unitary acting non-trivially only on the support of $\Pi^A$ is a thermal operation. By applying such unitaries uniformly at random (which is in CTO, since CTO is convex, see Ref.~\cite[App.~C]{Lostaglio2015}, Ref.~\cite[Prop.~4]{Ende2022}, and Ref.~\cite[Thm.~II.1]{Gour2022}), for $i\in{1,...,d}$, we map $t_i^A$ to $\sum_{i=1}^dt_i^A/d$. Obviously, this does not change $O_{\max}$.
	
	For the moment, assume $O_{\max}<1$ and let 
	\begin{align}
		\vec{t}^{\tilde{A}}=\frac{1}{1-dt_1^A}
		\begin{pmatrix}
			t_{d+1}^A \\
			t_{d+2}^A \\
			\vdots \\
			t_{|A|}^A
		\end{pmatrix}, \ \vec{g}^{\tilde{A}}=\frac{1}{1-dg_1^A}
		\begin{pmatrix}
			g_{d+1}^A \\
			g_{d+2}^A \\
			\vdots \\
			g_{|A|}^A
		\end{pmatrix},
	\end{align}
	which by construction are valid probability vectors. Since $(\vec{t}^{\tilde{A}},\vec{g}^{\tilde{A}})\succ(\vec{g}^{\tilde{A}},\vec{g}^{\tilde{A}})$, by definition, there exists a column stochastic matrix $\tilde{E}$ such that $\tilde{E}\vec{t}^{\tilde{A}}=\vec{g}^{\tilde{A}}$ and $\tilde{E}\vec{g}^{\tilde{A}}=\vec{g}^{\tilde{A}}$. Then
	\begin{align}
		E=\begin{pmatrix}
			I_d & 0 \\
			0 & \tilde{E}
		\end{pmatrix},
	\end{align}
	where $I_d$ is the identity matrix of dimension $d$, is column stochastic too, and, with $\lambda=\frac{1-dt_1^A}{1-dg_1^A}$,
	\begin{align}
		E\vec{g}^A=& E \begin{pmatrix}
			g_1^A \\
			\vdots \\
			g_d^A \\
			(1-dg_1^A)\vec{g}^{\tilde{A}}
		\end{pmatrix}=\vec{g}^A, \\	
		E\vec{t}^A=&E \begin{pmatrix}
			t_1^A \\
			\vdots\\
			t_d^A\\
			(1-dt_1^A)\vec{t}^{\tilde{A}}
		\end{pmatrix}=
		\begin{pmatrix}
			t_1^A \\
			\vdots\\
			t_d^A\\
			(1-dt_1^A)\vec{g}^{\tilde{A}} 
		\end{pmatrix}
		=
		\begin{pmatrix}
			t_1^A \\
			\vdots\\
			t_d^A\\
			\lambda g_{d+1}^A \\
			\lambda g_{d+2}^A \\
			\vdots \\
			\lambda g_{|A|}^A
		\end{pmatrix}=:\tilde{\vec{t}}^A(t_1^A),
	\end{align}
	where we used in the notation that $t_1^A=t_2^A=...=t_d^A$.
	This implies that $(\vec{t}^{A},\vec{g}^{A})\succ(\tilde{\vec{t}}^{A}(t_1^A),\vec{g}^{A})$. Moreover, $O_{\max}(\rho^R,\gamma^R,\gamma^A)=d\tilde{t}_1^A$, and thus we can and will restrict our considerations to targets of the form $\tilde{\vec{t}}^A(t_1^A)$, which also includes the special case $O_{\max}(\rho^R,\gamma^R,\gamma^A)=1$ again. 
	
	Since 
	\begin{align}
		(\vec{r}^R,\vec{g}^R)\succ (\vec{g}^A,\vec{g}^A),
	\end{align}
	we find  that $O_{\max}(\rho^R,\gamma^R,\gamma^A)=d\tilde{t}_1^A\ge dg_1^A$. We thus have that $\tilde{t}_i^A/g_i^A$ is ordered non-increasingly for an ideal $\tilde{\vec{t}}^A$, which implies that
	\begin{align}\label{eq:overlapMax}
		&\left(x_k^\star(\tilde{\vec{t}}^A,\vec{g}^A),y_k^\star(\tilde{\vec{t}}^A,\vec{g}^A)\right)=\left(\tilde{t}_1^A \min\{k,d\}+\lambda\sum_{j=d+1}^{k}g_j^A, \sum_{j=1}^{k}g_j^A\right).
	\end{align}
	
	Now we remember that between quasi-classical states, the conversion power of $\CTO$ is equal to the conversion power of $\GPO$~\cite{Janzing2000}. As in the previous proofs (see Eq.~\eqref{eq:relMajo} and the following discussion as well as Ref.~\cite{Renes2016}), this implies that
	\begin{align}
		&O_{\max}(\rho^R,\gamma^R,\gamma^A)= d\max\left\{t_1^A:(\vec{r}^R,\vec{g}^R)\succ (\tilde{\vec{t}}^A(t_1^A)^A,\vec{g}^A)\right\}.
	\end{align} 
	Due to the discussion around Eq.~\eqref{eq:conditionsTempChange}, we then have as a necessary condition (from $k=d$) that 
	\begin{align}
		O_{\max}(\rho^R,\gamma^R,\gamma^A)=d\tilde{t}_1^A\le\alpha_{dg_1^A}(\vec{r}^R,\vec{g}^R).
	\end{align} 
	Moreover, it holds that $O_{\max}(\rho^R,\gamma^R,\gamma^A)=\alpha_{g_1^A}(\vec{r}^R,\vec{g}^R)$, i.e., as large as possible according to this condition: If it holds, all conditions emerging from $k\ne d$ are satisfied too. This is apparent from geometrical considerations: the lower boundary of the testing region defined by Eq.~\eqref{eq:overlapMax} is fully specified by the three points $(0,0), (d\tilde{t}_1^A,dg_1^A),(1,1)$. Since the lower boundary of the testing region of $(\vec{r}^R,\vec{g}^R)$ is convex, having the "middle point" $(d\tilde{t}_1^A,dg_1^A)$ inside is enough to ensure that all conditions are satisfied (see Fig.~\ref{fig:altCooling}). 
	
	\begin{figure}[tb!]
		\centering
		\includegraphics[width=0.5\linewidth]{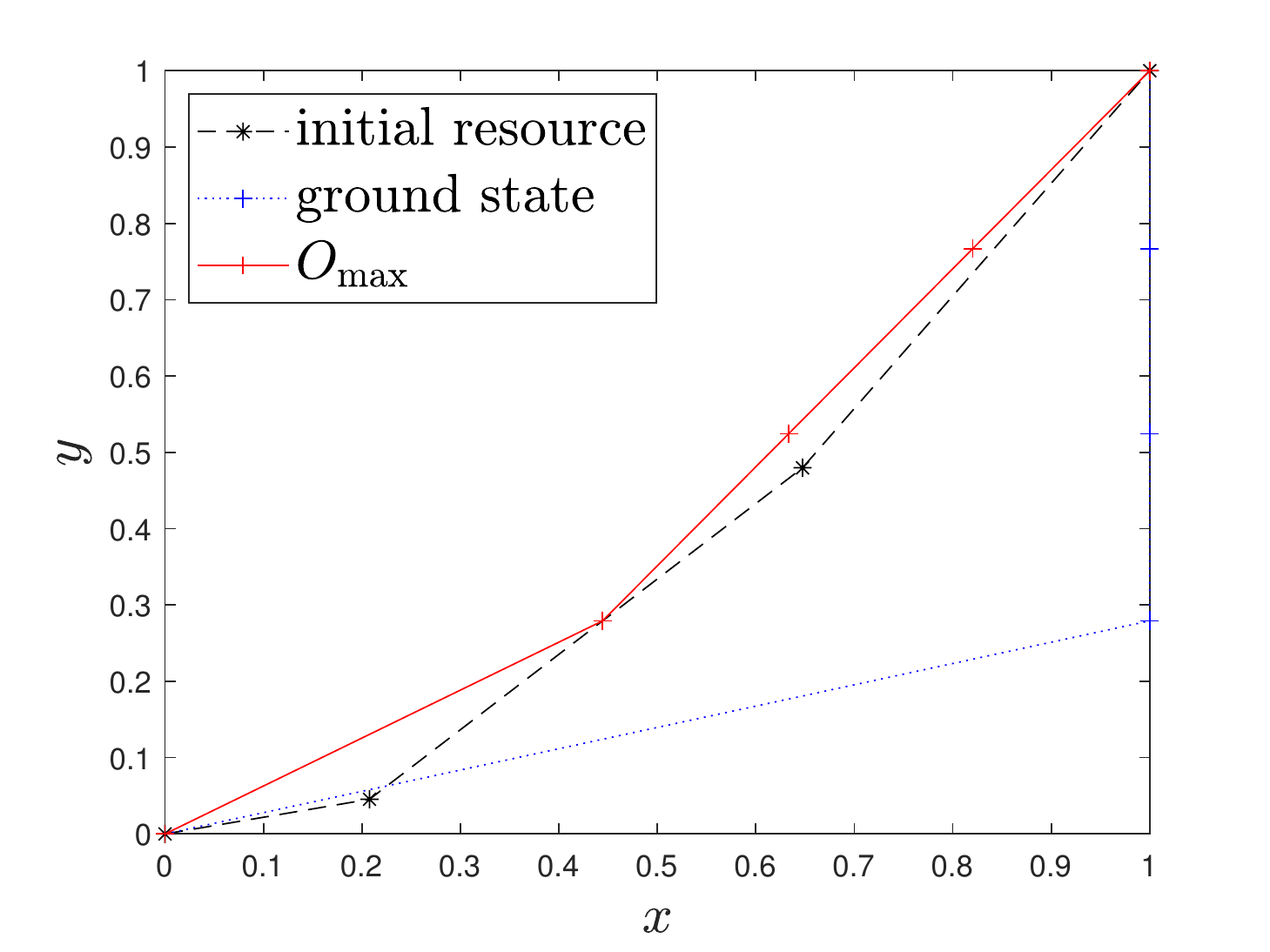}
		\caption{Lower boundaries of testing regions. The initial resource (black, dashed) of dimension three is used to cool a target system of dimension four (with a non-degenerate ground state). For the target system, the blue dotted line corresponds to the ground state, which would result in a perfect overlap. The ground state is not reachable, which can be seen by the fact that its lower boundary is below the lower boundary of the initial resource. The red solid line corresponds to a state with maximal achievable overlap, namely the one we used in the proof of Prop.~\ref{prop:altCooling}. }
		\label{fig:altCooling}
	\end{figure}

\end{proof}

We claimed in the main text that for qubits, the two interpretations of cooling coincide. As in the above proof, due to $\bra{1}\tau^A\ket{1}=\bra{1}\mathcal{P}_{\gamma^A}(\tau^A)\ket{1}$, where $\ket{1}$ denotes the in the qubit case non-degenerate ground state, we can restrict ourselves to quasi-classical target states when using the second interpretation. Since the ground state population and therefore the overlap with the ground state is monotonically increasing with decreasing temperature, we arrive at our claim: for qubits, maximizing the ground state overlap is equivalent to minimizing the temperature. The same argument shows that for qubits, also the two interpretations of heating coincide. For higher dimensional systems, they differ in general, as can be seen from a comparison of Thm.~\ref{thm:CoolingHeating} in the main text and Prop.~\ref{prop:altCooling}.

Moving on to the families of resource monotones defined in the main text, we discuss their properties, beginning with monotonicity  under $\CTO$.
If $(\rho^R,\gamma^R) \xrightarrow{\CTO} (\sigma^S,\gamma^S)$ we can use $(\rho^R,\gamma^R)$ to reach every $\tilde{\beta}$ that we can reach with $(\sigma^S,\gamma^S)$ by first converting $(\rho^R,\gamma^R)$ to $(\sigma^S,\gamma^S)$ and then using $(\sigma^S,\gamma^S)$ to reach $\tilde{\beta}$ (since $\CTO$ is closed under concatenation, see Lem.~\ref{lem:CTOclosedConc}). This implies monotonicity.
These monotones possess additional desirable properties that we now discuss at the example of $C_\beta^E(\rho,\gamma)$. Since $\tilde{\beta}_{\max}(\rho,\gamma;\beta,E)\ge\beta$ (we can always prepare a  Gibbs state corresponding to $\beta$ for free), $C_\beta^E(\rho,\gamma)\ge0$. Moreover, $C_\beta^E(\gamma,\gamma)=0$, since $\CTO$ is Gibbs-preserving. Due to Lem.~\ref{lem:targetQuasiClassical}, we note however that if $\mathcal{P}_{\gamma}(\rho)=\gamma$, then $C_\beta^E(\rho,\gamma)=0$, even if $\rho\ne\gamma$. On the contrary, if $\mathcal{P}_{\gamma}(\rho)\ne\gamma$, then $C_\beta^E(\rho,\gamma)>0$, because any non-free $(\vec{r},\vec{g})$ has a lower boundary of its testing region that is below the diagonal through the first quadrant, which, from a geometrical perspective, implies that we can use it to cool a qubit (see the proof of Thm.~\ref{thm:CoolingHeating} and Fig.~\ref{fig:Qubit}).
In summary, when restricted to quasi-classical states, $C_\beta^E(\rho,\gamma)$ is a faithful resource monotone, i.e., it is zero if and only if $\rho=\gamma$. For  $H_\beta^E(\rho,\gamma)$, it is straightforward to see that  analogous statements hold.

This provides us with the tools to prove Thm.~\ref{thm:main}.
\begin{thm}%\label{thm:main}
	Let $(\sigma^S,\gamma^S)$ be quasi-classical. The following  statements are equivalent
	\begin{enumerate}
		\item $(\rho^R,\gamma^R) \xrightarrow{\CTO} (\sigma^S,\gamma^S)$.
		\item For any fixed $\beta>0$ and for all $E \in(0,\infty)$, it holds that
		\begin{align} %\label{eq:thmConv1}
			&C_\beta^E(\rho^R,\gamma^R)\ge C_\beta^E(\sigma^S,\gamma^S),
			\nonumber \\
			&H_\beta^E(\rho^R,\gamma^R)\ge H_\beta^E(\sigma^S,\gamma^S).
		\end{align}
	\end{enumerate}
\end{thm}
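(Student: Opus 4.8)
The plan is to prove the two implications separately; the direction $1\Rightarrow 2$ is the easy one. If $(\rho^R,\gamma^R)\xrightarrow{\CTO}(\sigma^S,\gamma^S)$, then because $\CTO$ is closed under concatenation (Lem.~\ref{lem:CTOclosedConc}) any inverse temperature reachable by cooling or heating a qubit from $(\sigma^S,\gamma^S)$ is also reachable from $(\rho^R,\gamma^R)$ — first run the conversion, then the qubit protocol. Hence $\tilde{\beta}_{\max}(\rho^R,\gamma^R;\beta,E)\ge\tilde{\beta}_{\max}(\sigma^S,\gamma^S;\beta,E)$ and $\tilde{\beta}_{\min}(\rho^R,\gamma^R;\beta,E)\le\tilde{\beta}_{\min}(\sigma^S,\gamma^S;\beta,E)$ for every $E$, which is exactly \eqref{eq:thmConv1}. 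All the content is in $2\Rightarrow 1$.

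For $2\Rightarrow 1$ the first move is to reduce to probability vectors. Since $(\sigma^S,\gamma^S)$ is quasi-classical, Lem.~\ref{lem:targetQuasiClassical} gives $(\rho^R,\gamma^R)\xrightarrow{\CTO}(\sigma^S,\gamma^S)\Leftrightarrow(\mathcal{P}_{\gamma^R}(\rho^R),\gamma^R)\xrightarrow{\CTO}(\sigma^S,\gamma^S)$; moreover, by the remark after Thm.~\ref{thm:CoolingHeating}, $C_\beta^E$ and $H_\beta^E$ depend on the resource only through $\mathcal{P}_{\gamma^R}(\rho^R)$, so statement~2 is unaffected by this replacement. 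Both states are now quasi-classical, and between quasi-classical states $\CTO$ has the conversion power of $\GPO$~\cite{Janzing2000}; as used throughout this appendix, $(\mathcal{P}_{\gamma^R}(\rho^R),\gamma^R)\xrightarrow{\CTO}(\sigma^S,\gamma^S)$ is therefore equivalent to the relative majorization $(\vec{r}^R,\vec{g}^R)\succ(\vec{s}^S,\vec{g}^S)$ of the associated probability vectors, i.e.\ to $\alpha_y(\vec{r}^R,\vec{g}^R)\ge\alpha_y(\vec{s}^S,\vec{g}^S)$ for all $y\in[0,1]$. So it suffices to derive this family of inequalities from statement~2.

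This is the heart of the argument. By the qubit formulas of Thm.~\ref{thm:CoolingHeating}, for a target qubit with gap $E$ at background inverse temperature $\beta$ the Gibbs populations are $g_1^A=1/(1+e^{-\beta E})\in(1/2,1)$ and $g_2^A=e^{-\beta E}/(1+e^{-\beta E})\in(0,1/2)$, and $C_\beta^E(\rho^R,\gamma^R)=\frac{1}{E}\ln\big(\alpha_{g_1^A}(\vec{r}^R,\vec{g}^R)/(1-\alpha_{g_1^A}(\vec{r}^R,\vec{g}^R))\big)-\beta$, while $H_\beta^E(\rho^R,\gamma^R)=\beta-\frac{1}{E}\ln\big((1-\alpha_{g_2^A}(\vec{r}^R,\vec{g}^R))/\alpha_{g_2^A}(\vec{r}^R,\vec{g}^R)\big)$. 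Since $t\mapsto\frac{1}{E}\ln\frac{t}{1-t}$ is strictly increasing on $(0,1)$, $C_\beta^E(\cdot,\gamma^R)$ and $H_\beta^E(\cdot,\gamma^R)$ are strictly increasing functions of $\alpha_{g_1^A}(\vec{r}^R,\vec{g}^R)$ and $\alpha_{g_2^A}(\vec{r}^R,\vec{g}^R)$, respectively; hence the cooling inequality in statement~2 at energy $E$ is equivalent to $\alpha_{g_1^A}(\vec{r}^R,\vec{g}^R)\ge\alpha_{g_1^A}(\vec{s}^S,\vec{g}^S)$ and the heating one to $\alpha_{g_2^A}(\vec{r}^R,\vec{g}^R)\ge\alpha_{g_2^A}(\vec{s}^S,\vec{g}^S)$. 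As $E$ runs over $(0,\infty)$, $E\mapsto g_1^A$ is a continuous bijection onto $(1/2,1)$ and $E\mapsto g_2^A$ a continuous bijection onto $(0,1/2)$, so statement~2 yields $\alpha_y(\vec{r}^R,\vec{g}^R)\ge\alpha_y(\vec{s}^S,\vec{g}^S)$ for every $y\in(0,1)\setminus\{1/2\}$. Because $y\mapsto\alpha_y$ is continuous (the lower boundary of a testing region is piecewise linear) and $\alpha_0=0$, $\alpha_1=1$ for any probability vector, the inequality extends to all of $[0,1]$. This is the required relative majorization, so $(\mathcal{P}_{\gamma^R}(\rho^R),\gamma^R)\xrightarrow{\CTO}(\sigma^S,\gamma^S)$, and a final application of Lem.~\ref{lem:targetQuasiClassical} gives $(\rho^R,\gamma^R)\xrightarrow{\CTO}(\sigma^S,\gamma^S)$.

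The step I expect to be the main (though modest) obstacle is precisely this sweeping argument: one must verify that the two one-parameter families of qubit cooling/heating tasks jointly probe the entire lower boundary of the resource's testing region. This hinges on the elementary observation that, over all gaps $E>0$, the ground- and excited-state populations of a qubit Gibbs state together cover $(0,1)$, together with a short continuity argument to recover the midpoint $y=1/2$ and the endpoints $y=0,1$. Everything else — inverting the closed-form expressions for $C_\beta^E$ and $H_\beta^E$, and the passage to probability vectors — is routine given Thm.~\ref{thm:CoolingHeating}, Lem.~\ref{lem:targetQuasiClassical}, and the $\CTO$–$\GPO$ equivalence on quasi-classical states.
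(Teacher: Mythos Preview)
Your proof is correct and follows essentially the same route as the paper's: both directions use the monotonicity/concatenation argument for $1\Rightarrow 2$, then for $2\Rightarrow 1$ reduce to quasi-classical vectors via Lem.~\ref{lem:targetQuasiClassical}, invert the closed-form qubit expressions from Thm.~\ref{thm:CoolingHeating} to turn the cooling and heating inequalities into $\alpha_{g_1^A}(\vec r^R,\vec g^R)\ge\alpha_{g_1^A}(\vec s^S,\vec g^S)$ and $\alpha_{g_2^A}(\vec r^R,\vec g^R)\ge\alpha_{g_2^A}(\vec s^S,\vec g^S)$, and sweep $E$ so that $g_1^A$ and $g_2^A$ together cover $(0,1)$. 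If anything, you are slightly more careful than the paper in explicitly invoking continuity of $y\mapsto\alpha_y$ to recover the point $y=1/2$ and the endpoints, which the paper leaves implicit.
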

\begin{proof}
	First, we remember that according to Lem.~\ref{lem:targetQuasiClassical}, we can without loss of generality assume that $(\rho^R,\gamma^R)$ is quasi-classical too, since both $(\sigma^S,\gamma^S)$ and the Gibbs qubit states that we are considering are quasi-classical. As before, let thus $\vec{r}^R, \vec{g}^R, \vec{s}^S,\vec{g}^S$ denote the probability vectors corresponding to $\mathcal{P}_{\gamma^R}(\rho^R), \gamma^R,\sigma^S,\gamma^S$, respectively, and we can reduce our analysis to relative majorization.
	
	We first note that 2 follows from 1 due to the monotonicity of $C_\beta^E$ and $H_\beta^E$, see above.
	
	Let us next assume that 2 holds and note that according to the definitions in Eq.~\eqref{eq:Monotones} in the main text, this is equivalent to
	\begin{align}
		\tilde{\beta}_{\max}(\rho^R,\gamma^R;\beta,E)\ge& 	\tilde{\beta}_{\max}(\sigma^S,\gamma^S;\beta,E) \\
		\tilde{\beta}_{\min}(\rho^R,\gamma^R;\beta,E)\le& 	\tilde{\beta}_{\min}(\sigma^S,\gamma^S;\beta,E) 
	\end{align}	
	for the (by assumption) fixed $\beta$ and all $E \in(0,\infty)$.
	According to Eq.~\eqref{eq:elbowsCooling}, when cooling a qubit, the non-trivial elbow of the lower boundary of its testing region is given by 
	\begin{align}\label{eq:elbowQubitCooling}
		(\tilde{g}_1^A,g_1^A).
	\end{align}
	We first consider a fixed energy gap $E$, i.e., a fixed $g_1^A=g_1^A(\beta,E)$, and remember that how far we can cool the qubit using $(\rho^R,\gamma^R)$ is determined by the largest $\tilde{g}_1^A$ such that the elbow in Eq.~\eqref{eq:elbowQubitCooling} is still inside the testing region corresponding to $(\vec{r}^R,\vec{g}^R)$: 
	Using the argument that lead to Eq.~\eqref{eq:SMbetaMax}, we have that
	\begin{align}
		&\alpha_{ g_1^A(\beta,E)}(\vec{r}^R,\vec{g}^R)=\tilde{g}_1^A(\tilde{\beta}_{\max}(\rho^R,\gamma^R;\beta,E),E).
	\end{align}
	If for a given $E$, with $(\rho^R,\gamma^R)$ we can cool the qubit more than with $(\sigma^S,\gamma^S)$, we thus find
	\begin{align}\label{eq:RelCooling}
		&\alpha_{ g_1^A}(\vec{r}^R,\vec{g}^R)=\tilde{g}_1^A(\tilde{\beta}_{\max}(\rho^R,\gamma^R;\beta,E),E) 
		\ge\tilde{g}_1^A(\tilde{\beta}_{\max}(\sigma^S,\gamma^S;\beta,E),E)=\alpha_{ g_1^A}(\vec{s}^S,\vec{g}^S),
	\end{align}
	since $\tilde{g}_1$ increases if $\tilde{\beta}$ increases (we suppressed the dependence of $g_1^A$ on $\beta$ and $E$ for readability).
	We now note that by varying the energy gap $E$ between zero and infinity, we vary $g_1^A(\beta,E)$ between $1$ and $1/2$. 
	
	Next, we consider heating: According to Eq.~\eqref{eq:elbowsHeating}, the non-trivial elbow is given by 
	\begin{align}\label{eq:elbowQubitHeating}
		(\tilde{g}_2^A,g_2^A).
	\end{align}
	With fixed $E$, and analogous arguments, we then find that 
	\begin{align}\label{eq:RelHeating}
		&\alpha_{ g_2^A}(\vec{r}^R,\vec{g}^R)=\tilde{g}_2^A(\tilde{\beta}_{\min}(\rho^R,\gamma^R;\beta,E),E) 
		\ge\tilde{g}_2^A(\tilde{\beta}_{\min}(\sigma^S,\gamma^S;\beta,E),E)=\alpha_{ g_2^A}(\vec{s}^S,\vec{g}^S),
	\end{align}
	where we remember that we allowed for population inversions.
	This time, by varying $E$ between zero and infinity, we vary $g_2^A(\beta,E)$ between $0$ and $1/2$. 
	
	Combining the heating and cooling cases, we have shown that 
	\begin{align}
		&\alpha_{y}(\vec{r}^R,\vec{g}^R)\ge\alpha_{ y}(\vec{s}^S,\vec{g}^S)
	\end{align}
	for $y$ between $0$ and $1$. This implies relative majorization and therefore convertibility under $\CTO$, i.e., we have shown that 1 follows from 2.
	
\end{proof}
At this point, we want to remark that in Thm.~\ref{thm:main}, we could have added the third equivalent condition
\begin{enumerate}
	\setcounter{enumi}{2}
	\item \ Eqs.~\eqref{eq:thmConv1} hold for any fixed $E>0$ and for all $\beta \in(0,\infty)$.	
\end{enumerate} 
As in the previous proof, 3 follows from 1 due to the monotonicity of $C_\beta^E$ and $H_\beta^E$, and showing that 1 follows from 3 can be done completely analogously to the proof we used to show that 1 follows from 2: by keeping $E$ fixed and varying $\beta$ between zero and infinity, we vary $g_1^A(\beta,E)$ between 1 and 1/2 and $g_2^A(\beta,E)$ between $0$ and $1/2$ too. Whilst this statement is mathematically true, its physical interpretation is less clear: Changing the background temperature also affects the Gibbs states of any  systems $R$ and $S$ with fixed Hamiltonians.

We also note that for fixed $(\rho^R,\gamma^R), \ (\sigma^S,\gamma^S)$, in condition 2, it would be sufficient to demand that Eqs.~\eqref{eq:thmConv1} hold for a finite set of energy gaps $E$ if there exists no $k\in[|S|-1]$ such that $y_k^\star(\vec{s}^S,\vec{g}^S)=\frac{1}{2}$: Let $(\sigma^S,\gamma^S)$ be quasi-classical, $\beta>0$ fixed, 
\begin{align}
	\mathcal{S}_c:=&\left\{k\in[|S|-1]: y_k^\star(\vec{s}^S,\vec{g}^S)>\frac{1}{2}\right\} ,\nonumber \\
	\mathcal{S}_h:=&\left\{k\in[|S|-1]: y_k^\star(\vec{s}^S,\vec{g}^S)<\frac{1}{2}\right\}
\end{align}
and define the $|S|-1$ energies 
\begin{align}\label{eq:Ek}
	E_k=\begin{cases}
		&\frac{1}{\beta}\ln \frac{y_k^\star(\vec{s}^S,\vec{g}^S)}{1-y_k^\star(\vec{s}^S,\vec{g}^S)}  \text{ if } k\in \mathcal{S}_c,\\
		&\frac{1}{\beta} \ln \frac{1-y_k^\star(\vec{s}^S,\vec{g}^S)}{y_k^\star(\vec{s}^S,\vec{g}^S)}  \text{ if } k\in \mathcal{S}_h.
	\end{cases}
\end{align}	
By definition, $E_k>0$. We will now show that $(\rho^R,\gamma^R) \xrightarrow{\CTO} (\sigma^S,\gamma^S)$ iff
\begin{align} \label{eq:DiscMain}
	&C_\beta^{E_k}(\rho^R,\gamma^R)\ge C_\beta^{E_k}(\sigma^S,\gamma^S)\text{ for }k\in \mathcal{S}_c, \nonumber \\
	&H_\beta^{E_k}(\rho^R,\gamma^R)\ge H_\beta^{E_k}(\sigma^S,\gamma^S)\text{ for }k\in \mathcal{S}_h.
\end{align}
The forward direction follows directly from Thm.~\ref{thm:main}. For the reverse, assume that Eqs.~\eqref{eq:DiscMain} hold. According to Eq.~\eqref{eq:RelCooling}, this implies that for $k\in \mathcal{S}_c$,
\begin{align}
	&\alpha_{y_k^\star(\vec{s}^S,\vec{g}^S)}(\vec{r}^R,\vec{g}^R) =\alpha_{g_1^A(\beta,E_k)}(\vec{r}^R,\vec{g}^R)  
	\ge\alpha_{ g_1^A(\beta,E_k)}(\vec{s}^S,\vec{g}^S)=\alpha_{y_k^\star(\vec{s}^S,\vec{g}^S)}(\vec{s}^S,\vec{g}^S)=x_k^\star(\vec{s}^S,\vec{g}^S).
\end{align}
Moreover, according to Eq.~\eqref{eq:RelHeating}, it implies that 
\begin{align}
	\alpha_{y_k^\star(\vec{s}^S,\vec{g}^S)}(\vec{r}^R,\vec{g}^R) =\alpha_{g_2^A(\beta,E_k)}(\vec{r}^R,\vec{g}^R)  
	\ge&\alpha_{ g_2^A(\beta,E_k)}(\vec{s}^S,\vec{g}^S)=\alpha_{y_k^\star(\vec{s}^S,\vec{g}^S)}(\vec{s}^S,\vec{g}^S)=x_k^\star(\vec{s}^S,\vec{g}^S)
\end{align}
for $k\in \mathcal{S}_h$. In summary, we have thus shown that 
\begin{align}
	\alpha_{y_k^\star(\vec{s}^S,\vec{g}^S)}(\vec{r}^R,\vec{g}^R) 
	\ge x_k^\star(\vec{s}^S,\vec{g}^S)\ \forall k \in |S|-1.
\end{align}
According to Eq.~\eqref{eq:conditionsTempChange} (and noting that for $k=|S|$ the above condition is trivially satisfied) this implies relative majorization and therefore convertibility under $\CTO$, which finishes our proof. If there exists a $k\in[|S|-1]$ such that $y_k^\star(\vec{s}^S,\vec{g}^S)=\frac{1}{2}$, we can apply the above criteria to check convertibility to states arbitrarily close to the target state. Since CTO allows for an arbitrarily small error anyway, this is sufficient. 

To check convertibility to a single quasi-classical target state, using $|S|-1$ monotones is thus sufficient. However, which monotones we must choose depends on the specific target state (via the energies $E_k$ given in Eq.~\eqref{eq:Ek}). For different target states, we thus need different monotones and it is easy to see that a complete set of monotones that allows to check convertibility to an arbitrary quasi-classical target state necessarily includes all $E\in(0,\infty)$. This is not surprising, since it has been shown recently that at least in the limit of infinite temperature, there cannot exist a finite complete set of monotones~\cite{Datta2022}.

We now turn to the proof of Prop.~\ref{prop:gap} (in the main text), in parallel establishing some results that we also need for the proof of Thm.~\ref{thm:alternative}. Since we still investigate the heating and cooling of qubits, i.e., are concerned with quasi-classical target states, due to Lem.~\ref{lem:targetQuasiClassical}, we can again assume without loss of generality that $(\rho^R,\gamma^R)$, i.e., the initial resource, is quasi-classical too and reduce our analysis to relative majorization. To this end, let $\vec{r}^R,\vec{g}^R$ be the probability vectors corresponding to $\mathcal{P}(\rho^R),\gamma^R$ respectively.

Now let $A$ be a qubit with energy gap $E$. Its Gibbs state corresponding to the inverse background temperature $\beta$ is denoted by $\gamma^A$ and corresponds to the probability vector 
\begin{align}
	\vec{g}^A = 
	\begin{pmatrix}
		g^A_1\\1-g^A_1
	\end{pmatrix},\ g^A_1 = \frac{1}{1 + e^{-\beta E}}.
\end{align} 
Cooling or heating the qubit $A$ means that we want to create the athermality state $(\tilde{\gamma}^A,\gamma^A)$, where $\tilde{\gamma}$ is the Gibbs state corresponding to the target inverse temperature $\tilde{\beta}$, i.e.,
\begin{align}
	\tilde{\vec{g}}^A = 
	\begin{pmatrix}
		\tilde{g}_1^A\\1-\tilde{g}_1^A
	\end{pmatrix},\ \tilde{g}^A_1 = \frac{1}{1 + e^{-\tilde{\beta}E}}.
\end{align}

According to Eq.~\eqref{eq:elbowsCooling}, when cooling the qubit ($\tilde{\beta}>\beta$), the non-trivial elbow of the lower boundary of its testing region is given by 
\begin{align}
	(\tilde{g}_1^A,g_1^A)
\end{align}
and when heating (see Eq.~\eqref{eq:elbowsHeating}, $\tilde{\beta}<\beta$) by
\begin{align}
	(\tilde{g}_2^A,g_2^A).
\end{align}

Remembering that the background temperature is assumed to be finite and non-negative, i.e., $\beta>0$, which is physically well motivated, we introduce $w:=w(E):=e^{-\beta E}$, where $0<w\le1$, since without loss of generality, we only consider non-negative energy gaps. With $a=\tilde{\beta}/\beta$, this allows us to rewrite the elbows as
\begin{align}
	(\tilde{g}_1^A,g_1^A)=&\left(\frac{1}{1+e^{-\tilde{\beta}E\frac{\beta}{\beta}}},\frac{1}{1+e^{-\beta E}}\right) 
	=\left(\frac{1}{1+w^{a}},\frac{1}{1+w}\right), \nonumber \\
	(\tilde{g}_2^A,g_2^A)=&\left(\frac{w^a}{1+w^{a}},\frac{w}{1+w}\right).
\end{align}
In summary, the elbows of $(\tilde{\gamma}^A, \gamma^A)$ lie on the curve
\begin{align}
	F_a(w) = \begin{cases}
		\left(\frac{1}{1+w^a}, \frac{1}{1+w}\right) & \text{for }a > 1,\\
		\left(\frac{w^a}{1+w^a}, \frac{w}{1+w}\right) & \text{for }a < 1.\\
	\end{cases}
\end{align}
Note that the case $a=1$ is trivial, since it corresponds to leaving the qubit in its initial equilibrium state, which is of course always possible, independent of $E$.
\begin{figure}
	\centering
	\includegraphics[width=.5\linewidth]{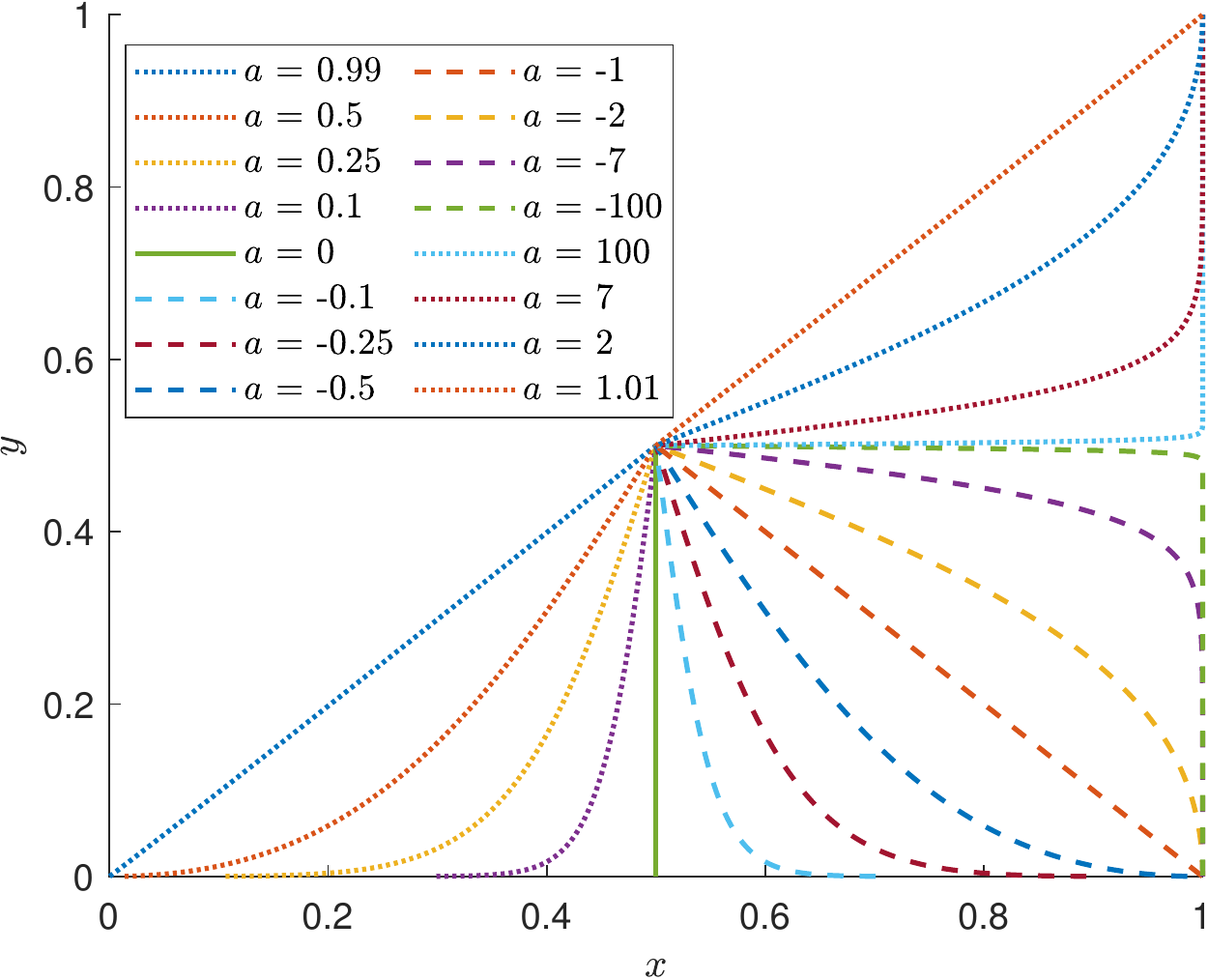}
	\caption{
		The curves $F_a(w)$ for various choices of $a\ne1$. Independent of $a$, $F_a(1)=(\frac{1}{2},\frac{1}{2})$. 
		Curves corresponding to $a>0$ and therefore finite positive target temperatures are dotted. The curves on the top right correspond to cooling and for $a \to 1^+$, (vanishing cooling) we approach a straight line from $(\frac{1}{2},\frac{1}{2})$ to $(0,1)$. The curves in the lower left correspond to heating, and for $a \to 1^-$ (vanishing heating), we approach a straight line from $(\frac{1}{2},\frac{1}{2})$ to $(0,0)$. The special case $a=0$ drawn as a solid straight line corresponds to heating to infinite temperature. Negative $a$ corresponding to population inversions in the target state are dashed. 
		For $a\to\pm\infty$, we approach vertical lines at $x=1$. 
	}\label{fig:Elbows}
\end{figure}

If $E=0$, then $w=1$, and $F_a(1) = \left(\frac{1}{2},\frac{1}{2}\right)$. If instead $E\to\infty$, then $w\to0^+$, and
\begin{align}
	\lim_{w\to0^+}F_a(w) = \begin{cases}
		(1,1) & \text{for }a >1,\\
		(0,0) & \text{for }0<a <1,\\
		\left(\frac{1}{2},0\right) & \text{for }a=0,\\
		(1,0) & \text{for }a <0.
	\end{cases}
\end{align} 
For all fixed $a\ne1$, the continuous curves describing the position of the elbows in terms of $E$ thus start at $\left(\frac{1}{2},\frac{1}{2}\right)$ (corresponding to $E =0$) and go towards the boundary of the square with corners $(0,0),(0,1),(1,1),(1,0)$ for $E \to \infty$ (see Fig.~\ref{fig:Elbows}).

From Fig.~\ref{fig:Elbows}, it is plausible that for $a > 0$, $a \neq 1$ there exist (quasi-classical) athermality states $(\rho^R, \gamma^R)$ with lower boundaries of their associated testing regions that cross the curve $F_a(w)$ in multiple points. 
In the following, we provide an explicit construction of such states. Since $F_a(w)$ determines the position of the qubit's elbow in terms of $E$ and $\tilde{\beta}$, this will imply Prop.~\ref{prop:gap} (in the main text) due to the connection of (quasi-classical) state transformations and relative majorization.

First, we consider the case $0<a<1$. Let $f_a(x)$ be the affine function that passes through  $(1,1)$ and is tangent to $F_{a}(w)$. As seen in Fig.~\ref{fig:Elbows}, this function always exists.
Denoting by $(x_0, f_a(x_0))$ the tangent point, clearly $0 < x_0, f_a(x_0) < \frac{1}{2}$ and $f_a(0) < 0$.
Let now $\tilde{f}_a(x)$ be the affine function passing through $(1,1)$ and $(0, f_a(0)/2)$. By construction, this function crosses $F_a(w)$ in two points, and we denote the corresponding $x$-values by $x_2<x_3$. Furthermore, $\tilde{f}_a(x)$ crosses the $x$-axis at $x_4<x_2$. With $x_1=(x_2-x_4)/2$, the qubit athermality state with corresponding elbow $(x_1, \tilde{f}_a(x_1))$ satisfies our claims. Moreover, it can be shown that for this elbow, there is a third crossing for an $x<x_1$, see Fig.~\ref{fig:ExampleGap} for the case $a=1/2$. As one might see again from Fig.~\ref{fig:Elbows}, for $a\to1^-$, this is not apparent. Therefore, more formally, by setting $x=\frac{w^a}{1+w^a}$, we can give the $y$ values of the curve $F_a(w)$ as a function of the $x$ values,
\begin{align}
	\tilde{F}_a(x)=\frac{x^{1/a}}{(1-x)^{1/a}+x^{1/a}}.
\end{align}
Then 
\begin{align}
	\tilde{F}'_a(x)=&\frac{1}{a}\frac{1}{(1-x)^{1/a}+x^{1/a}} \left(x^{1/a-1}-x^{1/a}\frac{x^{1/a-1}-(1-x)^{1/a-1}}{(1-x)^{1/a}+x^{1/a}}\right)
\end{align}
and (remember $0<a<1$)
\begin{align}
	\lim_{x\to0^+}\tilde{F}'_a(x)=0,
\end{align}
from which follows the claim. Notice that this also implies that for $E\to\infty$, heating to any non-trivial target temperature becomes impossible.

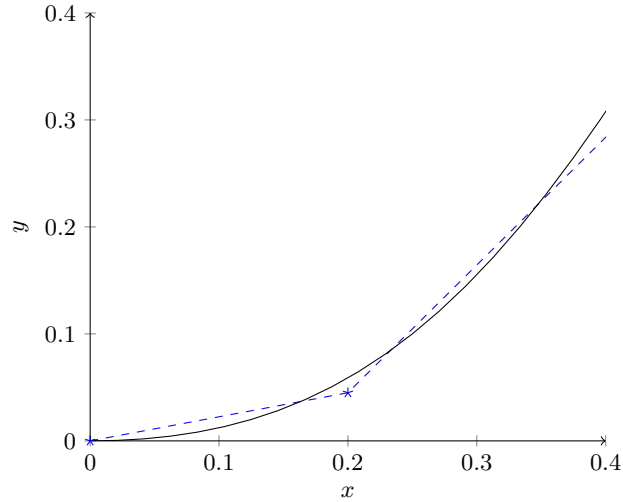
\begin{figure}
	\centering
	\begin{tikzpicture}
		\begin{axis}[xmin=0,xmax=0.4,ymin=0,ymax=0.4, axis x 		line=bottom,axis y line=left,
			axis line style={->},
			ylabel near ticks,
			xlabel near ticks,
			xlabel={$x$ },
			ylabel={$y$ }]
			\addplot+[sharp plot, blue, dashed, mark = star] coordinates {(0,0)(0.2,0.045) (1,1)};
			\addplot[domain=0:1/2] {x^2/((1-x)^2+x^2)};
		\end{axis}
	\end{tikzpicture}
	\caption{The lower boundary of the curve associated with a qubit athermality state $(\rho^R, \gamma^R)$ (blue, dashed) which intersects $F_{1/2}(w)$ (black, solid) three times.  }\label{fig:ExampleGap}
\end{figure}

For higher dimensional systems $R$, the same construction is possible (e.g., by an effective reduction to one elbow such as shown in Fig.~\ref{fig:altCooling}). It is however also possible to construct additional intersections: Starting as in the qubit case, one determines $(x_1, \tilde{f}_a(x_1))$ and then considers the affine function that goes through that point and is tangent to $F_a(w)$ at a point $x<x_1$. Analogously, one can then construct another elbow leading to additional intersections. From the maximal number of elbows being $|R|-1$ follows the maximal possible number of intersections. 

We now turn to the case of $a>1$. Apparently, an analogous construction is possible, now starting with the affine function that passes through $(0,0)$ and is tangent to $F_a(w)$. In summary, this proves Prop.~\eqref{prop:gap} in the main text.

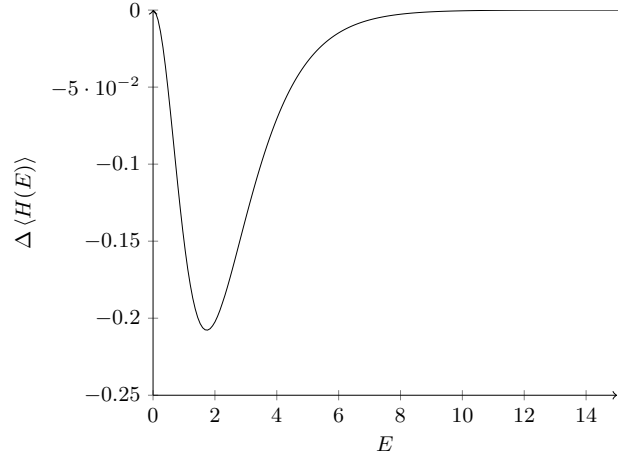
\begin{figure}
	\centering
	\begin{tikzpicture}[scale=0.9]
		\begin{axis}[xmin=0,xmax=15
			,ymin=-0.25,ymax=0.0, axis x line=bottom,axis y line=left,
			axis line style={->},
			ylabel near ticks,
			xlabel near ticks,
			xlabel={$E$ },
			ylabel={$\Delta\braket{H(E)}$ }]
			\addplot[domain=0:40, samples=500] {(e^(-2*x)/(1+e^(-2*x))-e^(-x)/(1+e^(-x)))*x};
		\end{axis}
	\end{tikzpicture}
	\caption{The change of the expectation value of the energy $\Delta\braket{H(E)}$ of a a qubit when cooling it from $\beta=1$ to $\tilde{\beta}=2$ against its energy gap $E$.}\label{fig:EnergyChange}
\end{figure}
As mentioned in the main text, Prop.~\ref{prop:gap} seems to be surprising at first.  However, the change of the expectation value of the energy of a qubit with energy gap $E$ when cooling/heating it from $\beta$ to $\tilde{\beta}$ is given by 
\begin{align}
	\Delta\braket{H(E)}=\left(\frac{e^{-\tilde{\beta}E}}{1+e^{-\tilde{\beta}E}}-\frac{e^{-\beta E}}{1+e^{-\beta E}}\right)E,
\end{align}
which, as shown in Fig.~\ref{fig:EnergyChange}, is not a monotonic function of $E$. This provides an intuition for Prop.~\eqref{prop:gap} (in the main text): If the initial resource cannot provide/absorb enough energy, heating/cooling to a given $\tilde{\beta}$ is impossible. However, it is well known that in the quantum regime, many second laws exist~\cite{Brandao2015}, i.e., (free) energy considerations alone are not sufficient to determine if a transformation is possible or not. This explains why according to Fig.~\ref{fig:EnergyChange}, there are no energy restrictions if $E\to\infty$, but as discussed earlier, for $E\to\infty$, heating becomes impossible. 

We conclude by proving Thm.~\ref{thm:alternative}.
\setcounter{theorems}{5}
\begin{thm}%\label{thm:alternative}
	Let $(\sigma^S, \gamma^S)$ be quasi-classical. 
	Then $(\rho^R, \gamma^R) \xrightarrow{\CTO} (\sigma^S, \gamma^S)$ if and only if 
	\begin{align}
		\mathfrak{E}_{\beta}{(\rho^R, \gamma^R;\tilde{\beta})} \supseteq \mathfrak{E}_{\beta}{(\sigma^S, \gamma^S;\tilde{\beta})}
	\end{align}
	for any fixed $\beta>0$ and all $\tilde{\beta}\in(-\infty,\infty)$. 
\end{thm}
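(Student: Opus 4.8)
The plan is to prove the two implications separately; the forward one is immediate from closedness under composition, and all the work is in the reverse direction.

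\textbf{Forward direction.} Suppose $(\rho^R,\gamma^R)\xrightarrow{\CTO}(\sigma^S,\gamma^S)$. If $E\in\mathfrak{E}_\beta(\sigma^S,\gamma^S;\tilde{\beta})$, then by definition $(\sigma^S,\gamma^S)\xrightarrow{\CTO}(\tilde{\gamma}^A,\gamma^A)$ for the qubit $A$ with gap $E$ and target inverse temperature $\tilde{\beta}$; composing the two conversions and using that $\CTO$ is closed under concatenation (Lem.~\ref{lem:CTOclosedConc}) gives $(\rho^R,\gamma^R)\xrightarrow{\CTO}(\tilde{\gamma}^A,\gamma^A)$, i.e.\ $E\in\mathfrak{E}_\beta(\rho^R,\gamma^R;\tilde{\beta})$. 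Hence the inclusion holds for every $\tilde{\beta}$.

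\textbf{Reverse direction.} First I would reduce to probability vectors: since $(\sigma^S,\gamma^S)$ and every qubit Gibbs state are quasi-classical, Lem.~\ref{lem:targetQuasiClassical} lets me replace $(\rho^R,\gamma^R)$ by $(\mathcal{P}_{\gamma^R}(\rho^R),\gamma^R)$ both in the conversion to be proved and in each set $\mathfrak{E}_\beta$, so without loss of generality the initial state is quasi-classical and I may work with the vectors $\vec{r}^R,\vec{g}^R$ and $\vec{s}^S,\vec{g}^S$. By Eq.~\eqref{eq:conditionsTempChange} it then suffices to show $\alpha_{y_k^\star(\vec{s}^S,\vec{g}^S)}(\vec{r}^R,\vec{g}^R)\ge x_k^\star(\vec{s}^S,\vec{g}^S)$ for every $k\in[|S|-1]$ (the case $k=|S|$ being trivial). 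Fix such a $k$, abbreviate $(x_k,y_k):=(x_k^\star(\vec{s}^S,\vec{g}^S),y_k^\star(\vec{s}^S,\vec{g}^S))$, and note $x_k\ge y_k$ because the lower boundary of a testing region lies above the diagonal. The idea is to probe the boundary of $(\vec{r}^R,\vec{g}^R)$ at height $y_k$ with a single qubit tuned so that $(\sigma^S,\gamma^S)$ exactly realizes it. If $y_k>1/2$, take the qubit with gap $E_k=\frac{1}{\beta}\ln\frac{y_k}{1-y_k}$, so that $g_1^A=y_k$, and the target inverse temperature $\tilde{\beta}_k$ with $\tilde{g}_1^A=x_k$ (possible since $y_k\le x_k<1$); by Eq.~\eqref{eq:elbowsCooling} the only non-trivial elbow of $(\tilde{\gamma}^A,\gamma^A)$ sits at $(x_k,y_k)$, which lies on the lower boundary of $(\vec{s}^S,\vec{g}^S)$, so $E_k\in\mathfrak{E}_\beta(\sigma^S,\gamma^S;\tilde{\beta}_k)$. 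If instead $y_k<1/2$, do the symmetric thing with heating, using Eq.~\eqref{eq:elbowsHeating}: pick $E_k$ with $g_2^A=y_k$ and $\tilde{\beta}_k$ (possibly negative, i.e.\ a population inversion) with $\tilde{g}_2^A=x_k$. Either way the hypothesis gives $E_k\in\mathfrak{E}_\beta(\rho^R,\gamma^R;\tilde{\beta}_k)$, i.e.\ $(\rho^R,\gamma^R)\xrightarrow{\CTO}(\tilde{\gamma}^A,\gamma^A)$, which via the majorization characterization means precisely $\alpha_{y_k}(\vec{r}^R,\vec{g}^R)\ge x_k$. Running over all $k$ and invoking Eq.~\eqref{eq:conditionsTempChange} yields the desired relative majorization, hence $(\rho^R,\gamma^R)\xrightarrow{\CTO}(\sigma^S,\gamma^S)$.

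\textbf{Loose ends and the main obstacle.} Two degenerate situations need separate handling: $y_k=1/2$ (realized only by $E_k=0$, which is excluded) and $x_k=1$ (where $\tilde{\beta}_k$ would be infinite). Both are dealt with by approximation: one probes targets $\epsilon$-close to $(\tilde{\gamma}^A,\gamma^A)$ and lets $\epsilon\to0$, which is harmless because $\CTO$ is already a topological closure, exactly as in the remark following the proof of Thm.~\ref{thm:main}. I expect the conceptual crux to be the observation used in the reverse direction --- that testing convertibility against qubits whose cooled or heated elbow is parked exactly at each elbow of the target's testing region is enough, so that the set-inclusion hypothesis collapses to the finitely many majorization inequalities of Eq.~\eqref{eq:conditionsTempChange}. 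Once that is recognized, the rest is bookkeeping with the explicit elbow formulas \eqref{eq:elbowsCooling}--\eqref{eq:elbowsHeating} and the convexity of the testing region, closely paralleling the proof of Thm.~\ref{thm:main}.
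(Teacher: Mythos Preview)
Your proof is correct and rests on the same key observation as the paper's: every point $(x,y)$ with $0<y<1$, $y\ne 1/2$, $y\le x<1$ can be realized as the non-trivial elbow of a cooled or heated qubit, so the set-inclusion hypothesis on $\mathfrak{E}_\beta$ probes the full lower boundary of the testing region. The forward direction and the reduction to probability vectors via Lem.~\ref{lem:targetQuasiClassical} are identical to the paper.

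The logical packaging of the reverse direction differs slightly. The paper argues by contradiction: it assumes the testing region of $(\vec s^S,\vec g^S)$ is not contained in that of $(\vec r^R,\vec g^R)$, picks a witness point $(x_0,y_0)$, realizes it as $F_{a_0}(w_0)$ on one of the qubit-elbow curves introduced before Prop.~\ref{prop:gap}, and obtains an energy gap in $\mathfrak{E}_\beta(\sigma^S,\gamma^S;\tilde\beta_0)\setminus\mathfrak{E}_\beta(\rho^R,\gamma^R;\tilde\beta_0)$. You instead argue directly, parking a qubit elbow at each elbow $(x_k^\star,y_k^\star)$ of the target and reading off the $|S|-1$ inequalities of Eq.~\eqref{eq:conditionsTempChange}; this is precisely the finitary argument the paper gives in the discussion following the proof of Thm.~\ref{thm:main}, transplanted to the present setting. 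Your route is marginally more constructive (it tells you which finitely many $(E_k,\tilde\beta_k)$ to test), while the paper's contrapositive makes the role of the curves $F_a$ more explicit; both handle the degenerate cases $y_k=1/2$ and $x_k=1$ by the same continuity/approximation device.
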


\begin{proof}
	One direction is again trivial. If 
	\begin{align}
		(\rho^R, \gamma^R) \xrightarrow{\CTO} (\sigma^S, \gamma^S)
	\end{align}
	and 
	\begin{align}
		(\sigma^S, \gamma^S) \xrightarrow{\CTO} (\tilde{\gamma}^A, \gamma^A)
	\end{align}
	then also 
	\begin{align}
		(\rho^R, \gamma^R) \xrightarrow{\CTO} (\tilde{\gamma}^A, \gamma^A),
	\end{align}
	since $\CTO$ is closed under concatenation.
	Thus $\mathfrak{E}_{\beta}{(\rho^R, \gamma^R;\tilde{\beta})} \supseteq \mathfrak{E}_{\beta}{(\sigma^S, \gamma^S;\tilde{\beta})}$. Moreover, due to Lem.~\ref{lem:targetQuasiClassical}, we can again assume without loss of generality that $(\rho^R,\gamma^R)$ is quasi-classical too, and identify $\vec{r}^R, \vec{g}^R, \vec{s}^S,\vec{g}^S$ with $\mathcal{P}_{\gamma^R}(\rho^R), \gamma^R,\sigma^S,\gamma^S$, respectively. 
	
	Assume now that $\mathfrak{E}_{\beta}{(\rho^R, \gamma^R;\tilde{\beta})} \supseteq \mathfrak{E}_{\beta}{(\sigma^S, \gamma^S;\tilde{\beta})}$ for a fixed $\beta>0$ and all $\tilde{\beta}\in(-\infty,\infty)$ and remember that 
	\begin{align}
		(\rho^R, \gamma^R) \xrightarrow{\CTO} (\sigma^S, \gamma^S)
	\end{align}
	if and only if $(\vec{r}^R,\vec{g}^R)\succ (\vec{s}^S,\vec{g}^S)$, i.e., if the testing region associated with $(\vec{r}^R,\vec{g}^R)$ contains the testing region associated with $(\vec{s}^S,\vec{g}^S)$. Now suppose that the testing region associated with $(\vec{s}^S,\vec{g}^S)$ is not contained in the testing region associated with $(\vec{r}^R,\vec{g}^R)$, i.e., that there exists at least one point $(x_0,y_0)$ outside the testing region corresponding to $(\vec{r}^R,\vec{g}^R)$ and inside the testing region corresponding to $(\vec{s}^S,\vec{g}^S)$. 
	
	Using the notation introduced previously, we then choose $a_0,w_0$ such that $(x_0,y_0)=F_{a_0}(w_0)$. For $y_0\notin\{0,1/2\}$ and $x_0\ne 1 $, such a pair always exists, see Fig.~\ref{fig:Elbows}. If $y_0\in\{0,1/2\}$ or $x_0= 1 $, by continuity of the boundaries of the testing regions, there always exists a point $(x_0',y_0')$ close to $(x_0,y_0)$ that also satisfies that it is outside of the testing region corresponding to $(\vec{r}^R,\vec{g}^R)$ and inside the testing region corresponding to $(\vec{s}^S,\vec{g}^S)$ such that $y_0'\notin\{0,1/2\}$ and $x_0'\ne 1 $ and we use that point for our argument instead.
	
	With $E_0$ such that $w_0= e^{-\beta E_0}$ and $\tilde{\beta}_0$ such that $a_0=\tilde{\beta_0}/\beta$, we then find that $E_0\notin\mathfrak{E}_{\beta}{(\rho^R, \gamma^R;\tilde{\beta}_0)}$ and $ E_0\in \mathfrak{E}_{\beta}{(\sigma^S, \gamma^S;\tilde{\beta}_0)}$. This is clearly a contradiction to the assumption  that $\mathfrak{E}_{\beta}{(\rho^R, \gamma^R;\tilde{\beta})} \supseteq \mathfrak{E}_{\beta}{(\sigma^S, \gamma^S;\tilde{\beta})}$ for all $\tilde{\beta}\in(-\infty,\infty)$.	
\end{proof}

\end{document}